\definecolor{darkred}{rgb}{0.8,0.1,0.1}
\theoremstyle{plain}
\newtheorem{theo}{Theorem}[section]
\newtheorem{lem}[theo]{Lemma}
\newtheorem{propo}[theo]{Proposition}
\newtheorem{cor}[theo]{Corollary}
\theoremstyle{definition}
\newtheorem{defi}[theo]{Definition}
\newenvironment{rem}
  {\pushQED{\qed}\remm}
  {\popQED\endremm}
\numberwithin{equation}{section}
\def\bbK{\mathbb{K}}
\def\bbR{\mathbb{R}}
\def\bbC{\mathbb{C}}
\def\bbA{\mathbb{A}}
\def\bbF{\mathbb{F}}
\def\id{\mathrm{id}}
\def\dd{\mathrm{d}}
\def\vol{\mathrm{vol}}
\def\cc{\mathrm{c}}
\def\1{I}
\def\op{\mathrm{op}}
\def\Loc{\mathbf{Loc}}
\def\Set{\mathbf{Set}}
\def\Alg{\mathbf{Alg}}
\def\Vec{\mathbf{Vec}}
\def\CC{\mathbf{C}}
\def\PP{\mathbf{P}}
\def\Cat{\mathbf{Cat}}
\def\PFA{\mathbf{PFA}}
\def\toPFA{\mathbf{tPFA}}
\def\AQFT{\mathbf{AQFT}}
\def\RC{\mathbf{RC}}
\def\AAA{\mathfrak{A}}
\def\BBB{\mathfrak{B}}
\def\FFF{\mathfrak{F}}
\def\GGG{\mathfrak{G}}
\def\O{\mathcal{O}}
\def\P{\mathcal{P}}
\def\As{\mathsf{As}}
\def\colim{\mathrm{colim}}
\def\add{\mathrm{add}}
\def\addc{\mathrm{add},c}
\newcommand\und[1]{\underline{#1}}
\DeclareMathOperator*{\Motimes}{\text{\raisebox{0.25ex}{\scalebox{0.8}{$\bigotimes$}}}}
\def\sk{\vspace{2mm}}
\let\@fnsymbol\@alph
\newcommand{\omi}[1]{\buildrel { \buildrel{#1}\over{\vee} } \over .}
\title{%
Model-independent comparison between factorization algebras \\ 
and algebraic quantum field theory on Lorentzian manifolds
}
\author{%
Marco Benini$^{1,2,a}$, 
Marco Perin$^{3,b}$\ and\
Alexander Schenkel$^{3,c}$\vspace{4mm}\\
{\small ${}^1$ Fachbereich Mathematik, Universit\"at Hamburg,}\\
{\small Bundesstr.~55, 20146 Hamburg, Germany.}\vspace{2mm}\\
{\small ${}^2$ Dipartimento di Matematica, Universit\`a di Genova,}\\
{\small Via Dodecaneso 35, 16146 Genova, Italy.}\vspace{2mm}\\
{\small ${}^3$ School of Mathematical Sciences, University of Nottingham,}\\
{\small University Park, Nottingham NG7 2RD, United Kingdom.}\vspace{4mm}\\
{\small \begin{tabular}{ll}
Email: & ${}^a$~\texttt{benini@dima.unige.it}\\
& ${}^b$~\texttt{marco.perin@nottingham.ac.uk}\\
& ${}^c$~\texttt{alexander.schenkel@nottingham.ac.uk}
\end{tabular}
}
}
\date{September 2019\vspace{-4mm}}
\begin{document}

\maketitle

\begin{abstract}
\noindent This paper investigates the relationship between algebraic quantum field theories and factorization algebras on globally hyperbolic Lorentzian manifolds. Functorial constructions that map between these two types of theories in both directions are developed under certain natural hypotheses, including suitable variants of the local constancy and descent axioms. The main result is an equivalence theorem between (Cauchy constant and additive) algebraic quantum field theories and (Cauchy constant, additive and time-orderable) prefactorization algebras. A concept of $\ast$-involution for the latter class of prefactorization algebras is introduced via transfer. This involves Cauchy constancy explicitly and does not extend to generic (time-orderable) prefactorization algebras.
\end{abstract}

\vspace{-2mm}

\paragraph*{Report no.:} ZMP-HH/19-6, Hamburger Beitr\"age zur Mathematik Nr.\ 781
\vspace{-3mm}

\paragraph*{Keywords:} algebraic quantum field theory, factorization algebras, Lorentzian geometry
\vspace{-3mm}

\paragraph*{MSC 2010:} 81Txx, 53C50
\vspace{-1mm}

\renewcommand{\baselinestretch}{0.9}\normalsize
\tableofcontents
\renewcommand{\baselinestretch}{1.0}\normalsize


\newpage


\section{\label{sec:intro}Introduction and summary}
Factorization algebras and algebraic quantum field theory are
two mathematical frameworks to axiomatize the algebraic structure
of observables in a quantum field theory. While from a superficial
point of view these two approaches look similar, there are subtle differences.
A prefactorization algebra $\FFF$ assigns to each spacetime $M$
a {\em vector space} $\FFF(M)$ of observables
and to each tuple $\und{f} = (f_1:M_1\to N,\dots,f_n: M_n\to N)$
of pairwise disjoint spacetime embeddings a factorization product
$\FFF(\und{f}) : \bigotimes_{i=1}^n\FFF(M_i) \to \FFF(N)$ satisfying suitable
properties, cf.\ \cite{CostelloGwilliam} and Section \ref{subsec:PFA}.
On the other hand, an algebraic quantum field theory $\AAA$ assigns
to each spacetime $M$ an associative and unital {\em $\ast$-algebra}
$\AAA(M)$ of observables and to each spacetime embedding $f: M\to N$ a
$\ast$-algebra morphism $\AAA(f) : \AAA(M)\to \AAA(N)$ such that
suitable axioms hold true, cf.\ \cite{Brunetti,FewsterVerch,AQFTbook,BSWoperad}
and Section \ref{subsec:AQFT}. The main differences are that, in contrast to
an algebraic quantum field theory $\AAA$, a prefactorization
algebra $\FFF$ {\em does not} in general come endowed with 1.)~a multiplication of 
observables in $\FFF(M)$, i.e.\ on the same spacetime $M$, because $(\id_M : M\to M,\id_M:M\to M)$ 
is not a pair of disjoint spacetime embeddings, and 2.)~a concept of 
$\ast$-involution on observables  in $\FFF(M)$.
\sk

In this paper we shall develop functorial constructions (cf.\ Theorems \ref{theo:PFAtoAQFT}
and \ref{theo:AQFTtotPFA}) that allow us to relate prefactorization algebras and 
algebraic quantum field theories, provided that we assume certain natural hypotheses
on both sides. We shall focus on the case where spacetimes are described by oriented and 
time-oriented globally hyperbolic Lorentzian manifolds, i.e.\ on the case of 
relativistic quantum field theory, and disregard until Section \ref{subsec:Involutions}
the $\ast$-involutions on algebraic quantum field theories because prefactorization algebras
are usually considered without a concept of $\ast$-involution. Our main result is an equivalence theorem 
between (Cauchy constant and additive) algebraic quantum field theories
and (Cauchy constant, additive and time-orderable) prefactorization algebras,
cf.\ Theorem \ref{theo:equivalence}. Our equivalence theorem is considerably
more general than the earlier comparison result by Gwilliam and Rejzner 
\cite{GwilliamRejzner}: (1)~We work in a model-independent setup, supplemented
by natural additional hypotheses such as Cauchy constancy, additivity and 
time-orderability, while \cite{GwilliamRejzner} only studies 
linear quantum field theories, such as e.g.\ the free Klein-Gordon field. 
(2)~We investigate in detail uniqueness, associativity, naturality
and Einstein causality of the multiplications $\mu_M : \FFF(M)\otimes\FFF(M)\to \FFF(M)$ 
determined by a Cauchy constant additive prefactorization algebra $\FFF$, which requires
rather sophisticated arguments from Lorentzian geometry. These questions
were not addressed in \cite{GwilliamRejzner}. (3)~Our equivalence theorem
admits an interpretation in terms of operad theory (cf.\ Remark \ref{rem:operad}),
which provides a suitable starting point for generalizations to higher categorical
quantum field theories \cite{CostelloGwilliam,BSShocolim,BSfibered,BSWhomotopy,BSoverview} 
such as gauge theories. (The present paper does not study this generalization and
will focus on the case of $1$-categorical quantum field theories.) 
We would like to state very clearly that our results prove an equivalence theorem
between certain categories of prefactorization algebras
and algebraic quantum field theories, hence they {\em do not} 
make any statements about the relationship between explicit construction 
methods for examples. We refer to \cite{GwilliamRejzner} for 
a concrete comparison between  BV quantization \cite{CostelloGwilliam}
and perturbative canonical quantization \cite{FredenhagenRejzner, Rejzner}. 
\sk

Let us now explain in more detail our constructions and results 
while outlining the content of the present paper: In Section \ref{sec:prel}
we recall the necessary preliminaries from Lorentzian geometry, 
factorization algebras and algebraic quantum field theory. 
All prefactorization algebras and algebraic quantum field theories
will be defined on the usual category $\Loc$ of oriented and time-oriented
globally hyperbolic Lorentzian manifolds. We introduce an additivity axiom
for both prefactorization algebras and algebraic quantum field theories,
which roughly speaking demands that the observables in a spacetime $M$
are generated by the observables in the {\em relatively compact} and causally convex
open subsets $U\subseteq M$. It is shown that factorization algebras, i.e.\
prefactorization algebras satisfying Weiss descent, are in particular
additive prefactorization algebras. We also introduce a Cauchy constancy
(or time-slice) axiom for both kinds of theories, which formalizes a concept of
time evolution in a globally hyperbolic Lorentzian manifold. In Section \ref{sec:construction} we construct a functor
$\bbA: \PFA^{\addc}\to\AQFT^{\addc}$ that assigns a Cauchy constant additive algebraic
quantum field theory $\bbA[\FFF]$ to each Cauchy constant additive prefactorization algebra $\FFF$,
see Theorem \ref{theo:PFAtoAQFT} for the main result.
The crucial step is to define canonical multiplications $\mu_M : \FFF(M)\otimes \FFF(M)\to\FFF(M)$
for such $\FFF$ (cf.\ \eqref{eqn:multiplicationmap}), which is done by using
Cauchy constancy. Proving naturality and Einstein causality of these multiplications
requires the additivity axiom, cf.\ Propositions \ref{propo:multnaturality} and \ref{propo:caus}.
In Section \ref{sec:inverseconstruction} we construct
a functor $\bbF : \AQFT\to\toPFA$ that assigns a {\em time-orderable} prefactorization algebra
$\bbF[\AAA]$ to each algebraic quantum field theory $\AAA$, see Theorem \ref{theo:AQFTtotPFA} for the main result. 
The difference between time-orderable and ordinary
prefactorization algebras on $\Loc$ is that the former just encode factorization products
$\FFF(\und{f}) : \bigotimes_{i=1}^n \FFF(M_i)\to N$ for tuples of pairwise disjoint morphisms
$\und{f}$ that are in a suitable sense time-orderable, see Definition \ref{def:timeordered}.
There is a natural forgetful functor $\PFA\to\toPFA$ from ordinary to time-orderable
prefactorization algebras, which is however not full, see Remarks \ref{rem:nontimeorderable}
and \ref{rem:tPFAvsPFA}.
Our results suggest that the concept of time-orderable prefactorization algebras
from Section \ref{sec:inverseconstruction} is better suited to the 
category of Lorentzian spacetimes  $\Loc$ than the more naive concept from
Section \ref{subsec:PFA} that allows also for factorization products for non-time-orderable 
tuples of pairwise disjoint morphisms. In Section \ref{sec:equivalence} we explain
that the construction $\bbA : \PFA^{\addc}\to\AQFT^{\addc}$ from Section \ref{sec:construction}
factors through the forgetful functor $\PFA^{\addc}\to \toPFA^{\addc}$, thereby
defining a functor $\bbA: \toPFA^{\addc}\to\AQFT^{\addc}$ that assigns a Cauchy constant additive algebraic
quantum field theory to each Cauchy constant additive {\em time-orderable} prefactorization algebra.
Our main Equivalence Theorem \ref{theo:equivalence} proves that this functor
admits an inverse that is given by the restriction $\bbF : \AQFT^{\addc}\to\toPFA^{\addc}$ of the functor
from Section \ref{sec:inverseconstruction} to Cauchy constant and additive theories.
Hence, Cauchy constant additive algebraic quantum field theories are naturally identified
with Cauchy constant additive time-orderable prefactorization algebras.
In Section \ref{subsec:Involutions}, we use our main Equivalence Theorem \ref{theo:equivalence} 
to transfer $\ast$-involutions from algebraic quantum field theories to Cauchy constant additive time-orderable 
prefactorization algebras. By construction, we obtain an equivalence $\ast\AQFT^{\addc}\simeq \ast\toPFA^{\addc}$
between theories with $\ast$-involutions. We show that the transferred concept of $\ast$-involutions
for Cauchy constant additive time-orderable prefactorization algebras involves Cauchy constancy explicitly, 
hence it does not extend to generic time-orderable prefactorization algebras in $\toPFA$.
In Section \ref{subsec:KleinGordon}, we apply our general results to the simple example given by 
the free Klein-Gordon  field $\AAA_{\mathrm{KG}}\in\AQFT^{\addc}$. We observe as in \cite{GwilliamRejzner}
that the corresponding time-orderable prefactorization algebra $\FFF_{\mathrm{KG}}\in\toPFA^{\addc}$ 
describes the time-ordered products from perturbative algebraic quantum field theory, cf.\ \cite{FredenhagenRejzner, Rejzner}.


\section{\label{sec:prel}Preliminaries}
\subsection{\label{subsec:Lorentzian}Lorentzian geometry}
In order to fix our notations, we shall briefly recall some basic definitions and 
properties of Lorentzian manifolds. We refer to \cite{BGP} for a concise introduction.
\sk

A {\em Lorentzian manifold} is a manifold $M$ together with a metric $g$ of signature
$(-+\cdots+)$. A non-zero tangent vector $0\neq v\in T_x M$ at a point $x\in M$
is called {\em time-like} if $g(v,v)<0$, {\em light-like} if $g(v,v) =0$ and {\em space-like}
if $g(v,v)>0$. It is called {\em causal} 
if it is either time-like or light-like, i.e.\ $g(v,v)\leq 0$.
A curve $\gamma : I\to M$, where $I\subseteq \bbR$ is an open interval, 
is called {\em time-like}/{\em light-like}/{\em space-like}/{\em causal} 
if all its tangent vectors $\dot{\gamma}$ are time-like/light-like/space-like/causal. 
A Lorentzian manifold is called {\em time-orientable} if there exists a vector field
$\mathfrak{t}\in\Gamma^\infty(TM)$ that is everywhere time-like. 
Such $\mathfrak{t}$ determines a {\em time-orientation}.
\sk

In what follows we always consider time-oriented Lorentzian manifolds, denoted
collectively by symbols like $M$, suppressing the metric $g$ and time-orientation $\mathfrak{t}$ 
from our notation. A time-like or causal curve $\gamma : I\to M$ is called {\em future directed}
if $g(\mathfrak{t},\dot{\gamma})<0$ and {\em past directed} if $g(\mathfrak{t},\dot{\gamma})>0$.
The {\em chronological future/past} of a point $x\in M$ is the subset
$I^\pm_M(x)\subseteq M$ of all points that can be reached from $x$ by future/past directed
time-like curves. The {\em causal future/past} of a point $x\in M$ is the subset
$J^\pm_M(x)\subseteq M$ of all points that can be reached from $x$ by future/past directed
causal curves and $x$ itself. Given any subset $S\subseteq M$, we define $I^\pm_M(S) := \bigcup_{x\in S}I^\pm_M(x)$
and $J^\pm_M(S) := \bigcup_{x\in S} J^\pm_M(x)$.
\begin{defi}
Let $M$ be a time-oriented Lorentzian manifold. A subset $S\subseteq M$ is called
{\em causally convex} if $J^+_M(S)\cap J^-_M(S)\subseteq S$. Two subsets
$S,S^\prime\subseteq M$ are called {\em causally disjoint} if 
$\big(J^+_M(S)\cup J^-_M(S)\big)\cap S^\prime = \emptyset$.
\end{defi}
\begin{rem}
In words, a subset $S\subseteq M$ is causally convex if every causal curve that starts and ends
in $S$ is contained entirely in $S$. Two subsets $S,S^\prime\subseteq M$ are causally disjoint
if there exists no causal curve in $M$ connecting $S$ and $S^\prime$.
\end{rem}
\begin{defi}
A time-oriented Lorentzian manifold $M$ is called {\em globally hyperbolic}
if it admits a {\em Cauchy surface}, i.e.\ a subset $\Sigma\subset M$ 
that is met exactly once by each inextensible time-like curve in $M$.
\end{defi}

The following category of Lorentzian manifolds plays a fundamental role in 
algebraic quantum field theory, see e.g.\ \cite{Brunetti,FewsterVerch,AQFTbook,BSWoperad}.
\begin{defi}\label{def:Loc}
We denote by $\Loc$ the category whose 
objects are all oriented and time-oriented globally hyperbolic Lorentzian manifolds
$M$ and morphisms are all orientation and time-orientation
preserving isometric embeddings $f : M\to N$ with causally convex and open image $f(M)\subseteq N$.
\end{defi}

We introduce the following terminology to specify important (tuples of) $\Loc$-morphisms
that enter the definitions of algebraic quantum field theories and factorization algebras.
\begin{defi}\label{def:Locmaps}
\begin{itemize}
\item[(a)] A $\Loc$-morphism $f : M\to N$ is called a {\em Cauchy morphism}
if its image $f(M)\subseteq N$ contains a Cauchy surface of $N$. We shall write
$f : M\stackrel{c}{\to} N$ for Cauchy morphisms.

\item[(b)] A pair of $\Loc$-morphisms $(f_1:  M_1\to N,f_2: M_2\to N)$ to a common target
is called {\em causally disjoint} if the images $f_1(M_1)\subseteq N$ and $f_{2}(M_2)\subseteq N$
are causally disjoint subsets of $N$. We shall write $f_1\perp f_2$ for causally disjoint morphisms.

\item[(c)] A tuple of $\Loc$-morphisms $(f_1: M_1\to N,\dots,f_n : M_n\to N)$ to a common target
is called {\em pairwise disjoint} if the images $f_i(M_i)\subseteq N$ are pairwise disjoint
subsets of $N$, i.e. $f_i(M_i)\cap f_j(M_j) =\emptyset$, for all $i\neq j$. We shall write
$\und{f} : \und{M} \to N$ for tuples $\und{f} = (f_1,\dots,f_n)$ of pairwise disjoint morphisms.
\end{itemize}
\end{defi}
\begin{rem}
By convention, a $1$-tuple $\und{f} =(f): \und{M}\to N$ of pairwise disjoint morphisms
is just a $\Loc$-morphism $f:M\to N$ and there exists a unique  empty tuple $\emptyset\to N$
for each $N\in \Loc$.
\end{rem}

\subsection{\label{subsec:PFA}Factorization algebras}
Factorization algebras are typically considered in the context of topological, complex
or Riemannian manifolds, see \cite{CostelloGwilliam} for a detailed study. 
In order to obtain a meaningful comparison to algebraic quantum field theory, 
which is typically considered in the context of globally 
hyperbolic Lorentzian manifolds, we shall introduce below a variant of
factorization algebras on the category $\Loc$ from Definition \ref{def:Loc}.
A similar concept of factorization algebras on $\Loc$ 
appeared before in \cite{GwilliamRejzner}. For what follows let us fix any
cocomplete closed symmetric monoidal category $(\CC,\otimes,I,\tau)$,
e.g.\ the category of vector spaces $\Vec_\bbK^{}$ over a field $\bbK$.
\sk

A {\em prefactorization algebra} $\FFF$ on $\Loc$ with values in $\CC$
is given by the following data:
\begin{itemize}
\item[(i)] for each $M\in \Loc$, an object $\FFF(M)\in\CC$;
\item[(ii)] for each tuple $\und{f}=(f_1,\dots,f_n) : \und{M} \to N$
of pairwise disjoint morphisms, a $\CC$-morphism 
$\FFF(\und{f}) : \bigotimes_{i=1}^n \FFF(M_i)\to \FFF(N)$ (called {\em factorization product}),
with the convention that to the empty tuple $\emptyset \to N$ 
is assigned a morphism $I\to \FFF(N)$ from the monoidal unit.
\end{itemize}
These data are required to satisfy the following conditions:
\begin{enumerate}
\item for every $\und{f}=(f_1,\dots,f_n) : \und{M}\to N$ and $\und{g}_i =
(g_{i1},\dots,g_{i k_i}) : \und{L}_i \to M_i$, for $i=1,\dots,n$,  the diagram
\begin{flalign}\label{eqn:FAcomp}
\xymatrix@C=5em{
\ar[rd]_-{\FFF(\und{f}(\und{g}_1,\dots,\und{g}_n))~~~~~~~} \bigotimes\limits_{i=1}^n \bigotimes\limits_{j = 1}^{k_i} \FFF(L_{ij}) \ar[r]^-{\Motimes_{i} \FFF(\und{g}_i)} &
\bigotimes\limits_{i=1}^n \FFF(M_i)\ar[d]^-{\FFF(\und{f})}\\
&\FFF(N)
}
\end{flalign}
in $\CC$ commutes, where $\und{f}(\und{g}_1,\dots,\und{g}_n) :=
(f_1\,g_{11},\dots,f_n\,g_{nk_n}) : (\und{L}_1,\dots,\und{L}_n)\to N$ is given by composition in $\Loc$;

\item for every $M\in \Loc$, $\FFF(\id_M) = \id_{\FFF(M)} : \FFF(M)\to \FFF(M) $;

\item for every $\und{f}=(f_1,\dots,f_n) : \und{M}\to N$ and every permutation $\sigma\in \Sigma_n$, the diagram
\begin{flalign}\label{eqn:FAperm}
\xymatrix@C=5em{
\ar[d]_-{\text{permute}}\bigotimes\limits_{i=1}^n \FFF(M_i) \ar[r]^-{\FFF(\und{f})} & \FFF(N)\\
\bigotimes\limits_{i=1}^n \FFF(M_{\sigma(i)}) \ar[ru]_-{~~\FFF(\und{f}\sigma)}&
}
\end{flalign}
in $\CC$ commutes, where $\und{f}\sigma := (f_{\sigma(1)},\dots,f_{\sigma(n)}) : \und{M}\sigma \to N$
is given by right permutation.
\end{enumerate}
A morphism $\zeta: \FFF\to \GGG$ of prefactorization algebras 
is a family $\zeta_M : \FFF(M)\to \GGG(M)$ of $\CC$-morphisms, for all $M\in\Loc$, 
that is compatible with the factorization products, i.e.\ for all $\und{f} : \und{M}\to N$ the diagram
\begin{flalign}\label{eqn:PFACmorphism}
\xymatrix@C=5em{
\ar[d]_-{\Motimes_i \zeta_{M_i}} \bigotimes\limits_{i=1}^n \FFF(M_i) \ar[r]^-{ \FFF(\und{f})} & \FFF(N) \ar[d]^-{\zeta_N}\\
\bigotimes\limits_{i=1}^n \GGG(M_i) \ar[r]_-{ \GGG(\und{f})} & \GGG(N)
}
\end{flalign}
in $\CC$ commutes. 
\begin{defi}
We denote by $\PFA$ the category of prefactorization algebras on $\Loc$.
\end{defi}

Factorization algebras are prefactorization algebras that satisfy a suitable descent condition
with respect to Weiss covers \cite{CostelloGwilliam}. For proving our results in this
paper, it is sufficient to assume a weaker descent condition that we shall call {\em additivity}
in reference to a similar property in algebraic quantum field theory \cite{Fewster}.
As explained below, this includes in particular all factorization algebras on $\Loc$.
Before we can formalize the additivity property,  we have to introduce some further 
terminology and notations.
\begin{defi}\label{def:RCM}
For $M\in\Loc$, we denote by $\RC_M$ the {\em category of
all relatively compact and causally convex open subsets} $U\subseteq M$ 
with morphisms given by subset inclusions. 
\end{defi} 

\begin{rem}
Note that the assignment $M \mapsto \RC_M$ may be promoted to a functor
$\RC_{(-)} : \Loc\to \Cat$ with values in the category of (small) categories.
Concretely, given any $\Loc$-morphism $f:M\to N$, then the functor
$\RC_f : \RC_M\to\RC_N$ sends each relatively compact and causally convex open
subset $U\subseteq M$ to its image $f(U)\subseteq N$. Since $f$ is continuous, 
it follows that this is a relatively compact and causally convex open subset of $N$.
\end{rem}

\begin{lem}\label{lem:RCMdirected}
For every $M\in\Loc$, the category $\RC_M$ is a directed set.
\end{lem}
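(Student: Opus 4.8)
The plan is to show that $\RC_M$ is a non-empty partially ordered set (under inclusion) that is \emph{directed}, i.e.\ every pair of elements has an upper bound. Non-emptiness is immediate since $M$ is globally hyperbolic and hence second countable, so it contains plenty of relatively compact causally convex open subsets (for instance, images of small globally hyperbolic charts, or $\emptyset$ itself if one allows it); in any case one can exhibit at least one such $U$. The partial order structure is clear: inclusion of subsets is reflexive, transitive and antisymmetric, and between any two comparable subsets there is at most one morphism in $\RC_M$, so $\RC_M$ is (equivalent to) a poset. The substantive point is directedness.

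For directedness, given $U_1,U_2\in\RC_M$, I would first form $V:=U_1\cup U_2$, which is open and relatively compact (a finite union of relatively compact sets), but \emph{not} in general causally convex. The key step is therefore to enlarge $V$ to a relatively compact causally convex open subset $U\subseteq M$ containing $V$. The natural candidate is the \emph{causal hull} of $V$, namely $\mathrm{ch}_M(V):=J^+_M(\ovr V)\cap J^-_M(\ovr V)$ or a suitable open thickening thereof. Here one uses standard Lorentzian geometry facts about globally hyperbolic spacetimes: for a compact subset $K\subseteq M$ the sets $J^\pm_M(K)$ are closed and $J^+_M(K)\cap J^-_M(K)$ is compact (this is one of the classical characterizations of global hyperbolicity, see \cite{BGP}). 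Hence $J^+_M(\ovr V)\cap J^-_M(\ovr V)$ is compact and causally convex; one then takes a relatively compact causally convex \emph{open} neighbourhood $U$ of this compact causally convex set. That such an open causally convex neighbourhood exists again follows from standard results on globally hyperbolic manifolds (existence of a neighbourhood basis of causally convex open sets around compact causally convex sets, obtainable e.g.\ from a Cauchy temporal function or from \cite{BGP}). Then $U\supseteq V\supseteq U_1,U_2$ and $U\in\RC_M$, giving the required upper bound.

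The main obstacle is precisely this geometric enlargement step: one must be careful that the causal hull of a relatively compact set is again relatively compact (which is exactly where global hyperbolicity is essential — on a general Lorentzian manifold it would fail), and that one can fatten a compact causally convex set to an \emph{open} causally convex set that is still relatively compact. I expect the cleanest route is to invoke a Cauchy temporal function $t:M\to\bbR$ with spacelike level sets: then $V$ lies in a ``time slab'' $t^{-1}((a,b))$ with $\ovr V$ compact, one spatially thickens within nearby Cauchy surfaces, and cuts down by the slab to obtain a relatively compact causally convex open $U$. The remaining verifications — that inclusion makes $\RC_M$ a small category, reflexivity/transitivity, and that the chosen $U$ indeed contains both $U_1$ and $U_2$ — are routine and I would state them briefly without belaboring the point.
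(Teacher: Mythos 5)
Your overall strategy is the right one and you have correctly isolated where global hyperbolicity enters (compactness of causal diamonds of compact sets), but your main construction has a genuine gap at exactly the step you flag: after forming the compact causally convex hull $J^+_M(\ovr{V})\cap J^-_M(\ovr{V})$, you need a \emph{relatively compact causally convex open} neighbourhood of it, and you justify its existence by appeal to ``standard results on a neighbourhood basis of causally convex open sets around compact causally convex sets''. That is not an off-the-shelf citation in \cite{BGP}, and as stated it is doing all the work; your fallback sketch via a Cauchy temporal function is also incomplete, since cutting down by a time slab preserves causal convexity but does nothing for relative compactness, so the ``spatial thickening'' still has to be performed in a way that is simultaneously open, causally convex and relatively compact.

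The paper closes precisely this gap by never passing through a closed causal hull: with $K:=\ovr{U_1}\cup\ovr{U_2}$ compact, it picks a Cauchy surface $\Sigma$ with $K\subseteq I^-_M(\Sigma)$, sets $S:=J^+_M(K)\cap J^-_M(\Sigma)$ (compact by \cite[Corollary~A.5.4]{BGP}), and takes $U:=I^+_M(K)\cap I^-_M(S)$. This $U$ is automatically open (chronological futures/pasts are open), causally convex by the push-up property (\cite[Lemma~A.5.12]{BGP}), relatively compact because $U\subseteq J^+_M(K)\cap J^-_M(S)\subseteq S$, and contains $U_1,U_2$ because these are open (every $x\in U_i$ has points of $U_i$ in its chronological past, hence $x\in I^+_M(K)$, and points of $U_i\cap I^+_M(x)\subseteq S$ in its chronological future, hence $x\in I^-_M(S)$). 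If you want to keep your causal-hull route, the same $I^+_M(\cdot)\cap I^-_M(\cdot)$ device is what proves your missing fattening lemma (take $I^+_M$ of the compact trace of $J^-_M(C)$ on a Cauchy surface in the past of $C$ intersected with $I^-_M$ of the analogous trace in the future), so the two approaches ultimately coincide; the paper's version is just the direct one. Your remarks on non-emptiness and the poset structure are fine but not where the content lies.
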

\begin{proof}
Let $U_1, U_2 \in \RC_M$. We shall construct $U \in \RC_M$ such that $U_i \subseteq U$, for $i=1,2$. 
Since $K := \overline{U_1} \cup \overline{U_2}$ is compact, there exists a Cauchy surface $\Sigma$ 
of $M$ such that $K \subseteq I^-_M(\Sigma)$. We set $S := J^+_M(K) \cap J^-_M(\Sigma)$ 
and observe that this is a compact subset of $M$ by \cite[Corollary~A.5.4]{BGP}. 
Using also \cite[Lemma~A.5.12]{BGP}, it follows that $U := I^+_M(K) \cap I^-_M(S)$ belongs to $\RC_M$.
By construction, $U$ contains both $U_1$ and $U_2$.
\end{proof}

We may restrict the orientation, time-orientation and metric on $M$ 
to the causally convex open subsets $U\in\RC_M$ and thereby define
objects $U\in\Loc$. Every inclusion $U\subseteq V$ in $\RC_M$
then defines a $\Loc$-morphism $\iota_U^V : U\to V$. Hence,
we can regard $\RC_M\subseteq \Loc$ as a subcategory, for
every $M\in\Loc$, and restrict any prefactorization algebra
$\FFF\in\PFA$ to a functor $\FFF\vert_{M} : \RC_M\to \CC$.
\begin{defi}\label{def:additivePFA}
A prefactorization algebra $\FFF\in\PFA$ is called {\em additive}
if, for every $M\in\Loc$, the canonical morphism
\begin{flalign}\label{eqn:additivePFA}
\xymatrix@C=3em{
\colim\Big(\FFF\vert_M : \RC_M\to \CC\Big) \ar[r]^-{\cong}~&~\FFF(M)
}
\end{flalign}
is an isomorphism in $\CC$. We denote by $\PFA^{\add}\subseteq \PFA$ the full 
subcategory of additive prefactorization algebras.
\end{defi}
\begin{rem}
The additivity condition formalizes the idea that $\FFF(M)$ is ``generated''
by the images of the maps $\FFF(U)\to\FFF(M)$, for all relatively compact and 
causally convex open subsets $U\subseteq M$. Interpreting $\FFF(M)$ as a collection
of observables for a quantum field theory, this means that all observables
described by $\FFF(M)$ arise from relatively compact regions $U\subseteq M$.
\end{rem}

\begin{propo}
Every factorization algebra $\FFF$ on $\Loc$ is an additive 
prefactorization algebra.
\end{propo}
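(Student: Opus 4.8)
The plan is to show that the descent (Weiss-cosheaf) property of a factorization algebra, restricted to a suitable Weiss cover of $M$ by relatively compact causally convex open subsets, already forces the colimit in \eqref{eqn:additivePFA} to be an isomorphism. First I would observe that $\RC_M$ is a directed set by Lemma \ref{lem:RCMdirected}, so the colimit over $\RC_M$ is a filtered colimit, and in particular it suffices to exhibit a single Weiss cover $\mathcal{U}$ of $M$ all of whose elements and finite intersections land in $\RC_M$ (or can be refined into $\RC_M$), and then compare the $\RC_M$-colimit with the Weiss-descent colimit over the Čech diagram of $\mathcal{U}$.

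The key construction is therefore the cover: I would take $\mathcal{U}$ to be the collection of all relatively compact causally convex open subsets $U \subseteq M$, i.e.\ the underlying set of $\RC_M$ itself. One checks this is a Weiss cover — given any finite subset $\{x_1,\dots,x_k\}\subseteq M$ there is a single $U\in\RC_M$ containing all of them (indeed, a small globally hyperbolic neighbourhood of a compact set containing the points, obtained exactly as in the proof of Lemma \ref{lem:RCMdirected}) — and that $\mathcal{U}$ is closed under finite intersections within $\RC_M$, since an intersection of finitely many relatively compact causally convex opens is again relatively compact and causally convex. Then the Weiss descent axiom says $\FFF(M)$ is the colimit of $\FFF$ over the diagram of all nonempty finite tuples $(U_1,\dots,U_n)$ in $\mathcal{U}$ together with the structure maps $\FFF(U_{i_1}\cap\dots\cap U_{i_p})\to\FFF(U_j)$. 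Next I would argue that this Weiss-descent colimit is canonically identified with the plain filtered colimit $\colim_{\RC_M}\FFF\vert_M$: the inclusion of the subdiagram consisting of the single objects $U\in\RC_M$ (with inclusion maps) into the full Čech/Weiss diagram is cofinal, because every finite intersection $U_{i_1}\cap\dots\cap U_{i_p}$ and every finite tuple is dominated, in $\RC_M$, by some $U\in\RC_M$ containing everything in sight — this is again directedness of $\RC_M$. Cofinality then yields the desired isomorphism $\colim_{\RC_M}\FFF\vert_M\cong\FFF(M)$, which is precisely additivity.

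The main obstacle I expect is the bookkeeping in the cofinality argument: one must be careful that the Weiss-descent colimit is a colimit over a diagram that is genuinely just indexed by the poset of nonempty finite unions/tuples of cover elements together with all the intersection maps, and that the comparison functor $\RC_M \hookrightarrow (\text{Weiss diagram of }\mathcal{U})$ is cofinal in the appropriate $\infty$- or $1$-categorical sense — here, since everything is $1$-categorical and $\RC_M$ is a directed poset, it reduces to checking that the relevant comma categories are connected, i.e.\ nonempty and that any two objects are joined by a zig-zag, which follows from the fact that any finite collection of elements of $\RC_M$ has an upper bound in $\RC_M$. A secondary point to handle cleanly is the precise form in which Costello--Gwilliam's Weiss descent is stated (cosheaf condition phrased via a Čech complex of tensor products); since we are in the $1$-categorical cocomplete setting and only need the $0$-th "cohomology", the Čech complex truncates to the relevant colimit, so no homotopical subtleties intervene. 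Once the cover $\mathcal{U}$ and the cofinality of $\RC_M$ in its Weiss diagram are in place, the statement follows formally.
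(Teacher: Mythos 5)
Your proposal follows essentially the same route as the paper: take the cover $\mathcal{U}=\RC_M$ itself, verify it is a Weiss cover using the directedness of $\RC_M$ (Lemma \ref{lem:RCMdirected}), note that intersections of its elements stay in $\RC_M$, and then identify the Weiss-descent colimit with $\colim_{\RC_M}\FFF\vert_M$. The one place where you diverge is the identification of the two colimits: you propose to show that the inclusion $\RC_M\hookrightarrow(\text{\v{C}ech diagram of }\mathcal{U})$ is final, whereas the paper simply observes that the cocones of the coequalizer diagram \eqref{eqn:Cechfacalg} and the cocones of the directed diagram \eqref{eqn:additivePFA} coincide (using that $U\cap V\in\RC_M$ when nonempty, and that $U\cap V=U$ for $U\subseteq V$). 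Be careful with your formulation: in the usual \v{C}ech/Weiss diagram indexed by finite subsets of the cover, there is no morphism from the singleton $\{U\}$ to the singleton $\{V\}$ when $U\subseteq V$, so the ``inclusion'' of $\RC_M$ into that diagram is not literally a functor, and the comma categories you would need to check are not quite the ones you describe. This is exactly the bookkeeping obstacle you anticipated, and the direct cocone comparison is the clean way to dispose of it; once that is done, the two arguments are the same.
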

\begin{proof}
Suppose that $\FFF$ is a factorization algebra \cite{CostelloGwilliam}, 
i.e.\ it satisfies a cosheaf condition with respect to all Weiss covers of every $M\in \Loc$. 
For every $M\in\Loc$, the cover defined by $\RC_M$ is a Weiss cover.
Indeed, given finitely many points $x_1,\dots,x_n\in M$, there exist $U_i\in\RC_M$
with $x_i\in U_i$ and hence $U\in\RC_M$  with $x_1,\dots,x_n\in U$ because
$\RC_M$ is directed by Lemma \ref{lem:RCMdirected}. The property of being a factorization
algebra then implies that the canonical diagram
\begin{flalign}\label{eqn:Cechfacalg}
\xymatrix@C=3em{
\coprod\limits_{\substack{U,V\in\RC_M \\ U\cap V\neq \emptyset}} \FFF(U\cap V) \ar@<0.5ex>[r]\ar@<-0.5ex>[r]~&~\coprod\limits_{U\in\RC_M} \FFF(U) \ar[r]~&~\FFF(M)
}
\end{flalign}
is a coequalizer in $\CC$. Our claim then follows by observing that the cocones 
of \eqref{eqn:additivePFA} are canonically identified with the cocones of 
\eqref{eqn:Cechfacalg}. Indeed, any cocone $\{\alpha_U : \FFF(U) \to Z\}$
of \eqref{eqn:additivePFA} defines a cocone of \eqref{eqn:Cechfacalg} because
$U\cap V \in\RC_M$ (whenever nonempty) and hence the diagram
\begin{subequations}
\begin{flalign}
\xymatrix@R=1em@C=3em{
&\FFF(U)\ar[dr]^-{\alpha_U}&\\
\FFF(U\cap V)\ar[rr]^-{\alpha_{U\cap V}}\ar[dr]_-{\FFF(\iota_{U\cap V}^V)~~}\ar[ur]^-{\FFF(\iota_{U\cap V}^U)~~}&&Z\\
&\FFF(V)\ar[ru]_-{\alpha_V}&
}
\end{flalign}
in $\CC$ commutes. Vice versa, any cocone $\{\alpha_U : \FFF(U) \to Z\}$ of \eqref{eqn:Cechfacalg}
defines a cocone of \eqref{eqn:additivePFA} because $U\cap V =U$, for all $U\subseteq V$,
and hence the diagram
\begin{flalign}
\xymatrix@R=1em@C=3em{
&\FFF(U)\ar[dr]^-{\alpha_U}\ar[dd]^-{\FFF(\iota_U^V)}&\\
\FFF(U\cap V)\ar[dr]_-{\FFF(\iota_{U\cap V}^V)~~}\ar@{=}[ur]&&Z\\
&\FFF(V)\ar[ru]_-{\alpha_V}&
}
\end{flalign}
\end{subequations}
in $\CC$ commutes. 
\end{proof}

As a last definition, we would like to introduce a suitable
local constancy property that is adapted to the category $\Loc$. 
This property will play a crucial role in establishing our comparison results.
Recall from Definition \ref{def:Locmaps} the concept of Cauchy morphisms.
\begin{defi}\label{def:CauchyconstantPFA}
A prefactorization algebra $\FFF\in\PFA$ is called {\em Cauchy constant}
if $\FFF(f) : \FFF(M)\to\FFF(N)$ is an isomorphism in $\CC$, 
for every Cauchy morphism $f : M\stackrel{c}{\to} N$. We denote
by $\PFA^c\subseteq \PFA$ the full subcategory of Cauchy constant prefactorization algebras.
The full subcategory $\PFA^{\addc}\subseteq \PFA^\add$ of Cauchy constant additive
prefactorization algebras is defined analogously.
\end{defi}

\subsection{\label{subsec:AQFT}Algebraic quantum field theories}
Let $\CC$ be a cocomplete closed symmetric monoidal category as in the previous subsection.
We briefly review the basic definitions for $\CC$-valued algebraic quantum field theories on $\Loc$ following
\cite{BSWoperad}. Algebraic quantum field theories with $\ast$-involutions are defined
later in Section \ref{subsec:Involutions}. We also refer to \cite{Brunetti,FewsterVerch,AQFTbook} 
for a broader introduction to algebraic quantum field theories and their applications to physics.
\sk

Let us denote by $\Alg:=\Alg_{\As}(\CC)$ the category of associative and unital algebras in $\CC$.
An {\em algebraic quantum field theory} $\AAA$ on $\Loc$ with values in $\CC$
is a functor $\AAA : \Loc\to \Alg$ that satisfies the Einstein causality axiom:
for every pair of causally disjoint morphisms $(f_1: M_1\to N)\perp (f_2:M_2\to N)$, the diagram
\begin{flalign}\label{eqn:Einsteincausality}
\xymatrix@C=5em{
\ar[d]_-{\AAA(f_1)\otimes \AAA(f_2)} \AAA(M_1)\otimes \AAA(M_2)\ar[r]^-{\AAA(f_1)\otimes \AAA(f_2)} & \AAA(N)\otimes\AAA(N)\ar[d]^-{\mu_N^\op}\\
\AAA(N)\otimes\AAA(N)\ar[r]_-{\mu_N}& \AAA(N)
}
\end{flalign}
in $\CC$ commutes, where $\mu_N^{(\op)}$ denotes the (opposite) multiplication on $\AAA(N)$.
A morphism $\kappa : \AAA\to \BBB$ of algebraic quantum field theories  is a natural transformation
between the underlying functors.
\begin{defi}
We denote by $\AQFT$ the category of algebraic quantum field theories on $\Loc$.
\end{defi}

For proving some of the results of this paper, we require 
a relatively mild variant of an additivity property in the sense of \cite{Fewster}.
Recall from Definition \ref{def:RCM} the category $\RC_M$
of relatively compact and causally convex open subsets of $M\in\Loc$.
\begin{defi}\label{def:AQFTadditive}
An algebraic quantum field theory $\AAA\in\AQFT$  is called {\em additive}
if, for every $M\in\Loc$, the canonical morphism
\begin{flalign}
\xymatrix@C=3em{
\colim\Big(\AAA\vert_M : \RC_M\to \Alg \Big) \ar[r]^-{\cong}~&~\AAA(M)
}
\end{flalign}
is an isomorphism in $\Alg$. We denote by $\AQFT^{\add}\subseteq \AQFT$ the full 
subcategory of additive algebraic quantum field theories.
\end{defi}

\begin{rem}\label{rem:algcolim}
Because $\RC_M$ is a directed set by Lemma \ref{lem:RCMdirected}, 
the colimit in Definition \ref{def:AQFTadditive}
can be computed in the underlying category $\CC$, see e.g.\ \cite[Proposition~1.3.6]{Fresse}.
Hence, to check if an algebraic quantum field theory $\AAA\in\AQFT$
is additive, one can consider its underlying
functor $\AAA : \Loc\to \CC$ to the category $\CC$ (i.e.\ forget the algebra structures)
and equivalently check if $\colim\big(\AAA\vert_M : \RC_M\to\CC\big)\to \AAA(M)$
is an isomorphism in $\CC$.
\end{rem}

Furthermore, we introduce a suitable local constancy property that 
is also known in the literature as the time-slice axiom.
\begin{defi}\label{def:AQFTCauchyconstant}
An algebraic quantum field theory $\AAA\in\AQFT$ is called
{\em Cauchy constant} if $\AAA(f) :\AAA(M)\to \AAA(N)$ is an isomorphism in $\Alg$, 
for every Cauchy morphism $f: M\stackrel{c}{\to} N$. We denote
by $\AQFT^c\subseteq \AQFT$ the full subcategory of Cauchy constant algebraic quantum field theories.
The full subcategory $\AQFT^{\addc}\subseteq \AQFT^\add$ of Cauchy constant additive
algebraic quantum field theories is defined analogously.
\end{defi}


\section{\label{sec:construction}From PFA to AQFT}
In this section we show that every Cauchy constant 
additive prefactorization algebra $\FFF\in\PFA^{\addc}$ (cf.\ Definitions 
\ref{def:additivePFA} and \ref{def:CauchyconstantPFA}) defines
a Cauchy constant additive algebraic quantum field theory
(cf.\ Definitions \ref{def:AQFTadditive} and \ref{def:AQFTCauchyconstant}). This construction
will define a functor $\bbA : \PFA^{\addc} \to \AQFT^{\addc}$.
\sk

Our construction consists of three steps, which will be carried out
in detail in individual subsections below. Step (1)~consists of proving
that, for each $M\in\Loc$, the object $\FFF(M)\in\CC$ carries canonically
the structure of an associative and unital algebra in $\CC$. This step 
relies on Cauchy constancy, while it does not require 
that the additivity property holds true.
Step (2)~consists of proving that these algebra structures are compatible
with the maps $\FFF(f) : \FFF(M)\to\FFF(N)$ induced by $\Loc$-morphisms
$f:M\to N$. Here our additivity property turns out to be crucial.
Finally, in step (3)~we show that the resulting functor $\Loc\to \Alg$
satisfies the properties of a Cauchy constant additive algebraic quantum field theory,
cf.\ Section \ref{subsec:AQFT}.
 
\subsection{Object-wise algebra structure}
All results of this subsection do not use the additivity property 
from Definition \ref{def:additivePFA}. Hence, we let $\FFF\in \PFA^{c} $ be any 
Cauchy constant prefactorization algebra.
\sk

Let us fix any $M\in\Loc$. The basic idea to define a multiplication map
$\mu_M : \FFF(M)\otimes\FFF(M)\to \FFF(M)$ is as follows: Consider
two causally convex open subsets $U_+,U_-\subseteq M$ satisfying 
(i)~there exists a Cauchy surface $\Sigma$ of $M$ such that 
$U_\pm \subseteq I^\pm_M(\Sigma)$, and (ii)~$\iota_{U_\pm}^M: U_\pm\stackrel{c}{\to} M$ 
are Cauchy morphisms. In particular, $U_+\cap U_-=\emptyset$ are disjoint
and hence provide a pair of disjoint morphisms 
$\iota_{\und{U}}^M = (\iota_{U_+}^M,\iota_{U_-}^M) : \und{U} \to M $. We define
$\mu_M$ by the commutative diagram
\begin{flalign}\label{eqn:multiplicationmap}
\xymatrix@C=3em{
\FFF(M)\otimes \FFF(M) \ar[rr]^-{\mu_M} && \FFF(M)\\
&\ar[lu]_-{\cong}^-{\FFF(\iota_{U_+}^M)\otimes \FFF(\iota_{U_-}^M)~~~~~} \FFF(U_+)\otimes \FFF(U_-)\ar[ru]_-{\FFF(\iota_{\und{U}}^M)}&
}
\end{flalign}
where the upward-left pointing arrow is an isomorphism because $\FFF$ is by hypothesis Cauchy constant.
A priori, it is not clear whether different choices of such $\iota_{\und{U}}^M : \und{U}\to M$  
lead to the same multiplication map in \eqref{eqn:multiplicationmap}. The possible choices are 
recorded in the following category.
\begin{defi}\label{def:PM}
Let $M\in\Loc$.  We denote by $\PP_M$ the category whose 
objects are all pairs of disjoint morphisms 
$\iota_{\und{U}}^M = (\iota_{U_+}^M,\iota_{U_-}^M): \und{U}\to M$
corresponding to causally convex open subsets $U_+,U_-\subseteq M$ 
that satisfy
\begin{itemize}
\item[(i)] there exists a Cauchy surface $\Sigma$ of $M$ such that $U_\pm \subseteq I^\pm_M(\Sigma)$, and
\item[(ii)] $\iota_{U_\pm}^M: U_\pm\stackrel{c}{\to} M$ are Cauchy morphisms.
\end{itemize}
There exists a unique morphism $(\iota_{\und{U}}^M :  \und{U}\to M) \to (\iota_{\und{V}}^M :  \und{V}\to M)$
if and only if $U_\pm\subseteq V_\pm$.
\end{defi}

\begin{lem}\label{lem:PPMconnected}
For every $M\in\Loc$, the category $\PP_M$ is non-empty and connected.
\end{lem}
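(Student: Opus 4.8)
The plan is to show non-emptiness by an explicit construction and connectedness by showing that any two objects of $\PP_M$ can be joined through a common upper bound, provided one is careful that the "obvious" upper bound still satisfies condition (i) (the common-Cauchy-surface requirement), which is the subtle point.

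First I would establish non-emptiness. Pick any Cauchy surface $\Sigma$ of $M$. By standard results on globally hyperbolic manifolds (the splitting theorem, or directly \cite{BGP}), one can find causally convex open neighbourhoods $U_\pm$ with $U_\pm \subseteq I^\pm_M(\Sigma)$ such that each $U_\pm$ still contains a Cauchy surface of $M$; concretely, take a slightly later Cauchy surface $\Sigma_+ \subseteq I^+_M(\Sigma)$ and set $U_+$ to be a causally convex open neighbourhood of $\Sigma_+$ contained in $I^+_M(\Sigma)$, and symmetrically for $U_-$. Then $\iota_{U_\pm}^M : U_\pm \stackrel{c}{\to} M$ are Cauchy morphisms and $U_+ \cap U_- = \emptyset$ since $U_+ \subseteq I^+_M(\Sigma)$ and $U_- \subseteq I^-_M(\Sigma)$. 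This yields an object of $\PP_M$.

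Next I would prove connectedness. Given two objects $\iota_{\und U}^M : \und U \to M$ and $\iota_{\und V}^M : \und V \to M$, with associated Cauchy surfaces $\Sigma^U$ and $\Sigma^V$, the goal is to produce a third object $\iota_{\und W}^M : \und W \to M$ with $U_\pm \subseteq W_\pm$ and $V_\pm \subseteq W_\pm$ (giving morphisms in $\PP_M$ from both into the third). The natural candidate is $W_\pm := U_\pm \cup V_\pm$, but one must check that $W_+$ and $W_-$ are (a)~causally convex and open, (b)~disjoint, and (c)~both contain a common Cauchy surface in the sense of condition (i), i.e.\ there is a single Cauchy surface $\Sigma$ of $M$ with $W_\pm \subseteq I^\pm_M(\Sigma)$. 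Openness is clear; causal convexity of a union of causally convex sets is not automatic, so I would instead replace $W_\pm$ by the causally convex open hull $I^+_M(W_+) \cap I^-_M(S_+)$ for a suitable compact $S_+$, using the same technique as in the proof of Lemma \ref{lem:RCMdirected}, arranging it to stay inside $I^+_M(\Sigma)$ for a Cauchy surface $\Sigma$ chosen below. For (c): since $U_\pm \subseteq I^\pm_M(\Sigma^U)$ and $V_\pm \subseteq I^\pm_M(\Sigma^V)$, I would pick a Cauchy surface $\Sigma$ of $M$ lying in the common chronological past $I^-_M(\Sigma^U) \cap I^-_M(\Sigma^V)$ of both $\Sigma^U$ and $\Sigma^V$ when looking at the "$+$" sides, so that $U_+ \cup V_+ \subseteq I^+_M(\Sigma^U) \cup I^+_M(\Sigma^V) \subseteq I^+_M(\Sigma)$, and symmetrically arrange $U_- \cup V_- \subseteq I^-_M(\Sigma)$; disjointness of the two enlarged sets then follows because they sit on opposite sides of this single $\Sigma$. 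The existence of such an "intermediate" Cauchy surface is again a standard fact about globally hyperbolic spacetimes. Once $W_+, W_-$ are constructed this way, $\iota_{U_\pm}^M$ factors through $\iota_{W_\pm}^M$ (and $W_\pm$ still Cauchy since it contains the Cauchy set $U_\pm$ already did — a superset of a set containing a Cauchy surface contains a Cauchy surface), so we get morphisms $\iota_{\und U}^M \to \iota_{\und W}^M \leftarrow \iota_{\und V}^M$ in $\PP_M$, establishing connectedness.

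The main obstacle I anticipate is the geometric bookkeeping in step (c): ensuring that the enlarged sets $W_\pm$ remain causally convex while simultaneously fitting into the future/past of a single Cauchy surface $\Sigma$. This is where one genuinely uses properties of globally hyperbolic manifolds — the ability to find Cauchy surfaces "below" a given compact set and "between" two given Cauchy surfaces — rather than anything formal about the poset structure; the categorical part (once the right $W_\pm$ exist, the morphisms are forced) is immediate from Definition \ref{def:PM}.
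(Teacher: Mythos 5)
Your non-emptiness argument is fine (the paper does it even more directly, taking $U_\pm = I^\pm_M(\Sigma)$ themselves, which are already causally convex, open and contain Cauchy surfaces). The connectedness argument, however, rests on a strategy that cannot work: you are trying to join $\und{U}$ and $\und{V}$ by a common \emph{upper} bound $\und{W}$ with $W_\pm \supseteq U_\pm \cup V_\pm$, but the poset $\PP_M$ is not directed, and in general no such $\und{W}$ exists. The obstruction is not the ``geometric bookkeeping'' you anticipate but is already visible at the level of disjointness: it can happen that $U_+ \cap V_- \neq \emptyset$, in which case any $W_+ \supseteq U_+$ and $W_- \supseteq V_-$ fail to be disjoint, so $(W_+,W_-)$ is not even a pair of disjoint subsets, let alone separated by a single Cauchy surface. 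Concretely, in $2$-dimensional Minkowski space take $U_\pm = I^\pm_M(\{t=0\})$ and $V_\pm = I^\pm_M(\{t=1\})$; then $U_+ \cap V_- \supseteq \{0<t<1\} \neq \emptyset$. The same tension shows up in your choice of $\Sigma$: to get $U_+\cup V_+ \subseteq I^+_M(\Sigma)$ you must push $\Sigma$ into the past of both $\Sigma^U$ and $\Sigma^V$, while to get $U_-\cup V_- \subseteq I^-_M(\Sigma)$ you must push it into their future; the word ``symmetrically'' conceals an outright contradiction whenever $\Sigma^U \neq \Sigma^V$.

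The paper's proof goes in the opposite direction, and this is the idea your proposal is missing. Every object $\und{U}$ with witnessing Cauchy surface $\Sigma$ admits a morphism \emph{into} the maximal object $\und{\Sigma} = (I^+_M(\Sigma), I^-_M(\Sigma))$, so one only needs to connect $\und{\Sigma}$ and $\und{\Sigma}^\prime$ for two Cauchy surfaces $\Sigma,\Sigma^\prime$. This is done by a common \emph{lower} bound built from intersections, $\widetilde{U}_\pm := I^\pm_M(\Sigma)\cap I^\pm_M(\Sigma^\prime)$, yielding the zig-zag $\und{\Sigma} \leftarrow \und{\widetilde{U}} \to \und{\Sigma}^\prime$ (intersections of causally convex sets are automatically causally convex and disjointness is inherited, so conditions (a) and (b) are free). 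The genuinely geometric input is then to show that $I^+_M(\Sigma)\cap I^+_M(\Sigma^\prime)$ still contains a Cauchy surface of $M$, which the paper does by explicitly assembling one from pieces of Cauchy surfaces $\Sigma_1\subset I^+_M(\Sigma)$ and $\Sigma_1^\prime\subset I^+_M(\Sigma^\prime)$, namely $\widetilde{\Sigma} = \big(\Sigma_1\cap J^+_M(\Sigma_1^\prime)\big)\cup\big(J^+_M(\Sigma_1)\cap\Sigma_1^\prime\big)$. If you replace your union construction by this intersection construction, the rest of your categorical reasoning (morphisms are forced once the intermediate object exists) goes through.
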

\begin{proof}
{\em Non-empty:}  Choose any Cauchy surface $\Sigma$ of $M$ 
and define $\Sigma_\pm := I^\pm_M(\Sigma)$. Then
$\iota_{\und{\Sigma}}^M = (\iota_{\Sigma_+}^M,\iota_{\Sigma_-}^M) : \und{\Sigma}\to M$ 
defines an object in $\PP_M$.
\sk

{\em Connected:} We have to prove that there exists a zig-zag of morphisms
in $\PP_M$ between every pair of objects $\iota_{\und{U}}^M :  \und{U}\to M$ 
and $\iota_{\und{V}}^M :  \und{V}\to M$. 
For every object $\iota_{\und{U}}^M: \und{U}\to M$ in $\PP_M$,
there exists by hypothesis a Cauchy surface $\Sigma$ of $M$
such that $U_\pm \subseteq \Sigma_\pm : = I^\pm_M(\Sigma)$.
Hence, there exists a morphism $(\iota_{\und{U}}^M: \und{U}\to M)\to (\iota_{\und{\Sigma}}^M : \und{\Sigma}\to M)$.
As a consequence, our original problem reduces to finding a zig-zag
of morphisms in $\PP_M$ between $\iota_{\und{\Sigma}}^M : \und{\Sigma}\to M$
and $\iota_{\und{\Sigma}^\prime}^M : \und{\Sigma}^\prime \to M$, for any two Cauchy
surfaces $\Sigma,\Sigma^\prime$ of $M$. To exhibit such a zig-zag, let us introduce 
$\widetilde{U}_+ := \Sigma_+ \cap \Sigma^\prime_+$ and 
$\widetilde{U}_- := \Sigma_- \cap \Sigma^\prime_-$. 
If we could prove that $\iota_{\widetilde{U}_\pm}^M : \widetilde{U}_\pm \stackrel{c}{\to} M$
are Cauchy morphisms, then 
\begin{flalign}
\big(\iota_{\und{\Sigma}}^M : \und{\Sigma}\to M\big) \,\longleftarrow \,
\big(\iota_{\und{\widetilde U}}^M : \und{\widetilde{U}}\to M\big) \,\longrightarrow\,
\big(\iota_{\und{\Sigma}^\prime}^M : \und{\Sigma}^\prime\to M\big)
\end{flalign}
would provide a zig-zag that proves connectedness of $\PP_M$.
\sk

It remains to show that $\widetilde{U}_+ = \Sigma_+ \cap \Sigma^\prime_+ = 
I^+_M(\Sigma) \cap I^+_M(\Sigma^\prime)\subseteq M$ contains a Cauchy surface of $M$. (A similar argument
shows that $\widetilde{U}_-\subseteq M$ also contains a Cauchy surface of $M$.)
Because $\Sigma,\Sigma^\prime$ are by hypothesis Cauchy surfaces of $M$, there
exists a Cauchy surface $\Sigma_1\subset I^+_M(\Sigma)$ of $M$ in the future of $\Sigma$
and a Cauchy surface $\Sigma_1^\prime \subset I^+_M(\Sigma^\prime)$ of $M$ in the future of $\Sigma^\prime$.
We define the subset
\begin{flalign}\label{eqn:widetildeSigma}
\widetilde{\Sigma} \,:=\, \big(\Sigma_1 \cap J^+_M(\Sigma_1^\prime)\big) \cup \big(J^+_M(\Sigma_1)\cap \Sigma_1^\prime\big) \subset \widetilde{U}_+ \subseteq M
\end{flalign}
and claim that $\widetilde{\Sigma}$ is a Cauchy surface of $M$.
To prove the last statement, consider any inextensible time-like curve $\gamma : I\to M$,
which we may assume without loss of generality to be future directed. 
(If $\gamma$ would be past directed, then change the orientation of the interval $I$.)
Because $\Sigma_1$ and $\Sigma_1^\prime$ are Cauchy surfaces of $M$,
there exist {\em unique} $t,t^\prime\in I$ such that $\gamma(t)\in\Sigma_1$ and 
$\gamma(t^\prime)\in\Sigma_1^\prime$. If $t\geq t^\prime$, then
$\gamma(t)\in \Sigma_1 \cap J^+_M(\Sigma_1^\prime)\subseteq \widetilde{\Sigma}$,
and if $t^\prime \geq t$, then $\gamma(t^\prime) \in J^+_M(\Sigma_1)\cap 
\Sigma_1^\prime\subseteq \widetilde{\Sigma}$. Hence, $\gamma$
meets $\widetilde{\Sigma}\subset M$ at least once. Multiple intersections
are excluded by the definition of  $\widetilde{\Sigma}$ in \eqref{eqn:widetildeSigma}
and the fact that both $\Sigma_1$ and $\Sigma_1^\prime$ are Cauchy surfaces of $M$.
\end{proof}

\begin{cor}\label{cor:multunique}
For every $M\in\Loc$, the multiplication map $\mu_M$ in \eqref{eqn:multiplicationmap}
does not depend on the choice of object $\iota_{\und{U}}^M : \und{U}\to M$ in $\PP_M$.
\end{cor}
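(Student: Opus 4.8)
The plan is to leverage Lemma \ref{lem:PPMconnected}, which tells us that $\PP_M$ is non-empty and connected, together with the functoriality and permutation-equivariance axioms for prefactorization algebras. First I would define, for each object $\iota_{\und{U}}^M : \und{U}\to M$ of $\PP_M$, the candidate multiplication $\mu_M^{\und{U}} : \FFF(M)\otimes\FFF(M)\to\FFF(M)$ via the commutative diagram \eqref{eqn:multiplicationmap}, i.e.\ as the composite $\FFF(\iota_{\und{U}}^M)\circ\big(\FFF(\iota_{U_+}^M)\otimes\FFF(\iota_{U_-}^M)\big)^{-1}$, which makes sense precisely because Cauchy constancy guarantees the second factor is invertible. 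The claim to prove is that $\mu_M^{\und{U}}$ is independent of the chosen object. Since $\PP_M$ is connected, any two objects are joined by a zig-zag of morphisms, so it suffices to show that $\mu_M^{\und{U}} = \mu_M^{\und{V}}$ whenever there is a (necessarily unique) morphism $(\iota_{\und{U}}^M:\und{U}\to M)\to(\iota_{\und{V}}^M:\und{V}\to M)$ in $\PP_M$, that is, whenever $U_\pm\subseteq V_\pm$.

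For such a morphism, the inclusions $\iota_{U_\pm}^{V_\pm} : U_\pm\to V_\pm$ are $\Loc$-morphisms satisfying $\iota_{V_\pm}^M\circ\iota_{U_\pm}^{V_\pm} = \iota_{U_\pm}^M$, and $(\iota_{U_+}^{V_+},\iota_{U_-}^{V_-})$ is a pair of disjoint morphisms into... well, there is a subtlety here: the two inclusions have different targets $V_+$ and $V_-$. The cleaner way is to use the composition axiom \eqref{eqn:FAcomp} directly for the disjoint tuple $\iota_{\und{V}}^M = (\iota_{V_+}^M,\iota_{V_-}^M):\und{V}\to M$ precomposed with the single morphisms $\und{g}_1 = (\iota_{U_+}^{V_+})$ and $\und{g}_2 = (\iota_{U_-}^{V_-})$; this gives $\FFF(\iota_{\und{U}}^M) = \FFF(\iota_{\und{V}}^M)\circ\big(\FFF(\iota_{U_+}^{V_+})\otimes\FFF(\iota_{U_-}^{V_-})\big)$. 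Substituting this into the defining formula for $\mu_M^{\und{U}}$ and using the functoriality axiom $\FFF(\iota_{V_\pm}^M)\circ\FFF(\iota_{U_\pm}^{V_\pm}) = \FFF(\iota_{U_\pm}^M)$ to rewrite the two inverse factors, the maps $\FFF(\iota_{U_\pm}^{V_\pm})$ and their inverses cancel in pairs, leaving exactly $\mu_M^{\und{V}}$.

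I expect the main (though modest) obstacle to be bookkeeping: correctly identifying which instance of the composition axiom \eqref{eqn:FAcomp} to invoke given that $U_+\subseteq V_+$ and $U_-\subseteq V_-$ lie in \emph{different} targets, and being careful that all the morphisms $\FFF(\iota_{U_\pm}^{V_\pm})$ appearing are themselves isomorphisms (which follows because both $\iota_{U_\pm}^M$ and $\iota_{V_\pm}^M$ are Cauchy morphisms, hence so is $\iota_{U_\pm}^{V_\pm}$ by the two-out-of-three property, and $\FFF$ is Cauchy constant) — this last point is what makes the cancellation legitimate rather than merely formal. Once the single-morphism case is established, connectedness of $\PP_M$ closes the argument: transport along the zig-zag shows all the $\mu_M^{\und{U}}$ agree, so $\mu_M$ is well defined.
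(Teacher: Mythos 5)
Your proposal is correct and follows essentially the same route as the paper's proof: reduce via connectedness of $\PP_M$ (Lemma \ref{lem:PPMconnected}) to the case of a single morphism $U_\pm\subseteq V_\pm$ in $\PP_M$, then combine the composition axiom \eqref{eqn:FAcomp} with functoriality to identify the two composites. (A minor simplification: the invertibility of $\FFF(\iota_{U_\pm}^{V_\pm})$ is not actually needed for the cancellation, since the identity $\FFF(\iota_{V_\pm}^{M})^{-1}\circ\FFF(\iota_{U_\pm}^{M})=\FFF(\iota_{U_\pm}^{V_\pm})$ already suffices to rewrite $\mu_M^{\und{U}}$ as $\mu_M^{\und{V}}$.)
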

\begin{proof}
By Lemma \ref{lem:PPMconnected}, it is sufficient to prove that
$\iota_{\und{U}}^M : \und{U}\to M$ and $\iota_{\und{V}}^M : \und{V}\to M$
define the same multiplication if $U_+ \subseteq V_+$ and $U_-\subseteq V_-$. This is a consequence
of the commutative diagram
\begin{flalign}
\xymatrix@C=4em{
&\ar[dl]^-{\cong}_-{\FFF(\iota_{V_+}^M)\otimes \FFF(\iota_{V_-}^M)~~~~~}\FFF(V_+)\otimes \FFF(V_-)\ar[dr]^-{\FFF(\iota_{\und{V}}^M)} & \\
\FFF(M)\otimes \FFF(M) && \FFF(M)\\
&\ar[lu]_-{\cong}^-{\FFF(\iota_{U_+}^M)\otimes \FFF(\iota_{U_-}^M)~~~~~} \FFF(U_+)\otimes \FFF(U_-)\ar[ru]_-{\FFF(\iota_{\und{U}}^M)}\ar[uu]_-{\FFF(\iota_{U_+}^{V_+})\otimes \FFF(\iota_{U_-}^{V_-})}&
}
\end{flalign}
where one also uses the composition properties \eqref{eqn:FAcomp} 
of prefactorization algebras.
\end{proof}

To obtain a unit for $\FFF(M)$, we recall that there exists a unique empty tuple
of disjoint morphisms $\emptyset\to M$ to which the prefactorization algebra
assigns a $\CC$-morphism that we shall denote by $\eta_M : I\to \FFF(M)$.
The main result of this subsection is as follows.
\begin{propo}\label{propo:associativeunital}
Let $\FFF\in\PFA^c$ be any Cauchy constant prefactorization algebra.
For every $M\in\Loc$, the object $\FFF(M)\in\CC$ carries the structure
of an associative and unital algebra in $\CC$ with multiplication $\mu_M : \FFF(M)\otimes\FFF(M)\to \FFF(M)$ 
given by \eqref{eqn:multiplicationmap} and unit $\eta_M : I\to \FFF(M)$ given by evaluating 
$\FFF$ on the empty tuple $\emptyset \to M$.
\end{propo}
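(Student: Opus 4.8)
The plan is to establish the two algebra axioms---unitality and associativity---separately, in each case by choosing a convenient object of $\PP_M$ (which is legitimate by Corollary \ref{cor:multunique}) and then reducing the claim to the composition and permutation properties \eqref{eqn:FAcomp}, \eqref{eqn:FAperm} of the prefactorization algebra $\FFF$, together with Cauchy constancy. The guiding principle throughout is that every structure map in sight ($\mu_M$, $\eta_M$, the associator, the unitors) can be expressed as $\FFF$ applied to an appropriate tuple of $\Loc$-morphisms with pairwise disjoint images, so that an identity between such maps will follow from an identity between the corresponding tuples after composition in $\Loc$.

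First I would treat \textbf{unitality}. Fix an object $\iota_{\und U}^M=(\iota_{U_+}^M,\iota_{U_-}^M):\und U\to M$ of $\PP_M$. The key observation is that the empty tuple $\emptyset\to M$ can be written as a composition: it equals $\iota_{U_+}^M$ precomposed with the empty tuple $\emptyset\to U_+$ in the first slot while keeping the $U_-$-slot, or more precisely one uses \eqref{eqn:FAcomp} with $\und f=\iota_{\und U}^M$ and with $\und g_1$ the empty tuple $\emptyset\to U_+$ (so $k_1=0$) and $\und g_2=(\id_{U_-})$. Composing in $\Loc$, the tuple $\iota_{\und U}^M(\emptyset,\id_{U_-})$ is just the $1$-tuple $\iota_{U_-}^M:U_-\to M$, so \eqref{eqn:FAcomp} gives that $\FFF(\iota_{\und U}^M)\circ(\eta_{U_+}\otimes\id_{\FFF(U_-)})=\FFF(\iota_{U_-}^M)$, where $\eta_{U_+}:I\to\FFF(U_+)$ is $\FFF$ on the empty tuple $\emptyset\to U_+$. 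Chasing this through the defining diagram \eqref{eqn:multiplicationmap} of $\mu_M$, using that $\FFF(\iota_{U_+}^M)$ and $\FFF(\iota_{U_-}^M)$ are isomorphisms and that $\eta_M=\FFF(\iota_{U_+}^M)\circ\eta_{U_+}$ (again by \eqref{eqn:FAcomp}, composing $\iota_{U_+}^M$ with $\emptyset\to U_+$), one obtains $\mu_M\circ(\eta_M\otimes\id)=\id_{\FFF(M)}$. The other unit law $\mu_M\circ(\id\otimes\eta_M)=\id$ is identical with the roles of $U_+$ and $U_-$ exchanged, which is harmless since $\PP_M$ does not distinguish the two slots up to the permutation property \eqref{eqn:FAperm}.

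Next, \textbf{associativity}. Here the main geometric input is needed: I claim one can find causally convex open subsets $U_1,U_2,U_3\subseteq M$, pairwise disjoint, such that $(\iota_{U_1}^M,\iota_{U_2}^M)$, $(\iota_{U_2}^M,\iota_{U_3}^M)$, $(\iota_{U_1}^M,\iota_{U_3}^M)$ and suitable ``merged'' pairs all define objects of $\PP_M$, i.e.\ each of $U_1,U_2,U_3$ is Cauchy (contains a Cauchy surface) and one can order them in time via two nested Cauchy surfaces $\Sigma<\Sigma'$ with $U_1\subseteq I^+_M(\Sigma')$, $U_2\subseteq I^+_M(\Sigma)\cap I^-_M(\Sigma')$, $U_3\subseteq I^-_M(\Sigma)$. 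Constructing such a configuration is the natural analogue of the nonemptiness argument in Lemma \ref{lem:PPMconnected}, applied with three ``time slabs'' instead of two; this is the step I expect to be the main obstacle, since one must simultaneously arrange that each individual region is Cauchy and that the various pairs satisfy condition (i) of Definition \ref{def:PM} with possibly different Cauchy surfaces. Once such $U_1,U_2,U_3$ are in hand, both $\mu_M\circ(\mu_M\otimes\id)$ and $\mu_M\circ(\id\otimes\mu_M)$ are computed, via \eqref{eqn:multiplicationmap}, \eqref{eqn:FAcomp} and \eqref{eqn:FAperm}, to equal $\FFF$ applied to the $3$-tuple $(\iota_{U_1}^M,\iota_{U_2}^M,\iota_{U_3}^M):\und U\to M$, transported along the isomorphisms $\FFF(\iota_{U_i}^M)$: in one bracketing one first merges slots $1,2$ using the pair $(\iota_{U_1}^M,\iota_{U_2}^M)$ viewed as an element of $\PP_{U_{12}}$ for a suitable relatively compact causally convex $U_{12}$ and then merges with slot $3$; in the other one merges $2,3$ first; both equal the single factorization product of the $3$-tuple by the composition axiom, so they agree. (A small technical point: to view an intermediate pair as an object of some $\PP_{U_{12}}$ one needs that $U_1,U_2$ are Cauchy \emph{in} $U_{12}$, which the time-slab construction arranges; alternatively one can stay entirely inside $M$ and merge using three-fold composition of $\iota^M$'s, invoking \eqref{eqn:FAcomp} directly with $n=3$.) This yields $(\mu_M\otimes\id)\,\mu_M = (\id\otimes\mu_M)\,\mu_M$ after cancelling the isomorphisms $\FFF(\iota_{U_i}^M)$, completing the proof.
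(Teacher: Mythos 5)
Your proposal is correct and follows essentially the same route as the paper: unitality via the composition axiom applied to empty tuples, and associativity by reducing both bracketings to the single ternary factorization product of three time-ordered slabs cut out by two nested Cauchy surfaces. The step you flag as the main obstacle is handled exactly as you suggest (and as in the paper) by taking $U_1=I^+_M(\Sigma_1)$, $U_2=I^-_M(\Sigma_1)\cap I^+_M(\Sigma_0)$, $U_3=I^-_M(\Sigma_0)$ for Cauchy surfaces $\Sigma_1\subset I^+_M(\Sigma_0)$, each slab automatically containing a Cauchy surface of $M$, so there is no real difficulty there.
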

\begin{proof}
To prove that the multiplication $\mu_M$ is associative, we consider two Cauchy surfaces
$\Sigma_0,\Sigma_1$ of $M$ such that $\Sigma_1 \subset I^+_M(\Sigma_0)$, i.e.\
$\Sigma_1$ is in the future of $\Sigma_0$. Using the independence
result from Corollary \ref{cor:multunique} and the composition properties
of prefactorization algebras from Section \ref{subsec:PFA}, one easily confirms
that $\mu_M\, (\id\otimes \mu_M)$ is the upper path and $\mu_M\, (\mu_M\otimes \id )$
the lower path from $\FFF(M)^{\otimes 3}$ to $\FFF(M)$ in the commutative diagram
\begin{flalign}
\xymatrix@C=4em{
\ar[d]^-{\cong}_-{\FFF(\iota_{\Sigma_{1+}}^M)\otimes\FFF(\iota_{\Sigma_{1-}\cap\Sigma_{0+}}^M)\otimes\FFF(\iota_{\Sigma_{0-}}^M)}
\FFF(\Sigma_{1+})\otimes \FFF(\Sigma_{1-}\cap \Sigma_{0+})\otimes\FFF(\Sigma_{0-})
\ar[rr]^-{\id\otimes \FFF(\iota_{\Sigma_{1-}\cap\Sigma_{0+}}^{\Sigma_{1-}} , \iota_{\Sigma_{0-}}^{\Sigma_{1-}} )}&&
\FFF(\Sigma_{1+})\otimes \FFF(\Sigma_{1-}) \ar[d]^-{\FFF(\iota_{\Sigma_{1+}}^M , \iota_{\Sigma_{1-}}^M)}\\
\FFF(M)\otimes\FFF(M)\otimes\FFF(M) && \FFF(M)\\
\ar[u]_-{\cong}^-{\FFF(\iota_{\Sigma_{1+}}^M)\otimes\FFF(\iota_{\Sigma_{1-}\cap\Sigma_{0+}}^M)\otimes\FFF(\iota_{\Sigma_{0-}}^M)}
\FFF(\Sigma_{1+})\otimes \FFF(\Sigma_{1-}\cap \Sigma_{0+})\otimes\FFF(\Sigma_{0-}) 
\ar[rr]_-{\FFF(\iota_{\Sigma_{1+}}^{\Sigma_{0+}} , \iota_{\Sigma_{1-}\cap \Sigma_{0+}}^{\Sigma_{0+}}  ) \otimes\id }&&
\FFF(\Sigma_{0+})\otimes\FFF(\Sigma_{0-})\ar[u]_-{\FFF(\iota_{\Sigma_{0+}}^M , \iota_{\Sigma_{0-}}^M)}
}
\end{flalign}
where as before we denote by $\Sigma_{\pm } := I^\pm_M(\Sigma) \subseteq M$ the chronological
future/past of a Cauchy surface $\Sigma$ of $M$.
Unitality of the product follows immediately from the fact that there exists a unique
empty tuple  $\emptyset\to N$ for each $N\in\Loc$ and the composition properties \eqref{eqn:FAcomp}
of prefactorization algebras.
\end{proof}

\subsection{Naturality of algebra structures}
The aim of this subsection is to investigate compatibility between the algebra structures
from Proposition \ref{propo:associativeunital} and the maps $\FFF(f): \FFF(M)\to\FFF(N)$ induced by
$\Loc$-morphisms. For our main statement to be true 
it will be crucial to assume that $\FFF \in \PFA^{\addc}$
is a Cauchy constant additive prefactorization algebra
in the sense of Definitions \ref{def:additivePFA} and \ref{def:CauchyconstantPFA}.
As a first partial result, we prove the following general statement.
\begin{lem}\label{lem:RCmultcompatibility}
Let $\FFF\in\PFA^c$ be any Cauchy constant prefactorization algebra
(not necessarily additive). Let further $f:M\to N$ be a $\Loc$-morphism
such that the image $f(M)\subseteq N$ is relatively compact. Then $\FFF(f) : \FFF(M)\to\FFF(N)$
preserves the multiplications and units from Proposition \ref{propo:associativeunital}, i.e.\
$\mu_N\,(\FFF(f)\otimes\FFF(f)) = \FFF(f)\,\mu_M$ and $\eta_N = \FFF(f)\, \eta_M$.
\end{lem}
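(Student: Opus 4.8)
The plan is to reduce the general relatively compact morphism $f:M\to N$ to the situation covered by the very definition of the multiplication, namely \eqref{eqn:multiplicationmap}, by choosing a suitable pair of disjoint morphisms into $M$ and pushing it forward along $f$. Concretely, I would first pick an object $\iota_{\und U}^M=(\iota_{U_+}^M,\iota_{U_-}^M):\und U\to M$ in $\PP_M$, which exists by Lemma \ref{lem:PPMconnected}. Then I would apply $f$ and consider the pair of disjoint morphisms $(f\,\iota_{U_+}^M,f\,\iota_{U_-}^M):\und U\to N$, obtained by composition in $\Loc$; note that $f(U_\pm)\subseteq N$ are causally convex open subsets (since $f$ is a $\Loc$-morphism with causally convex open image and $U_\pm$ are causally convex in $M$), and they are disjoint because $U_+\cap U_-=\emptyset$.

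\textbf{Key steps.} The heart of the argument is to check that $(f\,\iota_{U_+}^M,f\,\iota_{U_-}^M)$, viewed as a pair into $N$, defines an object of $\PP_N$, i.e.\ satisfies conditions (i) and (ii) of Definition \ref{def:PM}. For (ii): $\iota_{U_\pm}^M:U_\pm\stackrel{c}{\to}M$ are Cauchy morphisms, so I need that $f\,\iota_{U_\pm}^M:U_\pm\to N$ are Cauchy morphisms as well; this is where the relative compactness of $f(M)\subseteq N$ enters, since it lets me find a Cauchy surface of $N$ inside $f(M)$ (hence inside $f(U_\pm)$ after a further adjustment), using standard Lorentzian geometry arguments of the type already invoked in the proofs of Lemmas \ref{lem:RCMdirected} and \ref{lem:PPMconnected}. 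For (i): a Cauchy surface $\Sigma$ of $M$ with $U_\pm\subseteq I^\pm_M(\Sigma)$ must be transported to a Cauchy surface of $N$ with $f(U_\pm)\subseteq I^\pm_N(\cdot)$; since $f$ is an isometric embedding preserving time-orientation, $f(\Sigma)$ behaves correctly on $f(M)$, and one completes it to a Cauchy surface of $N$ by the same kind of construction as in the proof of Lemma \ref{lem:PPMconnected}, again using relative compactness of $f(M)$. Granting this, I would then write the two defining commutative triangles \eqref{eqn:multiplicationmap} for $\mu_M$ (using $\iota_{\und U}^M$) and for $\mu_N$ (using $(f\,\iota_{U_+}^M,f\,\iota_{U_-}^M)$), and glue them along the square expressing $\FFF(f\,\iota_{U_\pm}^M)=\FFF(f)\,\FFF(\iota_{U_\pm}^M)$, which is the composition property \eqref{eqn:FAcomp} of prefactorization algebras; the resulting commuting diagram is exactly $\mu_N\,(\FFF(f)\otimes\FFF(f))=\FFF(f)\,\mu_M$. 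The unit statement $\eta_N=\FFF(f)\,\eta_M$ is immediate: both sides equal $\FFF$ evaluated on the unique empty tuple $\emptyset\to N$, using that $\FFF(f)\,\eta_M=\FFF(f\circ(\emptyset\to M))$ by \eqref{eqn:FAcomp} and that $f\circ(\emptyset\to M)=(\emptyset\to N)$ by uniqueness of empty tuples.

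\textbf{Main obstacle.} I expect the only real work to be the verification that $(f\,\iota_{U_+}^M,f\,\iota_{U_-}^M)$ lies in $\PP_N$, i.e.\ the two Lorentzian-geometric points (i) and (ii) above, since the diagram chase afterwards is purely formal. The relative compactness hypothesis on $f(M)$ is precisely what is needed here: without it one cannot guarantee that $f(U_\pm)$ still contains a Cauchy surface of the (possibly much larger) $N$, so $f\,\iota_{U_\pm}^M$ need not be a Cauchy morphism, and the argument breaks down — this is consistent with the remark in the text that additivity (hence a limiting argument over $\RC_M$) is needed to remove the relative compactness assumption in the naturality result of the next subsection. One technical subtlety worth care: after transporting, the naive images $f(U_\pm)$ may not literally lie in the chronological future/past of a single Cauchy surface of $N$, so I would first shrink to $\widetilde U_\pm\subseteq U_\pm$ still containing a Cauchy surface (as in the proof of Lemma \ref{lem:PPMconnected}) and, if necessary, invoke Corollary \ref{cor:multunique} to reduce $\mu_M$ to this smaller choice before pushing forward.
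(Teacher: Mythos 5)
There is a genuine gap at the step you yourself identify as the heart of the argument. You claim that the pushed-forward pair $(f\,\iota_{U_+}^M,f\,\iota_{U_-}^M):\und{U}\to N$ is an object of $\PP_N$, and in particular that $f\,\iota_{U_\pm}^M:U_\pm\to N$ are Cauchy morphisms because relative compactness of $f(M)\subseteq N$ ``lets me find a Cauchy surface of $N$ inside $f(M)$.'' This is backwards: relative compactness of $f(M)$ makes it essentially impossible (rather than possible) for $f(M)$, let alone $f(U_\pm)$, to contain a Cauchy surface of $N$. Take $M$ to be a small diamond embedded in Minkowski spacetime $N$: the image is relatively compact and causally convex, but no subset of it contains a Cauchy surface of $N$, since those are closed and non-compact. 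So condition (ii) of Definition \ref{def:PM} fails for your candidate pair, and no amount of shrinking $U_\pm$ (your proposed ``further adjustment'') can repair this --- shrinking only makes it worse. The diagram chase you describe therefore cannot be glued to the defining triangle \eqref{eqn:multiplicationmap} for $\mu_N$, because your pair is not an admissible choice for computing $\mu_N$.

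The correct use of relative compactness is different, and it is the one the paper makes. Choose a Cauchy surface $\Sigma$ of $M$. Relative compactness of $f(M)$ gives that $\overline{f(\Sigma)}\subseteq N$ is compact, and causal convexity of $f(M)$ together with openness of the chronological relation gives that it is achronal; by \cite[Theorem~3.8]{BernalSanchez} it therefore \emph{extends} to a Cauchy surface $\widetilde{\Sigma}$ of $N$ with $f(\Sigma)\subseteq\widetilde{\Sigma}$. One then computes $\mu_M$ with the $\PP_M$-object $\Sigma_\pm:=I^\pm_M(\Sigma)$ and $\mu_N$ with the $\PP_N$-object $\widetilde{\Sigma}_\pm:=I^\pm_N(\widetilde{\Sigma})$ --- note these are genuinely larger than $f(\Sigma_\pm)$ and do contain Cauchy surfaces of $N$. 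Since $f(\Sigma_\pm)\subseteq\widetilde{\Sigma}_\pm$, the morphism $f$ restricts to $\Loc$-morphisms $f^{\widetilde{\Sigma}_\pm}_{\Sigma_\pm}:\Sigma_\pm\to\widetilde{\Sigma}_\pm$, and the factorizations $f\,\iota_{\Sigma_\pm}^M=\iota_{\widetilde{\Sigma}_\pm}^N\,f^{\widetilde{\Sigma}_\pm}_{\Sigma_\pm}$ together with the composition property \eqref{eqn:FAcomp} give the commuting diagram that proves $\mu_N\,(\FFF(f)\otimes\FFF(f))=\FFF(f)\,\mu_M$. Your treatment of the units is correct and agrees with the paper.
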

\begin{proof}
The units are clearly preserved for every $\Loc$-morphism $f:M\to N$
because composing the unique empty tuple $\emptyset \to M$ with $f:M\to N$
yields the unique empty tuple $\emptyset\to N$.
\sk

Let us focus now on the multiplications. Because $f(M)\subseteq N$ is by hypothesis
relatively compact, its closure $\overline{f(M)}\subseteq N$ is compact.
Let us take any Cauchy surface $\Sigma$ of $M$ and note that 
$\overline{f(\Sigma)} \subseteq N$ is a compact subset. Using further that $f(M)\subseteq N$
is causally convex and that the causality relation induced by time-like curves 
is open (cf.\ \cite[Lemma~14.3]{ONeill}), it follows that $\overline{f(\Sigma)} \subseteq N$ 
is achronal, i.e.\ every time-like curve in $N$ meets this subset at most once.
By \cite[Theorem~3.8]{BernalSanchez}, there exists a Cauchy surface $\widetilde{\Sigma}$ of $N$
such that $f(\Sigma) \subseteq \widetilde{\Sigma}$.
\sk

Using the Cauchy surfaces constructed above, we can define the multiplication
$\mu_{M}$ in terms of $\Sigma_\pm := I^\pm_M(\Sigma)$
and the multiplication $\mu_N $ in terms of $\widetilde{\Sigma}_\pm := I^\pm_N(\widetilde{\Sigma})$,
cf.\ \eqref{eqn:multiplicationmap}. By construction, $f:M\to N$ restricts to $\Loc$-morphisms 
$f_{\Sigma_\pm}^{\widetilde{\Sigma}_\pm} : \Sigma_\pm \to \widetilde{\Sigma}_\pm$. 
Our claim that  $\FFF(f) : \FFF(M)\to \FFF(N)$
preserves the multiplications then follows by observing that the diagram
\begin{flalign}
\xymatrix@C=7em{
\ar[d]_-{\FFF(f)\otimes\FFF(f)} \FFF(M)\otimes \FFF(M) & 
\ar[l]^-{\cong}_-{\FFF(\iota_{\Sigma_+}^M)\otimes\FFF(\iota_{\Sigma_-}^M)} \ar[d]^-{\FFF(f_{\Sigma_+}^{\widetilde{\Sigma}_+} )\otimes \FFF(f_{\Sigma_-}^{\widetilde{\Sigma}_-} )} \FFF(\Sigma_+)\otimes \FFF(\Sigma_-) \ar[r]^-{\FFF(\iota_{\und{\Sigma}}^M)}& 
\ar[d] ^-{\FFF(f)} \FFF(M)\\
\FFF(N)\otimes \FFF(N) &
 \ar[l]_-{\cong}^-{\FFF(\iota_{\widetilde{\Sigma}_+}^N) \otimes \FFF(\iota_{\widetilde{\Sigma}_-}^N) } \FFF(\widetilde{\Sigma}_+)\otimes \FFF(\widetilde{\Sigma}_-) \ar[r]_-{\FFF(\iota_{\und{\widetilde{\Sigma}}}^N)}&
\FFF(N)\\
}
\end{flalign}
commutes.
\end{proof}
\begin{rem}
We would like to emphasize that our assumption that the image $f(M)\subseteq N$ 
is relatively compact was crucial for the proof of Lemma \ref{lem:RCmultcompatibility}. 
In fact, if one {\em does not} assume that the image of the $\Loc$-morphism 
$f: M\to N$ is relatively compact, then it is {\em not} true that the image $f(\Sigma)\subset N$
of a Cauchy surface $\Sigma$ of $M$ can be extended to a Cauchy surface $\widetilde{\Sigma}$
of $N$. A simple example that demonstrates this feature is given by the subset inclusion
$\iota_{U}^V : U\to V$ of the following two diamond regions in $2$-dimensional 
Minkowski spacetime (note that $U$ is not relatively compact as a subset of $V$):
\begin{flalign}\label{eqn:nonrelativelycompact}
\begin{tikzpicture}[scale=1.5]
\draw[fill=gray!5] (-1,0) -- (0,1) -- (1,0) -- (0,-1) -- (-1,0);
\draw[fill=gray!20] (-0.5,0.5) -- (0,1) -- (0.5,0.5) -- (0,0) -- (-0.5,0.5);
\draw (0,-0.5) node {{\footnotesize $V$}};
\draw (0,0.75) node {{\footnotesize $U$}}; 
\draw[very thick, dashed] (-0.5,0.5) .. controls (0,0.55) .. (0.5, 0.5) node[midway, below] {{\footnotesize $\Sigma$}};
\draw[very thick, ->] (-1.25,-0.8) -- (-1.25,0.8) node[above] {{\footnotesize time}};
\end{tikzpicture}
\end{flalign}
It is evident that no Cauchy surface $\Sigma$ of $U$ admits an extension to a Cauchy surface of $V$.
Hence, $\FFF(\iota_{U}^V) : \FFF(U)\to \FFF(V)$ may fail to preserve the multiplications.
We shall show below that the issues explained in this remark are solved by 
considering {\em additive} prefactorization algebras as in Definition \ref{def:additivePFA}.
\end{rem}

The main result of this subsection is as follows.
\begin{propo}\label{propo:multnaturality}
Let $\FFF\in\PFA^{\addc}$ be any Cauchy constant additive 
prefactorization algebra. For every $\Loc$-morphism $f:M\to N$,
the $\CC$-morphism $\FFF(f) : \FFF(M)\to\FFF(N)$ preserves the multiplications and units 
from Proposition \ref{propo:associativeunital}. 
\end{propo}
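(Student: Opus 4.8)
The plan is to reduce the general case to the relatively-compact case already handled in Lemma~\ref{lem:RCmultcompatibility}, using the additivity isomorphism \eqref{eqn:additivePFA} to express $\FFF(M)$ and $\FFF(N)$ as colimits over $\RC_M$ and $\RC_N$ and the fact that these colimits are filtered (Lemma~\ref{lem:RCMdirected}) so they can be computed in the underlying monoidal category $\CC$ and commute with the tensor product $\otimes$. Units are preserved for any $\Loc$-morphism by the same argument as in Lemma~\ref{lem:RCmultcompatibility} (composing the empty tuple with $f$ gives the empty tuple), so only the compatibility with multiplications needs work.

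The steps are as follows. First I would fix a $\Loc$-morphism $f:M\to N$ and record that, since $\RC_M$ is a directed set, the canonical maps $\FFF(U)\to\FFF(M)$ for $U\in\RC_M$ exhibit $\FFF(M)$ as a filtered colimit in $\CC$, and likewise $\FFF(U)\otimes\FFF(U)\to\FFF(M)\otimes\FFF(M)$ exhibit $\FFF(M)\otimes\FFF(M)$ as a filtered colimit over $\RC_M$ (filtered colimits commute with $\otimes$ because $\otimes$ is cocontinuous in each variable, being a left adjoint in the closed monoidal category $\CC$, and the diagonal functor $\RC_M\to\RC_M\times\RC_M$ is final for directed $\RC_M$). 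Hence it suffices to check that $\mu_N\circ(\FFF(f)\otimes\FFF(f))$ and $\FFF(f)\circ\mu_M$ agree after precomposition with each cocone leg $\FFF(U)\otimes\FFF(U)\to\FFF(M)\otimes\FFF(M)$, $U\in\RC_M$. Second, for such a $U$, the restriction $f|_U : U\to N$ has relatively compact image $f(U)\subseteq f(M)\subseteq N$ — here one uses that $f$ is continuous, $U$ is relatively compact in $M$, and $f(M)$ need not be relatively compact in $N$ but $f(\overline U)$ is compact since $\overline U$ is compact and $f$ is continuous, so $\overline{f(U)}\subseteq f(\overline U)$ is compact; one also needs $f(U)$ to be causally convex and open in $N$, which holds because $U$ is causally convex and open in $M$ and $f(M)$ is causally convex and open in $N$, so causal convexity of $f(U)$ relative to $f(M)$ upgrades to causal convexity in $N$. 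Thus $f|_U$ is a $\Loc$-morphism with relatively compact image, and Lemma~\ref{lem:RCmultcompatibility} applies to it: $\FFF(f|_U)$ preserves multiplications, i.e.\ $\mu_N\circ(\FFF(f|_U)\otimes\FFF(f|_U)) = \FFF(f|_U)\circ\mu_U$. Third, I would paste this onto the naturality square coming from the inclusion $\iota_U^M : U\to M$ and the identity $f\circ\iota_U^M = \iota_{f(U)}^N \circ f|_U$ (as a composite of $\Loc$-morphisms, with $f|_U : U\to f(U)$), together with the fact — also from Lemma~\ref{lem:RCmultcompatibility}, applied to the inclusion $\iota_U^M$ when $M$ itself has relatively compact image in a suitable sense — no: more simply, I would note that the cocone legs $\FFF(\iota_U^M):\FFF(U)\to\FFF(M)$ are morphisms of algebras, which is exactly the content of Lemma~\ref{lem:RCmultcompatibility} for the relatively-compact-image morphism $\iota_U^M:U\to M$. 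Then $\FFF(f)\circ\FFF(\iota_U^M) = \FFF(f|_U)$ composed with $\FFF(\iota_{f(U)}^N)$, and both $\FFF(\iota_{f(U)}^N)$ and $\FFF(f|_U)$ are algebra morphisms, so $\FFF(f)$ restricted along the leg from $\FFF(U)\otimes\FFF(U)$ is an algebra morphism; since the legs are jointly epic (they form a colimit cocone) this forces $\FFF(f)$ itself to preserve $\mu$.

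To make the last step precise I would argue: for each $U\in\RC_M$ write $c_U := \FFF(\iota_U^M)\otimes\FFF(\iota_U^M)$ for the cocone leg into $\FFF(M)\otimes\FFF(M)$. Then
\begin{flalign*}
\mu_N\circ(\FFF(f)\otimes\FFF(f))\circ c_U
&= \mu_N\circ\big(\FFF(f\iota_U^M)\otimes\FFF(f\iota_U^M)\big)
= \FFF(f\iota_U^M)\circ\mu_U \\
&= \FFF(f)\circ\FFF(\iota_U^M)\circ\mu_U
= \FFF(f)\circ\mu_M\circ c_U,
\end{flalign*}
where the second equality uses Lemma~\ref{lem:RCmultcompatibility} applied to the morphism $f\iota_U^M : U\to N$, which has relatively compact image by the argument of Step~2, and the last equality uses that $\FFF(\iota_U^M)$ is an algebra morphism (Lemma~\ref{lem:RCmultcompatibility} applied to $\iota_U^M : U\to M$, whose image $U$ is relatively compact in $M$). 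Since $\{c_U\}_{U\in\RC_M}$ is a colimiting cocone for $\FFF(M)\otimes\FFF(M)$ by additivity and the cocontinuity of $\otimes$, the two morphisms $\mu_N\circ(\FFF(f)\otimes\FFF(f))$ and $\FFF(f)\circ\mu_M$ from $\FFF(M)\otimes\FFF(M)$ to $\FFF(N)$ coincide.

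The main obstacle I expect is the bookkeeping around commuting the (filtered) colimit with the monoidal product: one must justify carefully that $\RC_M$ being a directed set makes the diagonal $\RC_M \to \RC_M\times\RC_M$ a final functor, so that $\colim_{U}\big(\FFF(U)\otimes\FFF(U)\big) \cong \big(\colim_U\FFF(U)\big)\otimes\big(\colim_U\FFF(U)\big) = \FFF(M)\otimes\FFF(M)$, using that $\otimes$ preserves colimits separately in each variable (it is a left adjoint, $\CC$ being closed). This is the one place where additivity is genuinely used, and it is essential: without it the legs $c_U$ are not jointly epic and the reduction to the relatively-compact case (Lemma~\ref{lem:RCmultcompatibility}) fails, exactly as illustrated by the non-relatively-compact diamond example in the preceding remark.
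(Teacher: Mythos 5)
Your proof is correct and follows essentially the same route as the paper: use additivity and cocontinuity of $\otimes$ (plus directedness of $\RC_M$) to write $\FFF(M)\otimes\FFF(M)$ as $\colim_{U\in\RC_M}\big(\FFF(U)\otimes\FFF(U)\big)$, then check the identity on each cocone leg by applying Lemma~\ref{lem:RCmultcompatibility} twice — once to $\iota_U^M$ and once to $f\iota_U^M$, both of which have relatively compact image — and conclude by universality of the colimit. The paper packages this as a single commuting diagram \eqref{eqn:diagramformultcompatible} whose top and outer squares are the two instances of Lemma~\ref{lem:RCmultcompatibility} and whose triangles are functoriality, which is exactly your displayed chain of equalities.
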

\begin{proof}
We already observed in the proof of Lemma \ref{lem:RCmultcompatibility}
that $\FFF(f) $ preserves the units. 
\sk

For the multiplications we have to prove that $\mu_N\,(\FFF(f)\otimes\FFF(f)) = \FFF(f)\,\mu_M$
as $\CC$-morphisms from $\FFF(M)\otimes\FFF(M)$ to $\FFF(N)$. Because $\FFF$ is by hypothesis
additive (cf.\ Definition \ref{def:additivePFA}) and the monoidal product $\otimes$
in a cocomplete closed symmetric monoidal category preserves colimits in both entries,
it follows that 
\begin{flalign}\label{eqn:colimtmpx}
\FFF(M)\otimes\FFF(M) \,\cong\, \colim_{U,V\in \RC_M}^{~} \big(\FFF(U)\otimes \FFF(V)\big)\,\cong\,
\colim_{U\in\RC_M}^{~} \big(\FFF(U)\otimes \FFF(U)\big)\quad,
\end{flalign}
where in the last step we also used that $\RC_M$ is directed by Lemma \ref{lem:RCMdirected}.
For every $U\in\RC_M$, consider the diagram
\begin{flalign}\label{eqn:diagramformultcompatible}
\xymatrix@C=5em{
\ar[ddr]_-{\FFF(f_U^{~})\otimes \FFF(f_U^{~})~} \FFF(U)\otimes\FFF(U) \ar[rd]^(.7){~~~\FFF(\iota_U^M)\otimes\FFF(\iota_U^M)}\ar[r]^-{\mu_U}& \FFF(U) \ar[rd]^-{\FFF(\iota_U^M)} \ar[rdd]_(.7){\FFF(f_U^{~})}|\hole&\\
&\ar[d]^-{\FFF(f)\otimes\FFF(f)}\FFF(M)\otimes \FFF(M)\ar[r]^-{\mu_M}& \FFF(M) \ar[d]^-{\FFF(f)}\\
&\FFF(N)\otimes\FFF(N)\ar[r]_-{\mu_N}& \FFF(N)
}
\end{flalign}
where $f_U^{~} : U\to N$ denotes the restriction of $f:M\to N$
to $U\subseteq M$.
The top and bottom squares of this diagram commute because
of Lemma \ref{lem:RCmultcompatibility} and the fact that both $U\subseteq M$ 
and $f(U)\subseteq N$ are relatively compact subsets. 
The two triangles commute by direct inspection. By universality of the colimit in \eqref{eqn:colimtmpx}, 
this implies that the front square in \eqref{eqn:diagramformultcompatible} commutes, 
proving our claim. 
\end{proof}

\begin{cor}\label{cor:natalg}
Every Cauchy constant additive prefactorization algebra
$\FFF \in\PFA^{\addc}$ defines a functor
$\bbA[\FFF] : \Loc\to \Alg$ to the category of associative and unital algebras.
Explicitly, this functor acts on objects $M\in\Loc$ as $\bbA[\FFF](M) := (\FFF(M),\mu_M,\eta_M)$
and on $\Loc$-morphisms $f:M\to N$ as $\bbA[\FFF](f) :=\FFF(f)$.
The assignment $\FFF \mapsto \bbA[\FFF]$ canonically extends to a functor
$\bbA : \PFA^{\addc} \to \Alg^\Loc$, where $\Alg^\Loc$ denotes the category of functors
from $\Loc$ to $\Alg$.
\end{cor}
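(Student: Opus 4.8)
The plan is to verify that the object assignment $\FFF\mapsto\bbA[\FFF]$ together with the morphism assignment $\zeta\mapsto\zeta$ (regarded as a natural transformation between functors $\Loc\to\Alg$) defines a functor $\bbA:\PFA^{\addc}\to\Alg^\Loc$. The bulk of the work has already been done: Proposition \ref{propo:associativeunital} endows each $\FFF(M)$ with an associative unital algebra structure $(\FFF(M),\mu_M,\eta_M)$, and Proposition \ref{propo:multnaturality} shows that for every $\Loc$-morphism $f:M\to N$ the $\CC$-morphism $\FFF(f)$ is an algebra homomorphism. Together these say precisely that $\bbA[\FFF]:\Loc\to\Alg$ is a well-defined functor: functoriality on the underlying category $\Loc$ (i.e.\ $\FFF(\id_M)=\id$ and $\FFF(g\,f)=\FFF(g)\,\FFF(f)$) is inherited directly from the prefactorization algebra axioms (2) and the composition property \eqref{eqn:FAcomp} applied to $1$-tuples, since on $1$-tuples a prefactorization algebra is just an ordinary functor.

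Next I would check that a morphism $\zeta:\FFF\to\GGG$ in $\PFA^{\addc}$ yields a morphism $\bbA[\zeta]:\bbA[\FFF]\to\bbA[\GGG]$ in $\Alg^\Loc$, i.e.\ a natural transformation whose components are algebra homomorphisms. Naturality of the family $\{\zeta_M\}$ as a transformation between the underlying $\Loc$-functors is again immediate from the compatibility square \eqref{eqn:PFACmorphism} specialized to $1$-tuples. The point requiring a short argument is that each $\zeta_M:\FFF(M)\to\GGG(M)$ respects the multiplications $\mu_M$ and units $\eta_M$. For the units this follows from \eqref{eqn:PFACmorphism} applied to the empty tuple $\emptyset\to M$, which gives $\zeta_M\,\eta_M^{\FFF}=\eta_M^{\GGG}$. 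For the multiplications, fix any object $\iota_{\und U}^M:\und U\to M$ in $\PP_M$ (non-empty by Lemma \ref{lem:PPMconnected}); then $\mu_M$ is characterized by the defining diagram \eqref{eqn:multiplicationmap}, and applying \eqref{eqn:PFACmorphism} to the pair $\iota_{\und U}^M$ and to the two Cauchy morphisms $\iota_{U_\pm}^M$ — using that $\zeta$ intertwines the (iso)morphisms $\FFF(\iota_{U_\pm}^M)$ with $\GGG(\iota_{U_\pm}^M)$ — shows $\zeta_M\,\mu_M^{\FFF}=\mu_M^{\GGG}\,(\zeta_M\otimes\zeta_M)$. So $\bbA[\zeta]$ is a morphism in $\Alg^\Loc$.

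Finally I would record that $\bbA$ preserves identities and composition of $\PFA^{\addc}$-morphisms: since $\bbA$ acts as the identity on the underlying families of $\CC$-morphisms, $\bbA[\id_\FFF]=\id_{\bbA[\FFF]}$ and $\bbA[\zeta'\,\zeta]=\bbA[\zeta']\,\bbA[\zeta]$ hold componentwise by construction. This completes the verification that $\bbA:\PFA^{\addc}\to\Alg^\Loc$ is a functor. I do not anticipate a genuine obstacle here — the only mild subtlety is the bookkeeping of identifying $1$-tuples and the empty tuple with the ordinary functorial data of $\FFF$, and of chasing \eqref{eqn:PFACmorphism} through the defining diagram of $\mu_M$; every diagram involved commutes by the prefactorization algebra axioms and the already-established Propositions \ref{propo:associativeunital} and \ref{propo:multnaturality}, with no new geometric input required.
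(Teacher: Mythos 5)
Your proposal is correct and follows essentially the same route as the paper: the object-wise statements are delegated to Propositions \ref{propo:associativeunital} and \ref{propo:multnaturality}, and the only substantive check is that each component $\zeta_M$ of a $\PFA^{\addc}$-morphism preserves units (via the empty tuple) and multiplications (by chasing the compatibility square \eqref{eqn:PFACmorphism} through the defining diagram \eqref{eqn:multiplicationmap} of $\mu_M$), which is exactly the diagram the paper exhibits. The remaining bookkeeping (functoriality on $1$-tuples and of $\bbA$ itself) is handled the same way in both arguments.
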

\begin{proof}
It remains to prove that every morphism $\zeta : \FFF\to\GGG$ in $\PFA^{\addc}$ defines
a natural transformation $\bbA[\zeta]: \bbA[\FFF]\to \bbA[\GGG]$ between $\Alg$-valued functors on $\Loc$,
i.e.\ that all components $\zeta_M : \FFF(M)\to \GGG(M)$ preserve the multiplications and units.
For the units this is immediate, while for the multiplications it follows from the fact that the diagram
\begin{flalign}
\xymatrix@C=7em{
\ar[d]_-{\zeta_M\otimes\zeta_M} \FFF(M)\otimes\FFF(M) & \ar[l]^-{\cong}_-{\FFF(\iota_{U_+}^M)\otimes\FFF(\iota_{U_-}^M)} \ar[d]_-{\zeta_{U_+}\otimes \zeta_{U_-}}\FFF(U_+) \otimes \FFF(U_-) \ar[r]^-{\FFF(\iota_{\und{U}}^M)}& \FFF(M)\ar[d]^-{\zeta_M}\\
\GGG(M)\otimes\GGG(M) &  \ar[l]_-{\cong}^-{\GGG(\iota_{U_+}^M)\otimes\GGG(\iota_{U_-}^M)}\GGG(U_+) \otimes \GGG(U_-)\ar[r]_-{\GGG(\iota_{\und{U}}^M)} & \GGG(M)
}
\end{flalign}
commutes by the compatibility properties \eqref{eqn:PFACmorphism} 
of prefactorization algebra morphisms.
\end{proof}

\subsection{Algebraic quantum field theory axioms}
The goal of this subsection is to show that the construction above 
assigns to each Cauchy constant additive prefactorization algebra 
a Cauchy constant additive algebraic quantum field theory. More precisely, we shall prove that 
the functor $\bbA : \PFA^{\addc} \to \Alg^\Loc$ 
established in Corollary \ref{cor:natalg} factors through 
the full subcategory $\AQFT^{\addc} \subseteq \Alg^\Loc$ 
of Cauchy constant additive algebraic quantum field theories. 

\begin{lem}\label{lem:caus}
Let $\FFF \in \PFA^c$ be any Cauchy constant prefactorization algebra (not necessarily additive). 
Let further $(f_1: M_1 \to N)\perp (f_2: M_2 \to N)$ be any causally disjoint pair of $\Loc$-morphisms 
such that the images $f_1(M_1), f_2(M_2) \subseteq N$ are relatively compact. 
Then $\mu_N^\op\, (\FFF(f_1) \otimes \FFF(f_2)) = \mu_N\, (\FFF(f_1) \otimes \FFF(f_2))$, 
where $\mu_N^{(\op)}$ denotes the (opposite) multiplication on $\FFF(N)$ from 
Proposition \ref{propo:associativeunital}. 
\end{lem}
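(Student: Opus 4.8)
The plan is to reduce the claim about Einstein causality to the geometric input that two causally disjoint, relatively compact embeddings can be ``time-separated'' by a pair of Cauchy surfaces, and then use the symmetry axiom \eqref{eqn:FAperm} of prefactorization algebras together with the definition \eqref{eqn:multiplicationmap} of $\mu_N$.

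\medskip

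First I would set up the geometry. Since $f_1(M_1), f_2(M_2)\subseteq N$ are causally disjoint and relatively compact, their closures $\overline{f_1(M_1)}$ and $\overline{f_2(M_2)}$ are compact; causal disjointness is an open condition on relatively compact sets (using that the causal relations are well-behaved on globally hyperbolic $N$, cf.\ \cite[Lemma~14.3]{ONeill} and the causal compactness results in \cite{BGP}), so one can still separate the closures causally. The key point is to produce a single Cauchy surface $\Sigma$ of $N$, together with its chronological past/future $\Sigma_\pm := I^\pm_N(\Sigma)$, such that (up to relabelling $f_1,f_2$, i.e.\ up to a permutation) $f_1(M_1)\subseteq\Sigma_+$ and $f_2(M_2)\subseteq\Sigma_-$. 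This should follow by the same technique as in the proof of Lemma \ref{lem:RCmultcompatibility}: causally disjoint compact sets can be placed on opposite sides of a suitably chosen Cauchy surface, using a Cauchy surface existence/extension result of Bernal--S\'anchez-type \cite{BernalSanchez}. Concretely, one argues that $\overline{f_1(M_1)}\cup\overline{f_2(M_2)}$ is contained between two Cauchy surfaces, and because the two pieces cannot be connected by a causal curve, a Cauchy surface $\Sigma$ can be slid in ``between'' them, placing one in $\Sigma_+$ and the other in $\Sigma_-$.

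\medskip

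Second I would do the algebra. Having chosen such $\Sigma$, the embeddings restrict to $\Loc$-morphisms $g_1 : M_1\to\Sigma_+$ and $g_2 : M_2\to\Sigma_-$ with $f_i = \iota_{\Sigma_\pm}^N\, g_i$. Using Corollary \ref{cor:multunique}, compute $\mu_N$ via the pair $\iota_{\und{\Sigma}}^N=(\iota_{\Sigma_+}^N,\iota_{\Sigma_-}^N)$. Then
$\mu_N\,(\FFF(f_1)\otimes\FFF(f_2)) = \FFF(\iota_{\und{\Sigma}}^N)\,\big(\FFF(\iota_{\Sigma_+}^N)\otimes\FFF(\iota_{\Sigma_-}^N)\big)^{-1}\,(\FFF(f_1)\otimes\FFF(f_2)) = \FFF(\iota_{\und{\Sigma}}^N)\,(\FFF(g_1)\otimes\FFF(g_2))$,
using Cauchy constancy to invert the isomorphism and the composition axiom \eqref{eqn:FAcomp} to absorb $\FFF(g_i)$; equivalently this equals $\FFF\big(\iota_{\und{\Sigma}}^N(g_1,g_2)\big) = \FFF(f_1,f_2)$, the factorization product of the pairwise disjoint pair $(f_1,f_2)$. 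For the opposite product, $\mu_N^\op = \mu_N\circ\tau$, and one checks that feeding $\FFF(f_1)\otimes\FFF(f_2)$ into $\mu_N^\op$ yields $\FFF(f_2,f_1) = \FFF\big((f_1,f_2)\,\sigma\big)$ for the transposition $\sigma\in\Sigma_2$; by the permutation-equivariance axiom \eqref{eqn:FAperm} this is again $\FFF(f_1,f_2)$ after accounting for the symmetry isomorphism $\tau$ of $\CC$. Hence both sides equal $\FFF(f_1,f_2)$ and agree.

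\medskip

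I expect the main obstacle to be the geometric separation step, i.e.\ proving that two causally disjoint relatively compact regions of a globally hyperbolic $N$ can be placed strictly on opposite sides of a common Cauchy surface $\Sigma$ (so that their closures land in the \emph{open} sets $I^\pm_N(\Sigma)$, which is exactly what Definition \ref{def:PM} requires). The algebraic manipulation, by contrast, is essentially bookkeeping with \eqref{eqn:FAcomp}, \eqref{eqn:FAperm}, Corollary \ref{cor:multunique} and the symmetry $\tau$; once the right Cauchy surface is in hand the identity $\mu_N(\FFF(f_1)\otimes\FFF(f_2))=\FFF(f_1,f_2)=\mu_N^\op(\FFF(f_1)\otimes\FFF(f_2))$ drops out. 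One subtlety to watch is that both $\Loc$-morphisms are oriented away from the possibly-disconnected union $f_1(M_1)\cup f_2(M_2)$; since causal disjointness is symmetric, one is free to choose which region goes to the future, so no generality is lost.
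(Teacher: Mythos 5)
Your overall strategy (separate the two regions by a Cauchy surface of $N$, then invoke the equivariance axiom \eqref{eqn:FAperm}) is the same as the paper's, but there are two genuine gaps. The first is in the geometric separation step. You propose to place the \emph{full} images on opposite sides, $f_1(M_1)\subseteq I^+_N(\Sigma)$ and $f_2(M_2)\subseteq I^-_N(\Sigma)$, ``by the same technique as in Lemma \ref{lem:RCmultcompatibility}''. That technique extends a compact \emph{achronal} set to a Cauchy surface via \cite[Theorem~3.8]{BernalSanchez}, but $\overline{f_1(M_1)}\cup\overline{f_2(M_2)}$ is not achronal (each $f_i(M_i)$ is an open spacetime region), so it does not apply here. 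In fact the full-image separation is delicate: already for two causally disjoint diamonds in $2$-dimensional Minkowski space whose closures touch at a point, any separating achronal graph is forced to contain a null segment, so existence of a separating Cauchy surface in general needs an argument you do not give. The paper sidesteps this entirely: it first uses Cauchy constancy to replace $\FFF(M_i)$ by $\FFF(U_i)$ for thin neighbourhoods $U_i$ of Cauchy surfaces $\Sigma_i$ of $M_i$; then $\overline{f_1(\Sigma_1)}\cup\overline{f_2(\Sigma_2)}$ \emph{is} compact and achronal --- achronality of the union is precisely where causal disjointness enters --- and Bernal--S\'anchez applies.

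The second and more serious gap is in the algebra. A single separating Cauchy surface $\Sigma$ with $f_1(M_1)\subseteq\Sigma_+$ and $f_2(M_2)\subseteq\Sigma_-$ lets you compute $\mu_N\,(\FFF(f_1)\otimes\FFF(f_2))=\FFF(f_1,f_2)$, but it does \emph{not} let you compute $\mu_N^\op\,(\FFF(f_1)\otimes\FFF(f_2))=\mu_N\,(\FFF(f_2)\otimes\FFF(f_1))\circ\tau$: for that you must feed $\FFF(f_2)$ into the \emph{future} slot of \eqref{eqn:multiplicationmap}, which requires a \emph{second} Cauchy surface with $f_2(M_2)$ in its future and $f_1(M_1)$ in its past. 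Your ``one checks that \dots\ yields $\FFF(f_2,f_1)$'' hides exactly this step, and your closing remark that one is free to choose which region goes to the future ``so no generality is lost'' is circular here --- the content of the lemma is precisely that the two choices agree. The paper constructs both surfaces explicitly: $\widetilde{\Sigma}$ through $\overline{f_1(\Sigma_1)}\cup\overline{f_2(\Sigma_2)}$ and $\widetilde{\Sigma}^\prime$ through $\overline{f_1(\Sigma_1^\prime)}\cup\overline{f_2(\Sigma_2^\prime)}$ with $\Sigma_1^\prime$ to the future of $\Sigma_1$ and $\Sigma_2^\prime$ to the past of $\Sigma_2$, so that the common shrunken regions $U_i$ realize both time-orderings, and only then applies \eqref{eqn:FAperm}. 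Your argument becomes correct once you add the shrinking step and the second Cauchy surface.
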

\begin{proof}
In order to compare the two morphisms $\mu_N\, (\FFF(f_1) \otimes \FFF(f_2))$ 
and $\mu_N^{\op}\, (\FFF(f_1) \otimes \FFF(f_2))$ from $\FFF(M_1) \otimes \FFF(M_2)$ to $\FFF(N)$, 
we introduce convenient ways to compute these composites. 
Let us choose arbitrary Cauchy surfaces $\Sigma_1$ of $M_1$ and $\Sigma_2$ of $M_2$. 
As in the proof of Lemma \ref{lem:RCmultcompatibility}, we deduce that 
$\overline{f_1(\Sigma_1)}, \overline{f_2(\Sigma_2)} \subseteq N$ are achronal compact subsets. 
Causal disjointness of the pair $f_1\perp f_2$
entails achronality of the union $\overline{f_1(\Sigma_1)} \cup \overline{f_2(\Sigma_2)} \subseteq N$. 
By \cite[Theorem~3.8]{BernalSanchez}, there exists a Cauchy surface $\widetilde{\Sigma}$ of $N$ 
that contains the union $\overline{f_1(\Sigma_1)} \cup \overline{f_2(\Sigma_2)}\subseteq \widetilde{\Sigma}$. 
Similarly, choosing any Cauchy surface $\Sigma^\prime_1 \subset I_{M_1}^+(\Sigma_1)$ of $M_1$ 
that lies in the future of $\Sigma_1$ and any Cauchy surface $\Sigma^\prime_2 \subset I_{M_2}^-(\Sigma_2)$ 
of $M_2$ that lies in the past of $\Sigma_2$, 
there exists a Cauchy surface $\widetilde{\Sigma}^\prime$ of $N$ that contains the union 
$\overline{f_1(\Sigma^\prime_1)} \cup \overline{f_2(\Sigma^\prime_2)}\subseteq \widetilde{\Sigma}^\prime$. 
Let us introduce 
\begin{flalign}
U_1 \,:=\, I_{M_1}^+(\Sigma_1) \cap I_{M_1}^-(\Sigma^\prime_1) \subseteq M_1\quad,\qquad
U_2 \, :=\, I_{M_2}^+(\Sigma^\prime_2) \cap I_{M_2}^-(\Sigma_2) \subseteq M_2\quad,
\end{flalign}
and also consider $\widetilde{\Sigma}_\pm := I_N^\pm(\widetilde{\Sigma}) \subseteq N$ 
and $\widetilde{\Sigma}^\prime_\pm := I_N^\pm(\widetilde{\Sigma}^\prime) \subseteq N$. 
By construction, $\iota_{U_i}^{M_i}: U_i \stackrel{c}{\to} M_i$, for $i=1,2$, and 
$\iota_{\widetilde{\Sigma}^{(\prime)}_\pm}^N: \widetilde{\Sigma}^{(\prime)}_\pm \stackrel{c}{\to} N$ 
are Cauchy morphisms. The following picture illustrates in dark gray the
chosen subsets $U_1\subseteq M_1$ and $U_2\subseteq M_2$: 
\begin{flalign}
\begin{tikzpicture}[scale=1.2]
\draw[fill=gray!5] (-2,0) -- (0,2) -- (2,0) -- (0,-2) -- (-2,0);
\draw (0,1.7) node {{\footnotesize $N$}};
\draw [fill=gray!20] (-1.5,0) -- (-1,0.5) -- (-0.5,0) -- (-1,-0.5) -- (-1.5,0);
\draw [fill=gray!20] (-1.6,0) -- (-1,0.6) -- (-0.4,0) -- (-1,-0.6) -- (-1.6,0);
\draw [fill=gray!20] (1.6,0) -- (1,0.6) -- (0.4,0) -- (1,-0.6) -- (1.6,0);
\draw (-1,0.75) node {{\footnotesize $~~M_1$}};
\draw (1,0.75) node {{\footnotesize $M_2~~$}};
\draw[fill=gray!80,draw=none] (-1.6,0) .. controls (-1,0.3) .. (-0.4,0) .. controls (-1,-0.3) .. (-1.6,0);
\draw[fill=gray!80,draw=none] (1.6,0) .. controls (1,0.3) .. (0.4,0) .. controls (1,-0.3) .. (1.6,0);
\draw[very thick] (-2,0) .. controls (-1.8,-0.1) .. (-1.6,0);
\draw[very thick,dashed] (-2,0) .. controls (-1.8,0.1) .. (-1.6,0);
\draw[very thick] (-1.6,0) .. controls (-1,0.3) .. (-0.4,0);
\draw[very thick, dashed] (-1.6,0) .. controls (-1,-0.3) .. (-0.4,0);
\draw[very thick] (-0.4,0) .. controls (-0.2,-0.1) .. (0,0);
\draw[very thick] (0,0) .. controls (0.2,0.1) .. (0.4,0);
\draw[very thick, dashed] (-0.4,0) .. controls (-0.2,0.1) .. (0,0);
\draw[very thick, dashed] (0,0) .. controls (0.2,-0.1) .. (0.4,0);
\draw[very thick] (0.4,0) .. controls (1,-0.3) .. (1.6,0);
\draw[very thick, dashed] (0.4,0) .. controls (1,0.3) .. (1.6,0);
\draw[very thick] (1.6,0) .. controls (1.8,0.1) .. (2,0);
\draw[very thick, dashed] (1.6,0) .. controls (1.8,-0.1) .. (2,0);
\draw[very thick, dashed] (2.5,1.5) -- (3.5,1.5) node[midway,above] {{$\widetilde{\Sigma}$}};
\draw[very thick] (2.5,0.75) -- (3.5,0.75) node[midway,above] {{$\widetilde{\Sigma}^\prime$}};
\draw[very thick, ->] (-2.25,-1.5) -- (-2.25,1.5) node[above] {{\footnotesize time}};
\end{tikzpicture}
\end{flalign}
With these preparations, we can compute $\mu_N\, (\FFF(f_1) \otimes \FFF(f_2))$ by
\begin{flalign}
\xymatrix@C=8em{
\FFF(M_1)\otimes\FFF(M_2) \ar[r]^-{\FFF(f_1) \otimes \FFF(f_2)} &
\FFF(N)\otimes\FFF(N) \ar[r]^-{\mu_N} & \FFF(N) \\
\ar[u]_-{\cong}^-{\FFF(\iota_{U_1}^{M_1}) \otimes \FFF(\iota_{U_2}^{M_2})} \FFF(U_1) \otimes \FFF(U_2) \ar[r]_-{\FFF\big((f_1)_{U_1}^{\widetilde{\Sigma}_+}\big) \otimes \FFF\big((f_2)_{U_2}^{\widetilde{\Sigma}_-}\big)} 
& \ar[u]_-{\cong}^-{\FFF(\iota_{\widetilde{\Sigma}_+}^N) \otimes \FFF(\iota_{\widetilde{\Sigma}_-}^N)} \FFF(\widetilde{\Sigma}_+) \otimes \FFF(\widetilde{\Sigma}_-) \ar[ur]_-{\FFF(\iota_{\und{\widetilde{\Sigma}}}^N)} 
}
\end{flalign}
where $(f_1)_{U_1}^{\widetilde{\Sigma}_+} : U_{1} \to \widetilde{\Sigma}_{+}$ denotes 
the restriction of $f_1 : M_1\to N$ to $U_1\subseteq M_1$, and analogously for $(f_2)_{U_2}^{\widetilde{\Sigma}_-}$.
Similarly, $\mu_N^{\op}\, (\FFF(f_1) \otimes \FFF(f_2))$ can be computed by
\begin{flalign}
\xymatrix@C=8em{
\FFF(M_1)\otimes\FFF(M_2) \ar[d]_-{\mathrm{flip}} \ar[r]^-{\FFF(f_1) \otimes \FFF(f_2)} &
\FFF(N)\otimes\FFF(N) \ar[r]^-{\mu_N^\op} \ar[d]_-{\mathrm{flip}} & \FFF(N) \\
\FFF(M_2)\otimes\FFF(M_1) \ar[r]^-{\FFF(f_2) \otimes \FFF(f_1)} &
\FFF(N)\otimes\FFF(N) \ar[ur]^-{\mu_N} \\
\ar[u]_-{\cong}^-{\FFF(\iota_{U_2}^{M_2}) \otimes \FFF(\iota_{U_1}^{M_1})} \FFF(U_2) \otimes \FFF(U_1) 
\ar[r]_-{\FFF\big((f_2)_{U_2}^{\widetilde{\Sigma}^\prime_+}\big) \otimes \FFF\big((f_1)_{U_1}^{\widetilde{\Sigma}^\prime_-}\big)} 
& \ar[u]_-{\cong}^-{\FFF(\iota_{\widetilde{\Sigma}^\prime_+}^N) \otimes \FFF(\iota_{\widetilde{\Sigma}^\prime_-}^N)} \FFF(\widetilde{\Sigma}^\prime_+) \otimes \FFF(\widetilde{\Sigma}^\prime_-) \ar[uur]_-{\FFF(\iota_{\und{\widetilde{\Sigma}}^\prime}^N)} 
}
\end{flalign}
The claim follows from the equivariance property \eqref{eqn:FAperm} 
of prefactorization algebras. 
\end{proof}

The main result of this subsection is as follows.
\begin{propo}\label{propo:caus}
Let $\FFF \in \PFA^{\addc}$ be any Cauchy constant additive prefactorization algebra. 
Let further $(f_1: M_1 \to N)\perp (f_2: M_2 \to N)$ be any causally disjoint pair of $\Loc$-morphisms.
Then $\mu_N^\op\, (\FFF(f_1) \otimes \FFF(f_2)) = \mu_N\, (\FFF(f_1) \otimes \FFF(f_2))$, 
where $\mu_N^{(\op)}$ denotes the (opposite) multiplication on $\FFF(N)$ from 
Proposition \ref{propo:associativeunital}. 
\end{propo}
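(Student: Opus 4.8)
The plan is to reduce the general case to the relatively compact case already settled in Lemma~\ref{lem:caus}, using the additivity hypothesis exactly as in the proof of Proposition~\ref{propo:multnaturality}. First I would observe that we must show the two $\CC$-morphisms $\mu_N^\op\,(\FFF(f_1)\otimes\FFF(f_2))$ and $\mu_N\,(\FFF(f_1)\otimes\FFF(f_2))$ from $\FFF(M_1)\otimes\FFF(M_2)$ to $\FFF(N)$ agree. Since $\FFF$ is additive and the monoidal product preserves colimits in both entries, we have
\begin{flalign}
\FFF(M_1)\otimes\FFF(M_2)\,\cong\,\colim_{U_1\in\RC_{M_1},\,U_2\in\RC_{M_2}}^{~}\big(\FFF(U_1)\otimes\FFF(U_2)\big)\quad,
\end{flalign}
so by universality of the colimit it suffices to prove that the two composites agree after precomposition with $\FFF(\iota_{U_1}^{M_1})\otimes\FFF(\iota_{U_2}^{M_2})$ for every pair $U_1\in\RC_{M_1}$, $U_2\in\RC_{M_2}$.

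Fix such $U_1,U_2$, and let $(f_i)_{U_i}:U_i\to N$ denote the restriction of $f_i$ to $U_i\subseteq M_i$. The key point is that these restricted morphisms still form a causally disjoint pair: causal disjointness of $f_1\perp f_2$ says that no causal curve in $N$ connects $f_1(M_1)$ and $f_2(M_2)$, and since $(f_i)_{U_i}(U_i)=f_i(U_i)\subseteq f_i(M_i)$, no causal curve connects $f_1(U_1)$ and $f_2(U_2)$ either, i.e.\ $(f_1)_{U_1}\perp(f_2)_{U_2}$. Moreover each image $f_i(U_i)\subseteq N$ is relatively compact because $U_i\subseteq M_i$ is relatively compact and $f_i$ is (in particular) continuous with causally convex open image, so the closure $\overline{f_i(U_i)}$ is compact. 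Hence the pair $(f_1)_{U_1}\perp(f_2)_{U_2}$ satisfies exactly the hypotheses of Lemma~\ref{lem:caus}, which gives
\begin{flalign}
\mu_N^\op\,\big(\FFF((f_1)_{U_1})\otimes\FFF((f_2)_{U_2})\big)\,=\,\mu_N\,\big(\FFF((f_1)_{U_1})\otimes\FFF((f_2)_{U_2})\big)\quad.
\end{flalign}

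Finally I would glue these together: using the functoriality relations $\FFF(f_i)\,\FFF(\iota_{U_i}^{M_i})=\FFF((f_i)_{U_i})$, both $\mu_N\,(\FFF(f_1)\otimes\FFF(f_2))$ and $\mu_N^\op\,(\FFF(f_1)\otimes\FFF(f_2))$ precomposed with $\FFF(\iota_{U_1}^{M_1})\otimes\FFF(\iota_{U_2}^{M_2})$ equal the corresponding composite built from the restricted morphisms, and these coincide by the displayed equality. Since this holds for all $U_1,U_2$ in the respective $\RC$-categories, universality of the colimit forces $\mu_N^\op\,(\FFF(f_1)\otimes\FFF(f_2))=\mu_N\,(\FFF(f_1)\otimes\FFF(f_2))$, as claimed. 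I expect the only real subtlety to be the bookkeeping verifying that the restricted pair is still causally disjoint with relatively compact images — everything else is a formal colimit argument mirroring Proposition~\ref{propo:multnaturality}, so the structure of the proof is essentially dictated once Lemma~\ref{lem:caus} is in hand.
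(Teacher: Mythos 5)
Your proposal is correct and follows essentially the same route as the paper's proof: both use additivity together with the fact that $\otimes$ preserves colimits to write $\FFF(M_1)\otimes\FFF(M_2)$ as a colimit over $\RC_{M_1}\times\RC_{M_2}$, check that the restricted pair $(f_1)_{U_1}\perp(f_2)_{U_2}$ is still causally disjoint with relatively compact images so that Lemma~\ref{lem:caus} applies, and conclude by universality of the colimit. The bookkeeping you flag (causal disjointness and relative compactness of the restricted images) is handled correctly.
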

\begin{proof}
Because $\FFF$ is by hypothesis additive (cf.\ Definition \ref{def:additivePFA}) 
and the monoidal product $\otimes$ in a cocomplete closed symmetric monoidal category 
preserves colimits in both entries, it follows that 
\begin{flalign}\label{eqn:causcolim}
\FFF(M_1)\otimes\FFF(M_2) \,\cong\, \colim_{(U_1,U_2)\in \RC_{M_1} \times \RC_{M_2}}^{~} \big(\FFF(U_1)\otimes\FFF(U_2)\big)\quad.
\end{flalign}
For every $(U_1,U_2)\in \RC_{M_1} \times \RC_{M_2}$, consider the diagram 
\begin{flalign}
\xymatrix@C=5em@R=5em{
F(U_1)\otimes\FFF(U_2) \ar@/_2.5em/[ddr]_-{\FFF\big((f_1)_{U_1}^{~}\big)\otimes\FFF\big((f_2)_{U_2}^{}\big)} \ar@/^2em/[rrd]^-{~~~~~~~\FFF\big((f_1)_{U_1}^{~}\big)\otimes\FFF\big((f_2)_{U_2}^{}\big)} \ar[dr]^-{~~~\FFF(\iota_{U_1}^{M_1})\otimes\FFF(\iota_{U_2}^{M_2})} \\
& F(M_1)\otimes\FFF(M_2) \ar[d]_-{\FFF(f_1)\otimes\FFF(f_2)} \ar[r]^-{\FFF(f_1)\otimes\FFF(f_2)} 
& F(N)\otimes\FFF(N) \ar[d]^-{\mu_N^\op} \\
& F(N)\otimes\FFF(N) \ar[r]_-{\mu_N} & \FFF(N)
}
\end{flalign}
where $(f_i)_{U_i}^{~} : U_i\to N$ denotes the restriction of $f_i:M_i\to N$
to $U_i\subseteq M_i$, for $i=1,2$.
The two triangles coincide and commute by direct inspection. Furthermore, 
for every $(U_1,U_2)\in \RC_{M_1} \times \RC_{M_2}$, the outer square commutes 
as a consequence of Lemma \ref{lem:caus} applied to the causally disjoint pair 
$(f_1)_{U_1}^{~}  \perp (f_2)_{U_2}^{~}$, 
whose images $f_1(U_1), f_2(U_2) \subseteq N$ are relatively compact subsets.
Hence, by universality of the colimit in \eqref{eqn:causcolim}, 
the inner square commutes as well, which is our claim. 
\end{proof}

Proposition \ref{propo:caus} leads to the following refinement of Corollary \ref{cor:natalg}. 
\begin{theo}\label{theo:PFAtoAQFT}
Every Cauchy constant additive prefactorization algebra $\FFF \in \PFA^{\addc}$
defines a Cauchy constant additive algebraic quantum field theory $\bbA[\FFF] \in \AQFT^{\addc}$.
Hence, the functor $\bbA : \PFA^{\addc} \to \Alg^\Loc$ from Corollary \ref{cor:natalg} 
factors through the full subcategory $\AQFT^{\addc} \subseteq \Alg^\Loc$. 
\end{theo}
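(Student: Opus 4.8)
The plan is to assemble Theorem \ref{theo:PFAtoAQFT} from the pieces already established, so that almost no new work is required. First I would recall that Corollary \ref{cor:natalg} already produces a functor $\bbA : \PFA^{\addc} \to \Alg^\Loc$, acting on objects by $\bbA[\FFF] = (\FFF,\mu_{(-)},\eta_{(-)})$ with the object-wise algebra structures of Proposition \ref{propo:associativeunital}, and on morphisms by $\bbA[\zeta] = \zeta$. Hence it only remains to check that, for each $\FFF \in \PFA^{\addc}$, the functor $\bbA[\FFF] : \Loc \to \Alg$ lands in the full subcategory $\AQFT^{\addc} \subseteq \Alg^\Loc$, i.e.\ that it satisfies the Einstein causality axiom, additivity and Cauchy constancy.

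The three verifications I would carry out in turn. Einstein causality: for a causally disjoint pair $(f_1: M_1 \to N) \perp (f_2: M_2 \to N)$, the commuting of the square \eqref{eqn:Einsteincausality} is precisely the identity $\mu_N^\op\,(\FFF(f_1)\otimes\FFF(f_2)) = \mu_N\,(\FFF(f_1)\otimes\FFF(f_2))$, which is exactly the content of Proposition \ref{propo:caus}. Additivity: this is immediate, since the additivity condition on $\bbA[\FFF]$ as an $\AQFT$ is checked on the underlying functor to $\CC$ (cf.\ Remark \ref{rem:algcolim}, using that $\RC_M$ is directed by Lemma \ref{lem:RCMdirected}), and the underlying $\CC$-valued functor of $\bbA[\FFF]$ is just $\FFF$ itself, which is additive by hypothesis; one should only note that the colimit in $\Alg$ of the directed diagram $\bbA[\FFF]\vert_M$ exists and is computed in $\CC$, again by the cited results on colimits of directed diagrams of algebras, and that its comparison map to $\bbA[\FFF](M)$ agrees with the one for $\FFF$. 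Cauchy constancy: for a Cauchy morphism $f : M \stackrel{c}{\to} N$, the $\CC$-morphism $\FFF(f)$ is an isomorphism because $\FFF \in \PFA^c$, and by Proposition \ref{propo:multnaturality} it is an algebra homomorphism; an isomorphism in $\CC$ that is a morphism of algebras is automatically an isomorphism in $\Alg$ (its inverse in $\CC$ is automatically multiplicative and unital), so $\bbA[\FFF](f) = \FFF(f)$ is an $\Alg$-isomorphism.

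Having verified that $\bbA[\FFF] \in \AQFT^{\addc}$ for every object $\FFF$, and recalling from Corollary \ref{cor:natalg} that morphisms are sent to natural transformations of $\Alg$-valued functors (which are exactly the $\AQFT$-morphisms, since $\AQFT$ is a full subcategory of $\Alg^\Loc$), I would conclude that $\bbA$ corestricts to a functor $\bbA : \PFA^{\addc} \to \AQFT^{\addc}$, which is the assertion.

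I do not anticipate a genuine obstacle here: the theorem is a bookkeeping statement that collects Corollary \ref{cor:natalg}, Proposition \ref{propo:multnaturality} and Proposition \ref{propo:caus}. The only mildly delicate point worth spelling out is the additivity check in the algebra category rather than in $\CC$ — one must invoke that the relevant colimit is filtered (indeed directed, by Lemma \ref{lem:RCMdirected}), so that it is created by the forgetful functor $\Alg \to \CC$, and that the canonical comparison morphism in $\Alg$ is then an isomorphism iff it is one in $\CC$, which reduces the claim to additivity of $\FFF$.
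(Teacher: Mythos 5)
Your proposal is correct and follows essentially the same route as the paper's proof: Einstein causality from Proposition \ref{propo:caus}, additivity from Remark \ref{rem:algcolim} (using that $\RC_M$ is directed and that the underlying $\CC$-valued functors of $\bbA[\FFF]$ and $\FFF$ coincide), and Cauchy constancy inherited directly from $\FFF$. Your extra remark that a $\CC$-isomorphism which is an algebra morphism is automatically an $\Alg$-isomorphism is a detail the paper leaves implicit, but it is the same argument.
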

\begin{proof}
Proposition \ref{propo:caus} implies that the functor $\bbA[\FFF]: \Loc \to \Alg$ 
defined in Corollary \ref{cor:natalg} is an algebraic quantum field theory, i.e.\ it satisfies 
the Einstein causality axiom \eqref{eqn:Einsteincausality}.
Because $\FFF$ is by hypothesis Cauchy constant, it follows 
that $\bbA[\FFF]$ is Cauchy constant too.
Because the underlying functors $\bbA[\FFF]\vert_M = \FFF\vert_M : \RC_M\to \CC$
to the category $\CC$ coincide, additivity of $\FFF \in \PFA^{\addc}$ and 
Remark \ref{rem:algcolim} immediately imply additivity of $\bbA[\FFF]$.
Hence, $\bbA[\FFF] \in \AQFT^{\addc}$. 
\end{proof}


\section{\label{sec:inverseconstruction}From AQFT to PFA}
In this section we show that every algebraic quantum
field theory $\AAA\in\AQFT$
defines a variant of a prefactorization algebra on $\Loc$ where the 
factorization products are defined only for those tuples of pairwise disjoint 
morphisms $\und{f} : \und{M}\to N$ that are in a suitable sense time-orderable.
We shall call this type of prefactorization algebras
{\em time-orderable} and denote the corresponding category
by $\toPFA$. Our construction defines a functor
$\bbF : \AQFT\to \toPFA$ to the category of
time-orderable prefactorization algebras. Cauchy constancy and additivity
do not play a role in this section, however we shall prove
that these properties are preserved by our functor.
\sk

Let $\AAA\in\AQFT$ be an algebraic quantum field theory.
Our aim is to construct from this data factorization products
$\bbF[\AAA](\und{f}) : \bigotimes_{i=1}^n \AAA(M_i)\to \AAA(N)$, for suitable
tuples of pairwise disjoint morphisms $\und{f} = (f_1,\dots,f_n) : \und{M}\to N$.
For $n=0$, i.e.\ the empty tuples $\emptyset \to N$, we may take
the unit $\eta_N :I \to \AAA(N)$ of the associative and unital algebra $\AAA(N)$
that is assigned by $\AAA$ to $N\in\Loc$. For $n=1$, the tuples of pairwise disjoint
morphisms are just $\Loc$-morphisms $f:M\to N$,
hence we may take the $\CC$-morphism $\bbF[\AAA](f) := \AAA(f) : \AAA(M)\to \AAA(N)$
that is obtained from the $\Alg$-morphism assigned by $\AAA$ to $f:M\to N$
via the forgetful functor $\Alg \to \CC$. For $n\geq 2$, the envisaged 
construction becomes far less obvious. Let us consider for the moment
$n=2$ and a pair of disjoint morphisms  $\und{f}=(f_1,f_2) : \und{M}\to N$.
Inspired by our previous construction \eqref{eqn:multiplicationmap} 
of multiplications from factorization products, we propose
to define $\bbF[\AAA](\und{f}) : \AAA(M_1)\otimes \AAA(M_2)\to \AAA(N)$
by the commutative diagram
\begin{flalign}\label{eqn:2facproduct}
\xymatrix@C=3em{
\ar[dr]_-{\AAA(f_1)\otimes\AAA(f_2)~~~~~}\AAA(M_1) \otimes\AAA(M_2)\ar[rr]^-{\bbF[\AAA](\und{f})} && \AAA(N)\\
&\AAA(N)\otimes \AAA(N)\ar[ru]_-{~\mu_N}&
}
\end{flalign}
in $\CC$. This is however problematic in view of the equivariance property 
\eqref{eqn:FAperm} of prefactorization algebras. In fact, if we used
\eqref{eqn:2facproduct} for {\em all} pairs of disjoint morphisms $\und{f} =(f_1,f_2):\und{M}\to N$, 
then \eqref{eqn:FAperm} would be satisfied if and only if the diagram
in \eqref{eqn:Einsteincausality} commutes, which is in general not the case
unless $f_1\perp f_2$ are causally disjoint. By closer inspection of \eqref{eqn:multiplicationmap},
one observes that \eqref{eqn:2facproduct} is {\em not} supposed to be the correct
definition for all pairs of disjoint morphisms, but only for those pairs 
$\und{f} =(f_1,f_2):\und{M}\to N$ where $f_1(M_1)\subseteq N$ is
``later'' (in a suitable sense) than $f_2(M_2)\subseteq N$. 
This would solve the problem concerning the equivariance property discussed above.
The following definition formalizes a concept of time-ordering that allows us to prove
our desired statements.
\begin{defi}\label{def:timeordered}
\begin{itemize}
\item[(a)] Let $M\in\Loc$. A tuple $(U_1,\dots,U_n)$ of causally convex open subsets $U_i\subseteq M$ 
is called {\em time-ordered} if $J^+_M(U_i) \cap U_j = \emptyset$,
for all $i<j$.

\item[(b)] A tuple of pairwise disjoint morphisms $\und{f} = (f_1,\dots,f_n) :\und{M}\to N$ is called
{\em time-ordered} if the tuple $(f_1(M_1),\ldots,f_n(M_n))$ 
of causally convex open subsets $f_i(M_i) \subseteq N$ is time-ordered.

\item[(c)] A tuple of pairwise disjoint morphisms $\und{f} = (f_1,\dots,f_n) :\und{M}\to N$ is called
{\em time-orderable} if there exists a permutation $\rho \in\Sigma_n$ such that the tuple
$\und{f}\rho = (f_{\rho(1)},\dots, f_{\rho(n)}) : \und{M}\rho\to N$ is time-ordered.
We call $\rho$ a {\em time-ordering permutation} for $\und{f}$ and note that time-ordering
permutations are not necessarily unique.
\end{itemize}
\end{defi}
\begin{rem}\label{rem:nontimeorderable}
By convention, all empty tuples $\emptyset \to N$ and all $1$-tuples
$f:M\to N$ are time-ordered. However, we would like to stress that for $n\geq 2$
not every tuple of pairwise disjoint morphisms $\und{f} : \und{M}\to N$
is time-orderable. For example, consider $n=2$ and $\und{f}= (f_1,f_2) : \und{M}\to N$
the inclusion of the following causally convex open subsets 
into the Lorentzian cylinder $N$:
\begin{flalign}
\begin{tikzpicture}[scale=1,decoration={markings,mark=at position 0.5 with {\arrow{>>}}}]
\draw[fill=gray!5,draw=none] (-3,-2.5) -- (-3,1.5) -- (3,1.5) -- (3,-2.5) -- (3,-2.5);
\draw[fill=gray!20] (-3,0) -- (-2,1) -- (-1.5,0.5) -- (-3,-1) --(-3,0);
\draw[fill=gray!20] (3,0) -- (1.5,-1.5) -- (2,-2) -- (3,-1) -- (3,0);
\draw[very thick, postaction={decorate}] (-3,-2.5) -- (-3,1.5);
\draw[very thick, postaction={decorate}] (3,-2.5) -- (3,1.5);
\draw[fill=gray!20] (-1.5,-1) -- (0.5,1) -- (1.5,0) -- (-0.5,-2) -- (-1.5,-1);
\draw (0,-0.5) node {{\footnotesize $M_2$}};
\draw (-2.5,0) node {{\footnotesize $M_1$}};
\draw (2.5,-1) node {{\footnotesize $M_1$}};
\draw (2.5,1.25) node {{\footnotesize $N$}};
\draw[very thick, ->] (-4,-2) -- (-4,1) node[above] {{\footnotesize time}};
\end{tikzpicture}
\end{flalign}
In this picture the left and right boundaries are identified 
as indicated, thereby producing the Lorentzian cylinder 
$N=(\bbR\times \mathbb{S}^1,g=-\dd t^2 + \dd\phi^2, \mathfrak{t} = \frac{\partial}{\partial t})$.
\end{rem}

The following technical lemma is the crucial ingredient for our proofs below.
We shall use the same notation and conventions for permutation group actions as in \cite{Yau}.
\begin{lem}\label{lem:timeordered}
\begin{itemize}
\item[(i)] Let $\rho \in \Sigma_n$ be a time-ordering permutation for 
the tuple of pairwise disjoint morphisms $\und{f} =(f_1,\dots,f_n): \und{M}\to N$
and $\sigma\in\Sigma_n$ a permutation. Then $\sigma^{-1}\rho \in \Sigma_n$ 
is a time-ordering permutation for $\und{f}\sigma=(f_{\sigma(1)},\dots,f_{\sigma(n)}) 
: \und{M}\sigma \to N$.

\item[(ii)] Let $\rho_0 \in \Sigma_n$ be a time-ordering permutation for 
$\und{f} = (f_1,\dots,f_n) :\und{M}\to N$ and $\rho_i \in \Sigma_{k_i}$ a time-ordering 
permutation for  $\und{g}_i = (g_{i1},\dots, g_{ik_i}): \und{L}_i \to M_i$, for $i=1,\dots,n$. 
Then the permutation 
\begin{flalign}
\rho_0\langle k_1,\ldots,k_n\rangle\, (\rho_{\rho_0(1)}\oplus\ldots\oplus\rho_{\rho_0(n)}) \in \Sigma_{k_1+\cdots+k_n}\quad,
\end{flalign}
where $\rho_0\langle k_1,\ldots,k_n\rangle$ denotes the block permutation corresponding
to $\rho_0$ and  $\rho_{\rho_0(1)}\oplus\ldots\oplus\rho_{\rho_0(n)}$ the sum permutation
of the $\rho_{\rho_0(i)}$, is a time-ordering permutation for
\begin{flalign}\label{eqn:tuplecomposition}
\und{f}(\und{g}_1,\dots,\und{g}_n) := (f_1\,g_{11},\dots,f_n\, g_{nk_n}) \,:\, 
(\und{L}_1,\dots,\und{L}_n)~\longrightarrow~  N\quad.
\end{flalign}

\item[(iii)] Let $\und{f} : \und{M}\to N$ be a time-orderable tuple of pairwise disjoint morphisms
and $\rho,\rho^\prime\in\Sigma_n$ time-ordering permutations for $\und{f}$.
Then the right permutation $\rho^{-1}\rho^\prime : \und{f}\rho \to \und{f}\rho^\prime$
is generated by transpositions of adjacent causally disjoint pairs of morphisms.
\end{itemize}
\end{lem}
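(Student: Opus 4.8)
The plan is to prove the three statements in turn, with (iii) being the main target and (i), (ii) being comparatively routine bookkeeping about permutations.

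For part (i), I would argue directly from the definitions. Write $\und{f}\sigma = (f_{\sigma(1)},\dots,f_{\sigma(n)})$. Since $\rho$ is a time-ordering permutation for $\und{f}$, the tuple $(f_{\rho(1)}(M_{\rho(1)}),\dots,f_{\rho(n)}(M_{\rho(n)}))$ is time-ordered in $N$. I want to exhibit a permutation $\tau$ with $(\und{f}\sigma)\tau$ time-ordered, and the natural candidate is $\tau=\sigma^{-1}\rho$, since then $(\und{f}\sigma)(\sigma^{-1}\rho)=\und{f}(\sigma\sigma^{-1}\rho)=\und{f}\rho$ by the associativity of the right action on tuples (using the conventions of \cite{Yau}). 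So the verification reduces to the identity $(\und{f}\sigma)(\sigma^{-1}\rho) = \und{f}\rho$, which is immediate, and the fact that the latter is time-ordered by hypothesis.

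For part (ii), the key point is purely combinatorial: the operadic composition $\rho_0\langle k_1,\dots,k_n\rangle\,(\rho_{\rho_0(1)}\oplus\cdots\oplus\rho_{\rho_0(n)})$ is precisely the permutation that, when acting on $\und{f}(\und{g}_1,\dots,\und{g}_n)$ on the right, first rearranges the $n$ blocks according to $\rho_0$ and then rearranges within the $\rho_0(i)$-th block according to $\rho_{\rho_0(i)}$ (this is the standard compatibility of block/sum permutations with operadic composition, cf.\ \cite{Yau}). I would first record that identity, so that the reordered tuple has as its ordered list of images: for $a<b$ in the outer order given by $\rho_0$, all images of $\und{g}_{\rho_0(a)}$ come before all images of $\und{g}_{\rho_0(b)}$, and within a fixed block the images are ordered by $\rho_{\rho_0(a)}$. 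Then I would verify the time-ordering condition $J^+_N(\,\cdot\,)\cap(\,\cdot\,)=\emptyset$ for any pair in this list by splitting into two cases: (1) the two morphisms lie in different blocks, say blocks $\rho_0(a)$ and $\rho_0(b)$ with $a<b$; then their images are contained in $f_{\rho_0(a)}(M_{\rho_0(a)})$ and $f_{\rho_0(b)}(M_{\rho_0(b)})$ respectively, and since $\rho_0$ time-orders $\und{f}$ we have $J^+_N(f_{\rho_0(a)}(M_{\rho_0(a)}))\cap f_{\rho_0(b)}(M_{\rho_0(b)})=\emptyset$, which forces the disjointness for the subsets; (2) the two morphisms lie in the same block $\rho_0(a)$, coming from $g_{\rho_0(a),c}$ and $g_{\rho_0(a),d}$ with $c$ before $d$ in the order $\rho_{\rho_0(a)}$; here I use that $\rho_{\rho_0(a)}$ time-orders $\und{g}_{\rho_0(a)}$, so $J^+_{M_{\rho_0(a)}}(U_c)\cap U_d=\emptyset$ for the relevant causally convex subsets $U_c,U_d\subseteq M_{\rho_0(a)}$, and then apply $f_{\rho_0(a)}$, using that a $\Loc$-morphism has causally convex image so that $J^+_N(f(S))\cap f(M)=f(J^+_M(S))$ for $S\subseteq M$. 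This last geometric fact (causal convexity of the image implies $J^+_N(f(S))\cap f(M)=f(J^+_M(S))$) is the one nontrivial input and I would state it explicitly.

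For part (iii), given two time-ordering permutations $\rho,\rho'$ for $\und{f}$, I want to show $\rho^{-1}\rho'$ is a product of transpositions of adjacent entries that are causally disjoint pairs in the time-ordered tuple $\und{f}\rho$. The cleanest approach: it suffices to treat the case where $\rho'$ differs from $\rho$ by sorting, so without loss of generality (using part (i) to replace $\und{f}$ by $\und{f}\rho$) assume $\und{f}$ itself is time-ordered and $\rho'=:\pi$ is another permutation making $\und{f}\pi$ time-ordered; I must show $\pi$ is generated by adjacent transpositions of causally disjoint pairs of $\und{f}$. The idea is a bubble-sort argument. If $\pi\neq\id$, there is some adjacent pair of positions $j,j+1$ with $\pi(j)>\pi(j+1)$, i.e.\ an inversion of $\pi$. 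Set $a=\pi(j+1)$, $b=\pi(j)$, so $a<b$ but in the tuple $\und{f}\pi$ the morphism $f_b$ appears immediately before $f_a$. Since $\und{f}$ is time-ordered and $a<b$ we have $J^+_N(f_a(M_a))\cap f_b(M_b)=\emptyset$; since $\und{f}\pi$ is time-ordered and $f_b$ comes before $f_a$ in it, we have $J^+_N(f_b(M_b))\cap f_a(M_a)=\emptyset$; together these say $f_a\perp f_b$ are causally disjoint. Hence I may compose $\pi$ with the adjacent transposition swapping positions $j,j+1$ — this swaps two causally disjoint adjacent morphisms of $\und{f}$ (note that the transposition acts on a tuple still equal to a permutation of $\und{f}$, so "adjacent causally disjoint pair" refers to $f_a,f_b$) — and the result has strictly fewer inversions, while the swapped tuple is still time-ordered (swapping two causally disjoint adjacent entries preserves the time-ordered property, since $J^+_N(S)\cap S'=\emptyset$ and $J^+_N(S')\cap S=\emptyset$ means the ordering constraint is symmetric for that pair, and all other pairs are unaffected). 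Iterating, after finitely many steps we reach $\id$, and reading the transpositions off in reverse expresses $\pi$ as the required product; translating back, $\rho^{-1}\rho'$ is generated by transpositions of adjacent causally disjoint pairs of morphisms in $\und{f}\rho$.

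\medskip

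I expect the main obstacle to be part (iii), specifically making precise and keeping track of the bookkeeping in the bubble-sort: one must be careful that after each swap the relevant pair is genuinely \emph{adjacent} in the current tuple and genuinely \emph{causally disjoint}, and that "decreasing the number of inversions of $\pi$" is the right termination measure. The key conceptual observation that unlocks the argument is the symmetric nature of the obstruction — if $f_a$ comes before $f_b$ in one time-ordering and after $f_b$ in another, the only possibility is $f_a\perp f_b$ — and once this is isolated the induction on inversion number is routine. In parts (i) and (ii) the only genuine (non-combinatorial) input is the geometric identity $J^+_N(f(S))\cap f(M)=f(J^+_M(S))$ for a $\Loc$-morphism $f:M\to N$ and $S\subseteq M$, which follows from causal convexity of the image $f(M)\subseteq N$ together with the fact that $f$ is an isometric embedding preserving time-orientation, so that causal curves in $N$ with endpoints in $f(M)$ are images of causal curves in $M$.
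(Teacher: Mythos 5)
Your proposal is correct and follows essentially the same route as the paper: part (ii) reduces via the block/sum permutation identity to showing that compositions of time-ordered tuples are time-ordered (using the same geometric fact $J^+_N(f(S))\cap f(M)=f(J^+_M(S))$ for $\Loc$-morphisms), and part (iii) rests on the same key observation that any pair whose order is reversed between two time-orderings must be causally disjoint, combined with a sorting factorization into adjacent transpositions. The only cosmetic difference is that the paper establishes causal disjointness of all reversed pairs upfront and then invokes a general factorization of permutations, whereas you interleave the swaps with the disjointness checks and maintain time-orderedness of the intermediate tuples as an invariant; both are sound.
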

\begin{proof}
(i): Trivial. 
\sk

\noindent (ii): Since 
\begin{flalign}
\und{f}(\und{g}_1,\dots,\und{g}_n)\, 
\rho_0\langle k_1,\ldots,k_n\rangle\, (\rho_{\rho_0(1)}\oplus\ldots\oplus\rho_{\rho_0(n)}) = 
(\und{f}\rho_0)(\und{g}_{\rho_0(1)}\rho_{\rho_0(1)},\dots,\und{g}_{\rho_0(n)}\rho_{\rho_0(n)})\quad,
\end{flalign}
it is sufficient to prove that the composition of time-ordered tuples of pairwise disjoint morphisms 
is time-ordered. Therefore, assuming that $\und{f}$ and $\und{g}_i$, for $i=1,\ldots,n$, are 
time-ordered, we have to show that $(f_1\,g_{11},\dots,f_n\,g_{nk_n})$ is time-ordered,
i.e.\ $J^+_N(f_ig_{ii^\prime}(L_{ii^\prime}))\cap f_j g_{jj^\prime}(L_{jj^\prime}) =\emptyset$
for the following two cases: Case 1 is $i<j$ and arbitrary $i^\prime =1,\dots,k_i$ 
and $j^\prime = 1,\dots,k_j$. Case 2 is $i=j$ and $j<j^\prime$.
Case 1 follows immediately from the hypothesis that $\und{f}$ is time-ordered, i.e.\ 
$J^+_N(f_i(M_i)) \cap f_j(M_j) = \emptyset$ for all $i<j$. 
For case 2 we use that $\und{g}_i$ is  time-ordered,
i.e.\ $J^+_{M_i}(g_{ii^\prime}(L_{ii^\prime})) \cap g_{ij^\prime}(L_{ij^\prime}) = \emptyset$
for all $j< j^\prime$, and hence by the properties of $\Loc$-morphisms
\begin{flalign}
J^+_N(f_ig_{ii^\prime}(L_{ii^\prime})) \cap f_ig_{ij^\prime}(L_{ij^\prime}) 
= f_i\Big(J^+_{M_i}(g_{ii^\prime}(L_{ii^\prime})) \cap g_{ij^\prime}(L_{ij^\prime})\Big) = \emptyset\quad.
\end{flalign}
This proves that $(f_1\, g_{11},\ldots,f_n\, g_{nk_n})$ is time-ordered. 
\sk

\noindent (iii): Suppose that $\rho^{-1}\rho^\prime: \und{f}\rho \to \und{f}\rho^\prime$ 
reverses the time-ordering between $f_k$ and $f_\ell$, 
i.e.\ $\rho(i)=k=\rho^\prime(i^\prime)$ and $\rho(j)=\ell=\rho^\prime(j^\prime)$ 
with $i < j$ and $j^\prime < i^\prime$ or vice versa with $j < i$ and $i^\prime < j^\prime$.
Let us consider the case $i < j$ and $j^\prime < i^\prime$, the other one being similar. 
By hypothesis, we have that $J^+_N(f_{\rho(i)}(M_{\rho(i)})) \cap f_{\rho(j)}(M_{\rho(j)}) = \emptyset$ 
and $J^+_N(f_{\rho^\prime(j^\prime)}(M_{\rho^\prime(j^\prime)})) 
\cap f_{\rho^\prime(i^\prime)}(M_{\rho^\prime(i^\prime)}) = \emptyset$, 
which is equivalent to $f_k \perp f_\ell$ being causally disjoint. 
Summing up, this proves that every pair $(f_k,f_\ell)$ of morphisms
whose time-ordering is reversed by $\rho^{-1}\rho^\prime$
is causally disjoint $f_k \perp f_\ell$.
\sk

To conclude the proof, let us recall that every permutation 
$\sigma: (h_1,\dots,h_n)\to (h_{\sigma(1)},\dots,h_{\sigma(n)})$ admits a 
(not necessarily unique) factorization into adjacent transpositions 
that flip only elements whose order is reversed by $\sigma$. 
(One way to obtain such a factorization is as follows: 
Start from $(h_1,\dots,h_n)$ and move by adjacent transpositions
the element $h_{\sigma(1)}$ to the leftmost position. Then move by adjacent transpositions
the element $h_{\sigma(2)}$ to the second leftmost position, and so on.)
This implies that we obtain a factorization 
$\rho^{-1}\rho^\prime = \tau_1\cdots\tau_N: \und{f}\rho \to \und{f}\rho^\prime$, 
where each $\tau_l: \und{f}\rho\tau_1\cdots\tau_{l-1} \to \und{f}\rho\tau_1\cdots\tau_l$
transposes two adjacent $\Loc$-morphisms whose time-ordering is reversed by $\rho^{-1}\rho^\prime$.
Our result in the previous paragraph then implies that each $\tau_l$ is a transposition of adjacent causally 
disjoint pairs of morphisms, which completes our proof.
\end{proof}

Lemma~\ref{lem:timeordered} plays a crucial role in the following 
definition of time-orderable prefactorization algebras because it
ensures that time-orderable tuples of pairwise disjoint morphisms are 
composable and carry permutation actions.
A {\em time-orderable prefactorization algebra} $\FFF$ on $\Loc$ with values in $\CC$
is given by the following data:
\begin{itemize}
\item[(i)] for each $M\in \Loc$, an object $\FFF(M)\in\CC$;
\item[(ii)] for each time-orderable tuple $\und{f}=(f_1,\dots,f_n) : \und{M} \to N$
of pairwise disjoint morphisms, a $\CC$-morphism 
$\FFF(\und{f}) : \bigotimes_{i=1}^n \FFF(M_i)\to \FFF(N)$ 
(called {\em time-ordered product}),
with the convention that to the empty tuple $\emptyset \to N$ 
is assigned a morphism $I\to \FFF(N)$ from the monoidal unit.
\end{itemize}
These data are required to satisfy the analogs of the prefactorization algebra
axioms from Section \ref{subsec:PFA} for time-orderable tuples.  A morphism $\zeta: \FFF\to \GGG$ of time-orderable 
prefactorization algebras is a family $\zeta_M : \FFF(M)\to \GGG(M)$ of $\CC$-morphisms, 
for all $M\in\Loc$, that is compatible with the time-ordered products as in \eqref{eqn:PFACmorphism}.
\begin{defi}\label{def:toPFA}
We denote by $\toPFA$ the category of time-orderable prefactorization algebras on $\Loc$. 
In analogy to Definitions \ref{def:additivePFA} and \ref{def:CauchyconstantPFA},
we introduce the full subcategories $\toPFA^{\add}, \toPFA^c, \toPFA^{\addc}\subseteq \toPFA$
of additive, Cauchy constant and Cauchy constant additive time-orderable prefactorization algebras.
\end{defi}
\begin{rem}\label{rem:tPFAvsPFA}
Each ordinary prefactorization algebra on $\Loc$ defines a time-orderable one
by restriction to time-orderable tuples of pairwise disjoint morphisms. 
This defines a functor $\PFA \to \toPFA$, which is faithful, but not 
necessarily full due to the fact that  not all pairwise disjoint tuples 
$\und{f}: \und{M} \to N$ are time-orderable, cf.\ Remark~\ref{rem:nontimeorderable}. 
This functor clearly preserves both additivity and Cauchy constancy. 
\end{rem}

With these preparations we can now carry out our envisaged construction of a 
time-orderable prefactorization algebra $\bbF[\AAA] \in \toPFA$
from a given algebraic quantum field theory $\AAA \in \AQFT$. 
In particular, we can now complete our attempt from the beginning of this section 
to define the time-ordered factorization products. Let $\und{f} = (f_1,\ldots,f_n): \und{M} \to N$
be a {\em time-orderable} tuple of pairwise disjoint morphisms
with time-ordering permutation $\rho\in\Sigma_n$. We define the corresponding
time-ordered product  $\bbF[\AAA](\und{f}): \bigotimes_{i=1}^n \AAA(M_i) \to \AAA(N)$ 
by the commutative diagram 
\begin{flalign}\label{eqn:toprod}
\xymatrix@C=3em{
\ar[d]_-{\text{permute}} \bigotimes\limits_{i=1}^n \AAA(M_i) \ar[rr]^-{\bbF[\AAA](\und{f})} && \AAA(N)\\
\bigotimes\limits_{i=1}^n \AAA(M_{\rho(i)}) \ar[rr]_-{\Motimes_{i} \AAA(f_{\rho(i)})} && \AAA(N)^{\otimes n} \ar[u]_-{\mu_N^{(n)}}
}
\end{flalign}
in $\CC$, where $\mu_N^{(n)}$ denotes the $n$-ary multiplication 
in the associative and unital algebra $\AAA(N)$ in the given order, 
i.e.\ $\mu^{(n)}_N(a_1 \otimes \cdots \otimes a_n) = a_1 \cdots a_n$
with juxtaposition denoting multiplication in $\AAA(N)$.
As before, for $n=0$ we assign to the empty tuple $\emptyset \to N$ the $\CC$-morphism 
$\eta_N: I \to \AAA(N)$ corresponding to the unit of $\AAA(N)$. 
\begin{lem}\label{lem:toindependence}
The $\CC$-morphism $\bbF[\AAA](\und{f}): \bigotimes_{i=1}^n \AAA(M_i) \to \AAA(N)$ 
defined in \eqref{eqn:toprod} does not depend on the choice of time-ordering permutation 
for $\und{f}: \und{M} \to N$. 
\end{lem}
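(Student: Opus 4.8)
The plan is to deduce the claim from the Einstein causality axiom \eqref{eqn:Einsteincausality} with the help of Lemma~\ref{lem:timeordered}(iii). Fix a time-orderable tuple $\und{f}=(f_1,\dots,f_n):\und{M}\to N$ and two time-ordering permutations $\rho,\rho'\in\Sigma_n$; for $n\le 1$ the time-ordering permutation is unique and there is nothing to prove, so assume $n\ge 2$. First I would note that the right-hand side of \eqref{eqn:toprod} is a well-defined $\CC$-morphism for an \emph{arbitrary} permutation $\sigma\in\Sigma_n$, not only for a time-ordering one; denote it by $\Phi_\sigma:\bigotimes_{i=1}^n\AAA(M_i)\to\AAA(N)$, so that what must be shown is $\Phi_\rho=\Phi_{\rho'}$. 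Then I would apply Lemma~\ref{lem:timeordered}(iii): its proof factors the right permutation $\rho^{-1}\rho':\und{f}\rho\to\und{f}\rho'$ as a composite $\rho^{-1}\rho'=\tau_{l_1}\cdots\tau_{l_N}$ such that, putting $\rho_0:=\rho$ and $\rho_m:=\rho_{m-1}\,\tau_{l_m}$, one has $\rho_N=\rho'$ and, at every stage $m$, the interchanged pair $f_{\rho_{m-1}(l_m)}\perp f_{\rho_{m-1}(l_m+1)}$ is causally disjoint. It therefore suffices to prove the single elementary step $\Phi_\sigma=\Phi_{\sigma\tau_k}$, where $\tau_k\in\Sigma_n$ is the transposition of the adjacent indices $k$ and $k+1$ and the pair $f_{\sigma(k)}\perp f_{\sigma(k+1)}$ is causally disjoint, and then to chain these equalities along the above factorization to obtain $\Phi_\rho=\Phi_{\rho_0}=\cdots=\Phi_{\rho_N}=\Phi_{\rho'}$.

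For the elementary step I would use associativity of the multiplication of $\AAA(N)$ to bracket the $n$-ary product $\mu_N^{(n)}$ in \eqref{eqn:toprod} so that the two factors sitting in slots $k$ and $k+1$ are multiplied together in an intermediate stage. Since $\und{f}\sigma\tau_k$ differs from $\und{f}\sigma$ only by interchanging those two slots — with the permutation isomorphism on the left of \eqref{eqn:toprod} adjusted correspondingly — the whole difference between $\Phi_\sigma$ and $\Phi_{\sigma\tau_k}$ is confined to the two-factor composite $\mu_N\circ\big(\AAA(f_{\sigma(k)})\otimes\AAA(f_{\sigma(k+1)})\big)$ and its flipped counterpart. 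This is exactly where Einstein causality enters: unwinding $\mu_N^{\op}=\mu_N\circ\tau$ and using naturality of the symmetry $\tau$ of $\CC$, the axiom \eqref{eqn:Einsteincausality} for the causally disjoint pair $f_{\sigma(k)}\perp f_{\sigma(k+1)}$ says precisely that
\begin{equation*}
\mu_N\circ\big(\AAA(f_{\sigma(k)})\otimes\AAA(f_{\sigma(k+1)})\big)\;=\;\mu_N\circ\big(\AAA(f_{\sigma(k+1)})\otimes\AAA(f_{\sigma(k)})\big)\circ\tau\quad,
\end{equation*}
and substituting this identity into the bracketed form of \eqref{eqn:toprod} yields $\Phi_\sigma=\Phi_{\sigma\tau_k}$.

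I expect essentially all of the genuine content to be supplied already by Lemma~\ref{lem:timeordered}(iii) — together with the Lorentzian-geometric facts used in its proof — so the remaining work is bookkeeping with permutation actions, for which I would adopt the conventions of \cite{Yau}, plus a single invocation of Einstein causality. The one subtlety to watch is that the intermediate permutations $\rho_m$ occurring in the factorization need not be time-ordering permutations for $\und{f}$, which is why one is forced to work with the auxiliary morphisms $\Phi_\sigma$ and to verify causal disjointness at the $m$-th step against the \emph{current} tuple $\und{f}\rho_{m-1}$ rather than against $\und{f}$ itself; this, however, is exactly the output of the proof of Lemma~\ref{lem:timeordered}(iii), so no new geometry is needed. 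The only mildly tedious remaining point is tracking how the \textquotedblleft permute\textquotedblright{} isomorphism in \eqref{eqn:toprod} changes under $\sigma\mapsto\sigma\tau_k$, which is routine.
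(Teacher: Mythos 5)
Your proposal is correct and follows essentially the same route as the paper: the paper's proof likewise invokes Lemma~\ref{lem:timeordered}~(iii) to factor $\rho^{-1}\rho'$ into adjacent transpositions of causally disjoint pairs and then concludes by the Einstein causality axiom \eqref{eqn:Einsteincausality}. Your write-up merely makes explicit the bookkeeping (the auxiliary morphisms $\Phi_\sigma$, the associativity bracketing, and the observation that intermediate permutations need not themselves be time-ordering permutations) that the paper leaves implicit.
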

\begin{proof}
Consider time-ordering permutations $\rho, \rho^\prime \in \Sigma_n$ for $\und{f}$. 
Recalling Lemma~\ref{lem:timeordered}~(iii), the right permutation
$\rho^{-1}\rho^\prime: \und{f}\rho \to \und{f}\rho^\prime$ 
is generated by transpositions of adjacent causally disjoint pairs of morphisms. 
Hence, the claim follows from the Einstein causality axiom \eqref{eqn:Einsteincausality} 
of the algebraic quantum field theory $\AAA \in \AQFT$. 
\end{proof}

\begin{theo}\label{theo:AQFTtotPFA}
Let $\AAA \in \AQFT$ be an algebraic quantum field theory.
Then the following data defines a time-orderable prefactorization algebra $\bbF[\AAA] \in \toPFA$:
\begin{itemize}
\item[(i)] for each $M \in \Loc$, define $\bbF[\AAA](M) := \AAA(M) \in \CC$ 
via the forgetful functor $\Alg \to \CC$;

\item[(ii)] for each time-orderable tuple of pairwise disjoint morphisms 
$\und{f} = (f_1,\ldots,f_n): \und{M} \to N$, define the time-ordered product 
$\bbF[\AAA](\und{f}): \bigotimes_{i=1}^n \bbF[\AAA](M_i) \to \bbF[\AAA](N)$ 
according to \eqref{eqn:toprod} and Lemma~\ref{lem:toindependence} and, 
for each empty tuple $\emptyset \to N$, assign the unit $\eta_N: I \to \bbF[\AAA](N)$ of $\AAA(N)$. 
\end{itemize}
The assignment $\AAA \mapsto \bbF[\AAA]$ 
canonically extends to a functor $\bbF: \AQFT \to \toPFA$. 
\end{theo}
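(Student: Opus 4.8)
The plan is to verify that the data in (i) and (ii) satisfy the three time-orderable prefactorization algebra axioms---composition \eqref{eqn:FAcomp}, unitality \eqref{eqn:FAperm}-style identity on $\id_M$, and equivariance \eqref{eqn:FAperm}---and then that $\AAA \mapsto \bbF[\AAA]$ is functorial. Throughout, the key point is that $\bbF[\AAA](\und{f})$ is, after permuting the inputs by a time-ordering permutation $\rho$, simply the ordered product $\mu_N^{(n)}$ applied to the images $\AAA(f_{\rho(i)})$; so every axiom reduces to an identity about iterated multiplications in the associative unital algebras $\AAA(N)$, combined with the bookkeeping of time-ordering permutations supplied by Lemma~\ref{lem:timeordered}. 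Lemma~\ref{lem:toindependence} guarantees that the choice of $\rho$ is immaterial, so in each verification I am free to pick a convenient time-ordering permutation.

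First I would check the identity axiom: for $\und{f} = (\id_M)$ the unique time-ordering permutation is trivial and $\mu_M^{(1)} = \id$, so $\bbF[\AAA](\id_M) = \AAA(\id_M) = \id_{\AAA(M)}$ by functoriality of $\AAA$. Next, equivariance: given a time-orderable $\und{f}=(f_1,\dots,f_n):\und{M}\to N$ with time-ordering permutation $\rho$ and a permutation $\sigma\in\Sigma_n$, Lemma~\ref{lem:timeordered}(i) tells us $\sigma^{-1}\rho$ is a time-ordering permutation for $\und{f}\sigma$; plugging this into \eqref{eqn:toprod} for $\bbF[\AAA](\und{f}\sigma)$ and comparing with $\bbF[\AAA](\und{f})$ precomposed with the symmetry isomorphism permuting the tensor factors by $\sigma$, both sides become $\mu_N^{(n)}\circ\big(\Motimes_i \AAA(f_{\rho(i)})\big)$ composed with the \emph{same} total reordering of inputs; the only subtlety is to confirm that the two ways of reordering (first by $\sigma$, then by $\sigma^{-1}\rho$, versus directly by $\rho$) agree as morphisms in $\CC$, which is the coherence of the symmetric monoidal structure together with $(\sigma^{-1}\rho)$ acting after $\sigma$ giving $\rho$.

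The main work is the composition axiom \eqref{eqn:FAcomp}. Given $\und{f}:\und{M}\to N$ with time-ordering permutation $\rho_0$ and $\und{g}_i:\und{L}_i\to M_i$ with time-ordering permutations $\rho_i$, Lemma~\ref{lem:timeordered}(ii) provides an explicit time-ordering permutation $\rho := \rho_0\langle k_1,\dots,k_n\rangle\,(\rho_{\rho_0(1)}\oplus\cdots\oplus\rho_{\rho_0(n)})$ for the composite $\und{f}(\und{g}_1,\dots,\und{g}_n)$. Using this $\rho$ in \eqref{eqn:toprod}, I would expand $\bbF[\AAA]\big(\und{f}(\und{g}_1,\dots,\und{g}_n)\big)$ and, separately, $\bbF[\AAA](\und{f})\circ\big(\Motimes_i \bbF[\AAA](\und{g}_i)\big)$, and show both equal the common morphism obtained by applying $\mu_N^{(k_1+\cdots+k_n)}$ to the suitably permuted tensor product of all the $\AAA(f_i\,g_{ij})$. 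Concretely, the block structure of $\rho$ says: first each inner product $\mu_{M_i}^{(k_i)}$ assembles the $k_i$ factors from $\und{L}_i$ into $\AAA(M_i)$, then $\AAA(f_i)$ pushes forward to $\AAA(N)$, then $\mu_N^{(n)}$ multiplies in the $\rho_0$-order; associativity of multiplication in $\AAA(N)$ and the fact that $\AAA(f_i)$ is an algebra homomorphism (so $\AAA(f_i)\circ\mu_{M_i}^{(k_i)} = \mu_N^{(k_i)}\circ\AAA(f_i)^{\otimes k_i}$) collapse this nested expression into a single $(k_1+\cdots+k_n)$-ary product on $\AAA(N)$, matching the direct expansion. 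I expect this associativity-plus-homomorphism collapse, together with carefully tracking that the permutation of tensor factors implicit in both expansions is exactly $\rho$, to be the main obstacle---not because any single step is deep, but because the indexing of block and sum permutations against the symmetric monoidal reorderings must be matched precisely (this is where the conventions of \cite{Yau} referenced before Lemma~\ref{lem:timeordered} are essential).

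Finally, for functoriality I would let $\kappa:\AAA\to\BBB$ be a morphism in $\AQFT$, i.e.\ a natural transformation of $\Alg$-valued functors, and set $\bbF[\kappa]_M := \kappa_M$. Compatibility with time-ordered products \eqref{eqn:PFACmorphism} follows because $\kappa_N$ is an algebra homomorphism, hence commutes with all $\mu_N^{(n)}$, and naturality of $\kappa$ gives $\kappa_N\circ\AAA(f_{\rho(i)}) = \BBB(f_{\rho(i)})\circ\kappa_{M_{\rho(i)}}$; inserting these into \eqref{eqn:toprod} yields the required square. Preservation of identities and composition of morphisms is then immediate from the same properties for $\kappa\mapsto(\kappa_M)$, so $\bbF$ is a well-defined functor $\AQFT\to\toPFA$.
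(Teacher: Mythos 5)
Your proposal is correct and follows essentially the same route as the paper: the axioms are verified via Lemma~\ref{lem:timeordered}~(i) for equivariance and Lemma~\ref{lem:timeordered}~(ii) for composition (with Lemma~\ref{lem:toindependence} licensing the free choice of time-ordering permutation), the composition axiom collapsing via associativity of $\mu_N^{(n)}$ and the fact that each $\AAA(f_i)$ is an algebra morphism, and functoriality following from naturality of $\kappa$ together with $\kappa_N$ commuting with the iterated multiplications. The paper's proof states these steps more tersely, but the underlying argument is identical to yours.
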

\begin{proof}
Lemma \ref{lem:timeordered} immediately implies that 
$\bbF[\AAA]$ satisfies the axioms of time-orderable prefactorization algebras.
More explicitly, Lemma \ref{lem:timeordered} (i) implies the
equivariance axiom \eqref{eqn:FAperm} for all time-orderable
tuples and Lemma \ref{lem:timeordered} (ii) implies
the composition axiom \eqref{eqn:FAcomp} for all
time-orderable tuples. By definition, we also have that
$\bbF[\AAA](\id_M) = \id_{\bbF[\AAA](M)}$, for all $M\in\Loc$.
\sk

Concerning functoriality of the assignment $\AAA \mapsto \bbF[\AAA]$,
we have to show that every $\AQFT$-morphism $\kappa :\AAA\to \BBB$ 
canonically defines a $\toPFA$-morphism $\bbF[\kappa]:\bbF[\AAA] \to \bbF[\BBB]$.
Observe that, for every time-orderable tuple $\und{f}: \und{M}\to N$ with time-ordering
permutation $\rho\in\Sigma_n$, the diagram
\begin{flalign}
\xymatrix@C=5em{
\bigotimes\limits_{i=1}^n \AAA(M_i) \ar[r]^-{\text{permute}} \ar[d]_-{\Motimes_{i} \kappa_{M_i}} 
& \bigotimes\limits_{i=1}^n \AAA(M_{\rho(i)}) \ar[r]^-{\Motimes_{i} \AAA(f_{\rho(i)})} \ar[d]_-{\Motimes_{i}\kappa_{M_{\rho(i)}}}
& \AAA(N)^{\otimes n} \ar[r]^-{\mu^{(n)}_N} \ar[d]_-{\kappa_N^{\otimes n}} 
& \AAA(N) \ar[d]^-{\kappa_N} \\ 
\bigotimes\limits_{i=1}^n \BBB(M_i) \ar[r]_-{\text{permute}} 
& \bigotimes\limits_{i=1}^n \BBB(M_{\rho(i)}) \ar[r]_-{\Motimes_{i} \BBB(f_{\rho(i)})} 
& \BBB(N)^{\otimes n} \ar[r]_-{\mu^{(n)}_N} & \BBB(N)
}
\end{flalign}
in $\CC$ commutes. Hence, the family $\kappa_M: \AAA(M) \to \BBB(M)$ of $\CC$-morphisms 
defines a $\toPFA$-morphism $\bbF[\kappa]: \bbF[\AAA] \to \bbF[\BBB]$. 
\end{proof}

\begin{propo}\label{propo:AQFTtotPFAaddc}
$\AAA \in \AQFT$ is additive (respectively Cauchy constant) 
if and only if $\bbF[\AAA] \in \toPFA$ is additive (respectively Cauchy constant).
In particular, the functor $\bbF: \AQFT \to \toPFA$ from Theorem \ref{theo:AQFTtotPFA}
restricts to full subcategories as
$\bbF: \AQFT^{\add} \to \toPFA^{\add}$, $\bbF: \AQFT^{c} \to \toPFA^{c}$ 
and $\bbF : \AQFT^{\addc} \to \toPFA^{\addc}$. 
\end{propo}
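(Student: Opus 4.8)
The plan is to prove the two implications of the ``if and only if'' separately, and then observe that the ``in particular'' part follows immediately. The key point is Remark~\ref{rem:algcolim}: since $\RC_M$ is a directed set by Lemma~\ref{lem:RCMdirected}, additivity of $\AAA \in \AQFT$ can be checked on the underlying $\CC$-valued functor, forgetting the algebra structures. By construction in Theorem~\ref{theo:AQFTtotPFA}~(i), the underlying $\CC$-valued functor of $\bbF[\AAA]$ coincides on objects and on $\Loc$-morphisms with the underlying $\CC$-valued functor of $\AAA$; in particular $\bbF[\AAA]\vert_M : \RC_M \to \CC$ equals $\AAA\vert_M : \RC_M \to \CC$. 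Hence the canonical morphism $\colim(\bbF[\AAA]\vert_M : \RC_M \to \CC) \to \bbF[\AAA](M)$ is literally the same $\CC$-morphism as $\colim(\AAA\vert_M : \RC_M \to \CC) \to \AAA(M)$, so one is an isomorphism in $\CC$ if and only if the other is. This settles the additivity equivalence.

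For Cauchy constancy the argument is even more direct: by definition $\AAA$ is Cauchy constant if $\AAA(f)$ is an isomorphism in $\Alg$ for every Cauchy morphism $f : M \stackrel{c}{\to} N$, which (since the forgetful functor $\Alg \to \CC$ both preserves and reflects isomorphisms) is equivalent to $\AAA(f)$ being an isomorphism in $\CC$. On the other hand $\bbF[\AAA]$ is Cauchy constant if $\bbF[\AAA](f) = \AAA(f)$ is an isomorphism in $\CC$ for every Cauchy morphism $f$, by Theorem~\ref{theo:AQFTtotPFA}~(i) and the definition of Cauchy constancy for time-orderable prefactorization algebras. These two conditions coincide, proving the equivalence.

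Finally, the ``in particular'' statement follows at once: if $\AAA \in \AQFT^{\add}$ then $\bbF[\AAA]$ is additive by the first equivalence, hence $\bbF[\AAA] \in \toPFA^{\add}$; likewise for $\AQFT^c$ and $\toPFA^c$; and combining both gives $\bbF : \AQFT^{\addc} \to \toPFA^{\addc}$. Since $\bbF$ is already a functor on the full categories by Theorem~\ref{theo:AQFTtotPFA}, restricting its domain and codomain to these full subcategories is automatic.

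I do not expect any genuine obstacle here; the statement is essentially a bookkeeping consequence of the fact that $\bbF$ does not alter the underlying $\CC$-valued functor. The only mild subtlety worth spelling out is the appeal to Remark~\ref{rem:algcolim} to reduce additivity in $\Alg$ to additivity in $\CC$, together with the (standard) fact that the forgetful functor $\Alg \to \CC$ creates isomorphisms; both are already available in the excerpt.
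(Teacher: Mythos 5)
Your proposal is correct and follows essentially the same route as the paper's proof: both reduce additivity to the underlying $\CC$-valued functor via Remark~\ref{rem:algcolim} and the fact that $\bbF[\AAA]$ and $\AAA$ coincide as functors $\Loc\to\CC$, and both handle Cauchy constancy by noting that the forgetful functor $\Alg\to\CC$ preserves and detects isomorphisms. No gaps.
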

\begin{proof}
Let us recall that, by our construction, the underlying functors 
$\bbF[\AAA] = \AAA : \Loc\to \CC$ to the category $\CC$ coincide.
It is then a consequence of Remark \ref{rem:algcolim}
that $\bbF[\AAA]$ is additive if and only if  $\AAA$ is additive.
Furthermore, because the forgetful functor $\Alg \to \CC$ preserves 
and detects isomorphisms, it follows that $\bbF[\AAA]$ is Cauchy constant
if and only if $\AAA$ is Cauchy constant.
\end{proof}


\section{\label{sec:equivalence}Equivalence theorem}

\subsection{Main result}
The aim of this section is to prove that our two constructions
from Sections \ref{sec:construction} and \ref{sec:inverseconstruction} 
are inverse to each other when restricted to their common domain of validity.
Recall that in Section \ref{sec:construction} we considered
Cauchy constant additive prefactorization algebras
and constructed a functor $\bbA : \PFA^{\addc} \to \AQFT^{\addc}$
to the category of Cauchy constant additive algebraic quantum field theories,
cf.\ Theorem \ref{theo:PFAtoAQFT}. Because the construction 
presented in Section \ref{sec:construction} only involves 
{\em time-orderable} tuples of disjoint morphisms,
this functor factors through the forgetful functor
$\PFA^{\addc} \to \toPFA^{\addc}$ (cf.\ Remark \ref{rem:tPFAvsPFA}) to the 
category of Cauchy constant additive {\em time-orderable}
prefactorization algebras, cf.\ Definition \ref{def:toPFA}.
We shall denote the resulting functor by the same symbol
$\bbA : \toPFA^{\addc} \to\AQFT^{\addc}$.
Let us further recall the functor $\bbF : \AQFT^{\addc} \to \toPFA^{\addc}$ from 
Theorem \ref{theo:AQFTtotPFA} and Proposition \ref{propo:AQFTtotPFAaddc}.
Our main result is the following equivalence theorem.
\begin{theo}\label{theo:equivalence}
The two functors $\bbA : \toPFA^{\addc} \to \AQFT^{\addc}$ and 
$\bbF : \AQFT^{\addc} \to \toPFA^{\addc}$ 
are inverses of each other. As a consequence, the
category $\AQFT^{\addc}$ of Cauchy constant additive 
algebraic quantum field theories is isomorphic to the category
$\toPFA^{\addc}$ of Cauchy constant additive time-orderable prefactorization algebras.
\end{theo}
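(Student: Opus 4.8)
The plan is to show that the two functors $\bbA$ and $\bbF$ are mutually inverse by checking separately that $\bbF\circ\bbA = \id_{\toPFA^{\addc}}$ and $\bbA\circ\bbF = \id_{\AQFT^{\addc}}$, in each case on objects first and then observing that the identifications are the identity (so no further check on morphisms is needed). Since both constructions leave the underlying object $\FFF(M) = \AAA(M) \in \CC$ and the underlying $\Loc$-indexed family untouched — $\bbA[\FFF](M) = \FFF(M)$ with structure maps $\FFF(f)$, and $\bbF[\AAA](M) = \AAA(M)$ with structure maps $\AAA(f)$ — the only thing to verify is that the \emph{additional structure} that each construction produces coincides with the one it started from: for $\bbF\circ\bbA$, that the time-ordered products of $\bbF[\bbA[\FFF]]$ agree with the original time-ordered products of $\FFF$; for $\bbA\circ\bbF$, that the algebra structure $(\mu_N,\eta_N)$ produced by $\bbA$ from $\bbF[\AAA]$ agrees with the original algebra structure on $\AAA(N)$.

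For the composite $\bbA\circ\bbF$, I would fix $N\in\Loc$ and compute the multiplication $\mu_N$ that Proposition~\ref{propo:associativeunital} assigns to the time-orderable prefactorization algebra $\bbF[\AAA]$. By Corollary~\ref{cor:multunique} I may use any object of $\PP_N$; picking a Cauchy surface $\Sigma$ and the pair $\und{\Sigma} = (\Sigma_+,\Sigma_-)$ with $\Sigma_\pm = I^\pm_N(\Sigma)$, the defining diagram \eqref{eqn:multiplicationmap} expresses $\mu_N$ in terms of $\bbF[\AAA](\iota_{\und{\Sigma}}^N)$ and the (invertible) maps $\bbF[\AAA](\iota_{\Sigma_\pm}^N) = \AAA(\iota_{\Sigma_\pm}^N)$. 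Now $(\iota_{\Sigma_+}^N,\iota_{\Sigma_-}^N)$ is a time-ordered pair since $J^+_N(\Sigma_-)\cap\Sigma_+ = \emptyset$, so \eqref{eqn:toprod} gives $\bbF[\AAA](\iota_{\und{\Sigma}}^N) = \mu_N^{\AAA}\circ(\AAA(\iota_{\Sigma_+}^N)\otimes\AAA(\iota_{\Sigma_-}^N))$ with $\mu_N^\AAA$ the original multiplication; composing with the inverses of $\AAA(\iota_{\Sigma_\pm}^N)$ recovers $\mu_N = \mu_N^\AAA$ exactly. The unit is immediate since both $\bbA$ and $\bbF$ handle the empty tuple by $\eta_N$. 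Hence $\bbA[\bbF[\AAA]] = \AAA$ on the nose.

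For the composite $\bbF\circ\bbA$, fix a time-orderable tuple $\und{f} = (f_1,\dots,f_n):\und{M}\to N$ with time-ordering permutation $\rho$, so that $\und{f}\rho = (f_{\rho(1)},\dots,f_{\rho(n)})$ is time-ordered. By \eqref{eqn:toprod} the time-ordered product of $\bbF[\bbA[\FFF]]$ is $\mu_N^{(n)}\circ(\Motimes_i \bbA[\FFF](f_{\rho(i)}))$ precomposed with the permutation, where $\mu_N^{(n)}$ is the $n$-ary multiplication built from the $\bbA$-multiplication $\mu_N$ on $\FFF(N)$. I would then show this equals the original $\FFF(\und{f})$. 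The clean way is to reduce to the time-ordered case (using \eqref{eqn:FAperm} for $\FFF$ on one side and the analogous permute-square on the other, so it suffices to treat $\und{f}$ itself time-ordered), then induct on $n$: for $n=0,1$ the identification is built into the definitions, and for the inductive step write the time-ordered $n$-tuple $(f_1,\dots,f_n)$ as a composition — split off $f_1$, whose image is "latest", and realize $\FFF(\und{f})$ via the composition axiom \eqref{eqn:FAcomp} together with a choice of Cauchy surface separating $f_1(M_1)$ from the rest (this uses that a time-ordered family of relatively-compact-looking regions can be separated by a Cauchy surface, exactly the Lorentzian input already used in Lemmas~\ref{lem:RCmultcompatibility} and~\ref{lem:caus}, via \cite[Theorem~3.8]{BernalSanchez}; one should be a little careful here as the $f_i(M_i)$ need not be relatively compact, so one passes to the colimit over $\RC$ as in Propositions~\ref{propo:multnaturality} and~\ref{propo:caus}). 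Matching the two sides then amounts to the bracketing $\mu_N^{(n)} = \mu_N\circ(\id\otimes\mu_N^{(n-1)})$ and the inductive hypothesis.

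The main obstacle I anticipate is precisely this last step: reconstructing the original factorization product $\FFF(\und{f})$ of a Cauchy constant additive prefactorization algebra in terms of iterated binary multiplications $\mu_N$. Unlike the $\bbA\circ\bbF$ direction, which is essentially a one-line unwinding, here one must genuinely use Cauchy constancy and additivity of $\FFF$ to move the disjoint regions $f_i(M_i)$ into a "standard" time-ordered configuration stacked against a family of nested Cauchy surfaces, and then invoke \eqref{eqn:FAcomp} and \eqref{eqn:FAperm} to rewrite $\FFF(\und{f})$ accordingly. The relative-compactness subtlety (the $f_i(M_i)$ need not be relatively compact in $N$) forces the same colimit-over-$\RC_{M_i}$ argument used in Section~\ref{sec:construction}, and keeping the time-ordering bookkeeping consistent through the block/sum permutations of Lemma~\ref{lem:timeordered}~(ii) is where the real care is required. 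Everything else — functoriality, the unit, the morphism-level statements — is automatic from the fact that both round-trips are the identity on underlying $\CC$-objects and $\CC$-morphisms.
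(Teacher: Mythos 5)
Your proposal is correct and follows essentially the same route as the paper: the $\bbA\circ\bbF$ direction is the same one-line unwinding, and for $\bbF\circ\bbA$ the paper likewise reduces via equivariance to time-ordered tuples, via additivity to relatively compact images, and via Cauchy constancy to tuples whose images can be separated by a Cauchy surface (concretely, it shrinks each $M_i$ to $I^{\pm}_{M_i}(\Sigma_i)$ along Cauchy morphisms and extends a compact achronal union of the images via \cite[Theorem~3.8]{BernalSanchez} --- the original $f_i(M_i)$ need not be directly separable, which is exactly the point you flag), before iterating the composition axiom down to the binary case $\iota_{\und{\Sigma}}^N=(\iota_{\Sigma_+}^N,\iota_{\Sigma_-}^N)$. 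The only cosmetic difference is that the paper peels off the \emph{earliest} morphism into $\Sigma_-$ at each step rather than the latest one.
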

\begin{proof}
The only non-trivial check to confirm that
$\bbA \circ \bbF = \id_{\AQFT^{\addc}}$
amounts to show that, for every $\AAA \in \AQFT^{\addc}$, 
the multiplications on $\bbA[\bbF[\AAA]](M)$ 
and on $\AAA(M)$ coincide, for all $M\in\Loc$. 
By \eqref{eqn:multiplicationmap} and \eqref{eqn:toprod}, 
the multiplication on $\bbA[\bbF[\AAA]](M)$  is given by 
\begin{flalign}
\xymatrix@C=2.5em{
\AAA(M) \otimes \AAA(M) && \ar[ll]_-{\AAA(\iota_{U_+}^M) \otimes \AAA(\iota_{U_-}^M)}^-{\cong} 
\AAA(U_+) \otimes \AAA(U_-) \ar[rr]^-{\AAA(\iota_{U_+}^M) \otimes \AAA(\iota_{U_-}^M)} 
&& \AAA(M)^{\otimes 2} \ar[r]^-{\mu_M} & \AAA(M)
}\quad,
\end{flalign}
where $\iota_{\und{U}}^M = (\iota_{U_+}^M,\iota_{U_-}^M): \und{U}\to M$ is any object of $\PP_M$. 
This clearly coincides with the original multiplication $\mu_M$ on $\AAA(M)$. 
\sk

Conversely, to show that $\bbF \circ \bbA = \id_{\toPFA^{\addc}}$, 
we have to confirm that the time-ordered products of $\bbF[\bbA[\FFF]] \in \toPFA^{\addc}$ 
coincide with the original time-ordered products of $\FFF \in \toPFA^{\addc}$. 
In arity $n=0$ and $n=1$ this is obvious. For $n\geq 2$, this is more complicated and requires some
preparations. Using equivariance under permutation actions, it is sufficient to compare 
the time-ordered products for {\em time-ordered} (in contrast to time-orderable) 
tuples $\und{f}=(f_1,\ldots,f_n): \und{M} \to N$. Because of additivity, we can further
restrict to the case where $\und{f}: \und{M} \to N$ has relatively compact images, i.e.\ 
$f_i(M_i)\subseteq N$ is relatively compact, for all $i=1,\dots,n$.
We shall now show that, due to Cauchy constancy, we can further restrict our attention 
to time-ordered tuples $\und{h}= (h_1,\dots,h_n): \und{L} \to N$ with relatively compact 
images for which there exists a Cauchy surface $\Sigma$ of $N$ such that
\begin{flalign}\label{eqn:assumption}
h_1(L_1),\dots, h_{n-1}(L_{n-1}) \subseteq \Sigma_+ := I^+_N(\Sigma) \subseteq N 
\qquad\mbox{and}\qquad
h_{n}(L_n) \subseteq \Sigma_- := I^-_N(\Sigma) \subseteq N\quad.
\end{flalign}
Indeed, given any time-ordered tuple $\und{f}: \und{M} \to N$ with relatively compact 
images, we shall prove below that there exists a family of Cauchy morphisms
$g_i : L_i \stackrel{c}{\to} M_i$, for $i=1,\dots,n$, such that
$\und{h}:= \und{f}(g_1,\dots,g_n)= (f_1\,g_1,\dots,f_n\, g_n) : \und{L} \to N$ 
admits a Cauchy surface $\Sigma$ that satisfies \eqref{eqn:assumption}. Cauchy constancy 
and the fact that the time-ordered products of $\bbF[\bbA[\FFF]]$
and $\FFF$ agree in arity $n=1$ then implies that $\bbF[\bbA[\FFF]](\und{f}) = \FFF(\und{f})$
if and only if $\bbF[\bbA[\FFF]](\und{h}) = \FFF(\und{h})$. To exhibit such
a family of Cauchy morphisms for $\und{f}: \und{M} \to N$, 
let us choose Cauchy surfaces $\Sigma_i$ of $M_i$, for 
$i=1,\dots,n$, and define $L_i := I^+_{M_i}(\Sigma_i)$, for $i=1,\dots,n-1$,
and $L_n := I^-_{M_n}(\Sigma_n)$. Let us further define
$g_i := \iota_{L_i}^{M_i} : L_i\stackrel{c}{\to} M_i$ by subset inclusion, for $i=1,\dots,n$.
A Cauchy surface $\Sigma$ of $N$ is constructed
by extending via \cite[Theorem~3.8]{BernalSanchez} the 
compact and achronal subset
\begin{flalign}
\widetilde{\Sigma} \,:=\, \bigcup_{i=1}^n \bigg( \overline{f_i(\Sigma_i)} \Big\backslash I^+_N \Big( \bigcup_{j=i+1}^n \overline{f_j(\Sigma_j)} \Big) \bigg) \subseteq N \quad.
\end{flalign}
By direct inspection one observes that $\Sigma$ fulfills \eqref{eqn:assumption}. 
\sk

Using \eqref{eqn:assumption}, we obtain a factorization
\begin{flalign}
\und{h} = \iota_{\und{\Sigma}}^N \big((h_1^{\Sigma_+},\dots,h_{n-1}^{\Sigma_+}), h_n^{\Sigma_-}\big)\quad,
\end{flalign}
where on the right-hand side we regard $h_i^{\Sigma_+} : L_i \to \Sigma_+$
as morphisms to $\Sigma_+$, for $i=1,\dots,n-1$, and
$h_n^{\Sigma_-} : L_n\to \Sigma_-$ as a morphism to $\Sigma_-$.
Iterating this construction, we observe that it is sufficient
to prove that $\bbF[\bbA[\FFF]](\iota_{\und{\Sigma}}^N) = \FFF(\iota_{\und{\Sigma}}^N)$,
for all $\iota_{\und{\Sigma}}^N = (\iota_{\Sigma_+}^N,\iota_{\Sigma_-}^N) : \und{\Sigma}\to N$,
where $N\in\Loc$ and the Cauchy surface $\Sigma$ of $N$ is arbitrary.
Using \eqref{eqn:toprod} and \eqref{eqn:multiplicationmap},
we obtain that $\bbF[\bbA[\FFF]](\iota_{\und{\Sigma}}^N ) : \FFF(\Sigma_+)\otimes\FFF(\Sigma_-)\to \FFF(N)$
is given by
\begin{flalign}
\xymatrix@C=2.5em{
\FFF(\Sigma_+) \otimes \FFF(\Sigma_-) \ar[rr]^-{\FFF(\iota_{\Sigma_+}^N)\otimes\FFF(\iota_{\Sigma_-}^N)} && 
\FFF(N)\otimes \FFF(N) && \ar[ll]_-{\FFF(\iota_{\Sigma_+}^N)\otimes\FFF(\iota_{\Sigma_-}^N)}^-{\cong} 
\FFF(\Sigma_+) \otimes \FFF(\Sigma_-)  \ar[r]^-{\FFF(\iota_{\und{\Sigma}}^N)}& \FFF(N)
}\quad,
\end{flalign}
which clearly coincides with the original time-ordered product $\FFF(\iota_{\und{\Sigma}}^N):
\FFF(\Sigma_+)\otimes \FFF(\Sigma_-)\to \FFF(N)$. This concludes our proof.
\end{proof}

\begin{rem}\label{rem:operad}
We would like to mention very briefly a more abstract operadic perspective on the
Equivalence Theorem \ref{theo:equivalence}. Recall from \cite{BSWoperad}
that there exists a $\Set$-valued colored operad $\O_{(\Loc,\perp)}$
whose category of $\CC$-valued algebras is the category
of algebraic quantum field theories, i.e.\ $\AQFT = \Alg_{\O_{(\Loc,\perp)}}(\CC)$. 
We can also define a $\Set$-valued colored operad $\P_{\Loc}$ such that
$\toPFA = \Alg_{\P_{\Loc}}(\CC)$. Concretely, the colors of
$\P_{\Loc}$ are the objects of $\Loc$ and the sets of operations
are $\P_{\Loc}\big(\substack{N \\ \und{M}}\big) := \big\{\text{all time-orderable tuples  }\und{f} : \und{M}\to N\big\}$.
Operadic composition is given by \eqref{eqn:tuplecomposition}, the operadic units
are $\id_M \in \P_{\Loc}\big(\substack{M\\M}\big)$ and the permutation actions
are $\P_{\Loc}(\sigma) : \P_{\Loc}\big(\substack{N \\ \und{M}}\big)\to 
\P_{\Loc}\big(\substack{N \\ \und{M}\sigma}\big)\,,~\und{f}\mapsto \und{f}\sigma$, 
for $\sigma\in\Sigma_n$. Using Lemma \ref{lem:timeordered} and the definition
of the colored operad $\O_{(\Loc,\perp)}$ given in \cite{BSWoperad}, one immediately
observes that the component maps
\begin{flalign}
\Phi \, :\,  \P_{\Loc}\big(\substack{N\\\und{M}}\big)~\longrightarrow~\O_{(\Loc,\perp)}\big(\substack{N\\\und{M}}\big)
 ~,~~ \und{f}\, \longmapsto \,\big[\rho^{-1},\und{f}\big]
\end{flalign} 
define a colored operad morphism $\Phi : \P_{\Loc}\to \O_{(\Loc,\perp)}$, 
where $\rho\in\Sigma_n$ is any time-ordering permutation for $\und{f}$. The associated
pullback functor $\Phi^\ast : \Alg_{\O_{(\Loc,\perp)}}(\CC)\to \Alg_{\P_{\Loc}}(\CC)$
is then precisely our functor $\bbF : \AQFT\to \toPFA$  from Theorem \ref{theo:AQFTtotPFA}.
By operadic left Kan extension, there exists an adjunction
\begin{flalign}\label{eqn:adjunction}
\xymatrix{
\Phi_! \,:\, \toPFA ~\ar@<0.5ex>[r] & \ar@<0.5ex>[l]~ \AQFT \,:\,  \Phi^\ast= \bbF
}\quad.
\end{flalign}
Theorem \ref{theo:equivalence} then states that restricting both sides of this adjunction to Cauchy constant 
and additive theories induces an adjoint equivalence $\bbA: \toPFA^{\addc} \stackrel{\sim}{\rightleftarrows}
\AQFT^{\addc} : \bbF$.
\sk

We expect that this operadic perspective will become important when considering
the case where the target category $\CC$ is a higher category or model category.
This generalization is crucial for the description of quantum gauge theories in terms of factorization algebras
\cite{CostelloGwilliam} or algebraic quantum field theories \cite{BSShocolim,BSfibered,BSWhomotopy,BSoverview}.
The adjunction \eqref{eqn:adjunction} then becomes a Quillen adjunction
between model categories, and a reasonable equivalence theorem would
state that suitable restrictions 
to homotopy-invariant analogs of Cauchy constant and additive theories
induce a Quillen equivalence. Proving such an equivalence theorem
in a higher categorical context is technically complicated and 
will not be considered in the present paper.
\end{rem}

\subsection{\label{subsec:Involutions}Transfer of $\ast$-involutions}
Algebraic quantum field theories are typically endowed with the structure
of a $\ast$-involution, i.e.\ they assign $\ast$-algebras to spacetimes.
The aim of this subsection is to introduce $\ast$-involutions
for Cauchy constant additive time-orderable prefactorization algebras
by transferring via our Equivalence Theorem \ref{theo:equivalence} 
the usual concept of $\ast$-involution for algebraic quantum field theories.
The formalization of $\ast$-structures requires the 
underlying category $\CC$ to be an {\em involutive category}, 
see e.g.\  \cite{BSWinvolutivecats}. To simplify our presentation,
we consider only the most relevant case where $\CC=\Vec_{\bbC}^{}$
is the symmetric monoidal category of complex vector spaces,
endowed with the usual involution functor $\overline{(-)} : \Vec_\bbC^{}\to\Vec_{\bbC}^{}$
that assigns to a complex vector space $V\in\Vec_{\bbC}^{}$ its complex 
conjugate vector space $\overline{V}\in\Vec_{\bbC}^{}$. The complex conjugate
of a $\bbC$-linear map $L : V\to W$ is denoted by $\overline{L} : \overline{V}\to \overline{W}$.
We note that $\overline{\overline{V}} = V$, for all $V\in\Vec_{\bbC}^{}$,
and that $\overline{V}\otimes\overline{W} = \overline{V\otimes W}$, for all $V,W\in\Vec_{\bbC}^{}$.
Moreover, complex conjugation on $\bbC$ defines a $\bbC$-linear map
$\ast : \bbC\to\overline{\bbC}$ that satisfies $\overline{\ast} \circ \ast = \id_{\bbC} : \bbC\to
\overline{\overline{\bbC}}=\bbC$.
\sk

The results in \cite{BSWinvolutivecats} allow us to endow the category 
$\AQFT$ of $\Vec_{\bbC}^{}$-valued algebraic quantum field theories 
with an involutive structure, which we denote with an abuse of notation
also by $\overline{(-)} : \AQFT\to\AQFT$. Concretely, the complex conjugate
$\overline{\AAA}\in \AQFT$ of $\AAA\in\AQFT$ is determined by the functor
$\overline{\AAA} : \Loc\to \Alg$ that assigns to $M\in\Loc$
the algebra $\overline{\AAA}(M)$ whose underlying vector space is 
$\overline{\AAA(M)}$ and whose multiplication and unit are
$\overline{\mu_{M}^{\op}} :\overline{\AAA(M)}\otimes \overline{\AAA(M)} = 
\overline{\AAA(M)\otimes\AAA(M)}\to \overline{\AAA(M)}$
and $\overline{\eta_M}\circ \ast : \bbC\to \overline{\bbC} \to \overline{\AAA(M)}$.
(The opposite multiplication appears here because the relevant 
$\ast$-involutions on algebras are order-reversing, i.e.\ $(a\,b)^\ast = b^\ast\,a^\ast$.)
To a $\Loc$-morphism $f:M\to N$, it assigns the algebra morphism
determined by the complex conjugate $\bbC$-linear map
$\overline{\AAA}(f) := \overline{\AAA(f)} : \overline{\AAA(M)}\to\overline{\AAA(N)}$. 
We note that $\overline{\overline{\AAA}} = \AAA$, for all $\AAA\in\AQFT$.
A {\em $\ast$-involution} on an algebraic quantum field theory $\AAA\in\AQFT$
is then defined as an $\AQFT$-morphism $\ast_{\AAA} : \AAA\to \overline{\AAA} $
that satisfies $\overline{\ast_{\AAA}}\circ \ast_{\AAA} =\id_{\AAA} : 
\AAA\to \overline{\overline{\AAA}}=\AAA $. We denote by $\ast\AQFT$
the category whose objects are pairs $(\AAA,\ast_\AAA)$ consisting 
of an $\AAA\in\AQFT$ and a $\ast$-involution $\ast_\AAA$ 
and whose morphisms are $\AQFT$-morphisms $\kappa :\AAA\to \BBB$ 
that preserve the $\ast$-involutions, i.e.\ $\overline{\kappa} \circ \ast_{\AAA} = \ast_\BBB\circ \kappa$.
It is easy to confirm that our definition agrees with the usual one from the literature 
\cite{Brunetti,FewsterVerch,AQFTbook} that considers
functors  $\Loc\to \ast\Alg$ to the category of $\ast$-algebras over $\bbC$,
see \cite{BSWinvolutivecats} for more details.
\sk

The involutive structure on $\AQFT$ restricts
to an involution functor $\overline{(-)} : \AQFT^{\addc}\to \AQFT^{\addc}$
on the full subcategory of Cauchy constant additive algebraic quantum field theories.
By the Equivalence Theorem  \ref{theo:equivalence}, we obtain
a transferred involution functor $\overline{(-)} : \toPFA^{\addc}\to\toPFA^{\addc}$
on the category of Cauchy constant additive time-orderable prefactorization algebras,
which we denote with an abuse of notation by the same symbol.
Concretely, the complex conjugate $\overline{\FFF}\in \toPFA^{\addc}$
of $\FFF \in \toPFA^{\addc}$ is given by $\overline{\FFF} := \bbF[\overline{\bbA[\FFF]}]$.
A {\em $\ast$-involution} on a Cauchy constant additive time-orderable prefactorization
algebra $\FFF \in \toPFA^{\addc}$ is then defined as a $\toPFA^{\addc}$-morphism
$\ast_\FFF : \FFF\to\overline{\FFF}$ that satisfies 
$\overline{\ast_\FFF}\circ\ast_{\FFF} = \id_{\FFF} : \FFF\to\overline{\overline{\FFF}}=\FFF$.
We denote by $\ast\toPFA^{\addc}$ the category whose objects are pairs $(\FFF,\ast_\FFF)$ 
consisting of a $\FFF\in\toPFA^{\addc}$ and a $\ast$-involution $\ast_\FFF$ 
and whose morphisms are $\toPFA^{\addc}$-morphisms $\zeta :\FFF\to \GGG$ 
that preserve the $\ast$-involutions, i.e.\ $\overline{\zeta} \circ \ast_{\FFF} = \ast_\GGG\circ \zeta$.
By construction, the Equivalence Theorem \ref{theo:equivalence} determines
an equivalence $\ast\AQFT^{\addc}\simeq \ast\toPFA^{\addc}$ 
between theories with $\ast$-involutions.
\sk 

From our constructions above, it remains unclear if there exists an {\em intrinsic}
definition of the complex conjugate prefactorization algebra 
$\overline{\FFF} = \bbF[\overline{\bbA[\FFF]}] \in\toPFA^{\addc}$
that does not rely on Cauchy constancy and additivity, i.e.\
that is applicable to {\em all} time-orderable prefactorization algebras in $\toPFA$.
Unfortunately, this does not seem to be the case. To understand
and explain these issues, let us compute explicitly the complex conjugate 
factorization product $\overline{\FFF}(\iota_{(\Sigma_+,\Sigma_-)}^M) : 
\overline{\FFF}(\Sigma_+)\otimes \overline{\FFF}(\Sigma_-)\to\overline{\FFF}(M)$
for the {\em time-ordered} pair of inclusions $\iota_{\Sigma_\pm}^M: \Sigma_\pm \to M$
determined by a choice of Cauchy surface $\Sigma\subset M$ via $\Sigma_\pm = I^\pm_M(\Sigma)$.
Using \eqref{eqn:toprod} and \eqref{eqn:multiplicationmap}, we obtain
the commutative diagram
\begin{flalign}\label{eqn:involutiondiagram}
\xymatrix@C=7em{
\ar@{=}[d]\overline{\FFF(\Sigma_+)}\otimes \overline{\FFF(\Sigma_-)} \ar[rr]^-{\overline{\FFF}(\iota_{(\Sigma_+,\Sigma_-)}^M)} && \overline{\FFF(M)}\\
\overline{\FFF(\Sigma_+)\otimes \FFF(\Sigma_-)} \ar[r]_-{\overline{\FFF(\iota_{\Sigma_+}^M) \otimes \FFF(\iota_{\Sigma_-}^M)}} & \overline{\FFF(M)\otimes\FFF(M)} & \ar[l]_-{\cong}^-{\overline{\FFF(\iota_{\Sigma_-}^M)\otimes \FFF(\iota_{\Sigma_+}^M)}} \overline{\FFF(\Sigma_-)\otimes\FFF(\Sigma_+)}\ar[u]_-{\overline{\FFF(\iota_{(\Sigma_-,\Sigma_+)}^M)}}
}
\end{flalign}
which relates the factorization product $\overline{\FFF}(\iota_{(\Sigma_+,\Sigma_-)}^M)$
of $\overline{\FFF}$ to the factorization product $\FFF(\iota_{(\Sigma_-,\Sigma_+)}^M)$
of $\FFF$. Note that the bottom horizontal arrow uses Cauchy constancy explicitly.
Physically speaking, it propagates observables from the future region $\Sigma_+$ to the 
past region $\Sigma_-$ and observables from $\Sigma_-$ to $\Sigma_+$. 
In particular, in absence of Cauchy constancy,
the diagram in \eqref{eqn:involutiondiagram} can not be used to determine
the factorization product $\overline{\FFF}(\iota_{(\Sigma_+,\Sigma_-)}^M)$
from the factorization products of $\FFF$, because the second bottom horizontal 
arrow is in general not invertible.

\subsection{\label{subsec:KleinGordon}Example: The free Klein-Gordon field}
We apply our general Equivalence Theorem \ref{theo:equivalence}
to the simple example given by the free Klein-Gordon field and thereby 
recover the results from \cite{GwilliamRejzner}.
Let us briefly recall the algebraic quantum field theory
description of the free Klein-Gordon field. For every $M\in\Loc$,
consider the Klein-Gordon operator $P_M := -\square_M + m^2 : C^\infty(M)\to C^\infty(M)$,
where $\square_M$ is the d'Alembert operator and $m^2\geq 0$ is a mass parameter.
$P_M$ admits a unique retarded/advanced Green's operator $G^\pm_M : C^\infty_\cc(M)\to C^\infty(M)$, 
where the subscript `$\cc$' denotes compactly supported functions. The $\bbR$-vector space
$\mathcal{V}(M)$ of linear observables on $M$ is defined as the cokernel
\begin{flalign}
\xymatrix{
C^\infty_\cc(M) \ar[r]^-{P_M} & C^\infty_\cc(M) \ar[r]& \mathcal{V}(M):=C^\infty_\cc(M)\big/ P_M(C^\infty_\cc(M))
}\quad.
\end{flalign}
Because $C^\infty_\cc : \Loc\to \Vec_\bbR^{}$ is a cosheaf for (causally convex) open covers
and $P : C^\infty_\cc\to C^\infty_\cc$ is a natural transformation, it follows that
$\mathcal{V} : \Loc\to\Vec_\bbR^{}$ is a cosheaf too.
Consider the complexified symmetric algebra 
$\mathrm{Sym}_\bbC (\mathcal{V}(M))\in\mathbf{CAlg}$, which is a commutative algebra in the
closed symmetric monoidal category $(\Vec_\bbC^{},\otimes,\bbC,\tau)$ of complex vector spaces.
This algebra is deformed to a noncommutative algebra by introducing a $\star$-product.
For this we first define a (de Rham type) differential $\dd : \mathrm{Sym}_\bbC (\mathcal{V}(M))\to
\mathrm{Sym}_\bbC (\mathcal{V}(M))\otimes \mathcal{V}(M)$ by setting on monomials
\begin{flalign}
\dd\big(\varphi_1\,\cdots\, \varphi_n\big) := \sum_{i=1}^n \varphi_1\,\cdots \,\omi{i} \,\cdots\,\varphi_n \otimes\varphi_i\quad,
\end{flalign}
where $\omi{i}$ means omission of $\varphi_i$. Using the causal propagator
$G_M := G^+_M-G^-_M : \mathcal{V}(M) \to \ker P_M$ and the integration map
$\int_M :  \mathcal{V}(M)\otimes \ker P_M \to\bbR\,,~\varphi\otimes \Phi \mapsto \int_M \varphi\,\Phi\,\vol_M$,
we define  the bi-differential operator
\begin{flalign}
\xymatrix@C=2.5em{
\ar[d]_-{(\id\otimes\tau\otimes\id)\circ (\dd\otimes\dd) }\mathrm{Sym}_\bbC (\mathcal{V}(M))^{\otimes 2} \ar[rr]^-{\langle G_M,\dd\otimes\dd\rangle} && \mathrm{Sym}_\bbC (\mathcal{V}(M))^{\otimes 2}\\
\mathrm{Sym}_\bbC (\mathcal{V}(M))^{\otimes 2}\otimes \mathcal{V}(M)\otimes\mathcal{V}(M)\ar[rr]_-{\id\otimes \id\otimes G_M}&&\mathrm{Sym}_\bbC (\mathcal{V}(M))^{\otimes 2}\otimes\mathcal{V}(M)\otimes\ker P_M \ar[u]_-{\id\otimes \int_M}
}
\end{flalign}
where we recall that $\tau$ is the symmetric braiding on $\Vec_\bbC^{}$, i.e.\ the flip map.
The $\star$-product $\star_M : \mathrm{Sym}_\bbC (\mathcal{V}(M))^{\otimes 2}\to 
\mathrm{Sym}_\bbC (\mathcal{V}(M))$ is then defined by composing
\begin{flalign}\label{eqn:starproduct}
\xymatrix@C=4em{
\mathrm{Sym}_\bbC (\mathcal{V}(M))^{\otimes 2}\ar[rr]^-{\exp\big(\frac{i}{2}\,\langle G_M,\dd\otimes\dd\rangle\big)}
&& \mathrm{Sym}_\bbC (\mathcal{V}(M))^{\otimes 2} \ar[r]^-{\cdot_M}& \mathrm{Sym}_\bbC (\mathcal{V}(M))
}\quad,
\end{flalign}
where $\cdot_M$ denotes the commutative product on $\mathrm{Sym}_\bbC (\mathcal{V}(M))$.
(The exponential series converges because it terminates for polynomials.) 
Setting $\AAA_{\mathrm{KG}}(M) := \big( \mathrm{Sym}_\bbC (\mathcal{V}(M)),\star_M,\eta_M\big)\in\Alg$
with $\eta_M$ the unit of  $\mathrm{Sym}_\bbC (\mathcal{V}(M))$, for all $M\in\Loc$, 
defines a Cauchy constant additive algebraic quantum field theory $\AAA_{\mathrm{KG}}\in\AQFT^{\addc}$.
Note that additivity is a consequence of $\mathcal{V}:\Loc\to\Vec_\bbR^{}$ being a cosheaf.
\sk

Theorem \ref{theo:equivalence} provides a corresponding
Cauchy constant additive time-orderable prefactorization algebra
$\FFF_{\mathrm{KG}} := \bbF[\AAA_{\mathrm{KG}}] \in \toPFA^{\addc}$.
To get some intuition on what this prefactorization algebra does,
let us analyze the explicit form of the binary time-ordered products
$\FFF_{\mathrm{KG}} (\und{f}) : \FFF_{\mathrm{KG}}(M_1)\otimes\FFF_{\mathrm{KG}}(M_2)\to 
\FFF_{\mathrm{KG}}(N)$. In the case where $\und{f}=(f_1,f_2) :\und{M}\to N$ is time-ordered,
i.e.\ $J^+_N(f_1(M_1))\cap f_2(M_2)=\emptyset$, we obtain from \eqref{eqn:toprod}, 
\eqref{eqn:starproduct} and the support properties of $G^\pm_N$ that
\begin{subequations}\label{eqn:timeantitime}
\begin{flalign}
\FFF_{\mathrm{KG}} (\und{f}) = \cdot_N \circ \exp\big(\tfrac{i}{2}\,\langle G^+_N,\dd\otimes\dd\rangle\big)\circ \big( \AAA_{\mathrm{KG}}(f_1)\otimes \AAA_{\mathrm{KG}}(f_2)\big)\quad\text{($\und{f}$ time-ordered)}\quad.
\end{flalign}
In the case where $\und{f}=(f_1,f_2) : \und{M}\to N$ is anti-time-ordered,
i.e.\ $J^+_N(f_2(M_2))\cap f_1(M_1)=\emptyset$, we obtain
\begin{flalign}
\FFF_{\mathrm{KG}} (\und{f}) = \cdot_N \circ \exp\big(\tfrac{i}{2}\,\langle G^-_N,\dd\otimes\dd\rangle\big)\circ \big( \AAA_{\mathrm{KG}}(f_1)\otimes \AAA_{\mathrm{KG}}(f_2)\big)\quad\text{($\und{f}$ anti-time-ordered)}\quad.
\end{flalign}
\end{subequations}
Using again the support properties of $G^\pm_N$, we observe that the
the two cases in \eqref{eqn:timeantitime} can be combined
into a single formula
\begin{flalign}
\FFF_{\mathrm{KG}} (\und{f}) = \cdot_N \circ \exp\big(i\,\langle G^{\mathrm{D}}_N,\dd\otimes\dd\rangle\big)\circ \big( \AAA_{\mathrm{KG}}(f_1)\otimes \AAA_{\mathrm{KG}}(f_2)\big)\quad\text{($\und{f}$ time-orderable)}\quad,
\end{flalign}
where $G^{\mathrm{D}}_N := \tfrac{1}{2} (G^+_N + G^-_N)$ is the so-called Dirac propagator,
that is valid for every time-orderable tuple $(f_1,f_2)$. In perturbative algebraic
quantum field theory (see e.g.\ \cite{FredenhagenRejzner, Rejzner}), the products $\cdot_{\mathcal{T}_N}^{} := 
\cdot_N \circ \exp\big(i\,\langle G^{\mathrm{D}}_N,\dd\otimes\dd\rangle\big)$ are called
time-ordered products. 
\sk

Our observations in this subsection can thus be summarized as follows:
The prefactorization algebra $\FFF_{\mathrm{KG}}\in\toPFA^{\addc}$ corresponding
to the free Klein-Gordon theory $\AAA_{\mathrm{KG}}\in\AQFT^{\addc}$ 
encodes the usual time-ordered products obtained by the Dirac propagator. This 
agrees with the observations in \cite{GwilliamRejzner}.



\section*{Acknowledgments}
We would like to thank the anonymous referees
for useful comments that helped us to improve this manuscript.
The work of M.B.\ is supported by a research grant funded by 
the Deutsche Forschungsgemeinschaft (DFG, Germany). 
M.P.\ is supported by a PhD scholarship of the Royal Society (UK).
A.S.\ gratefully acknowledges the financial support of 
the Royal Society (UK) through a Royal Society University 
Research Fellowship, a Research Grant and an Enhancement Award.


\end{document}